\documentclass[hidelinks,onefignum,onetabnum]{siamart251104}

\usepackage{braket,amsfonts,amsmath,amssymb}
\usepackage{array}
\usepackage[caption=false]{subfig}
\usepackage{algorithmic}
\usepackage{graphicx,epstopdf}
\usepackage{multirow}
\usepackage{cancel}
\usepackage{upgreek} 
\usepackage{mathrsfs} 
\usepackage{amsopn}
\usepackage{ulem}

\newsiamthm{claim}{Claim}
\newsiamremark{remark}{Remark}
\newsiamremark{hypothesis}{Hypothesis}
\crefname{hypothesis}{Hypothesis}{Hypotheses}
\newsiamthm{assumption}{Assumption}
\crefname{assumption}{Assumption}{Assumptions}
\crefname{ALC@unique}{Line}{Lines}

\newcommand{\cont}{\mathscr{C}}
\newcommand{\Cf}{\mathfrak{C}}

\newcommand{\R}{\mathbb{R}}
\newcommand{\C}{\mathbb{C}}
\newcommand{\Z}{\mathbb{Z}}
\newcommand{\N}{\mathbb{N}}

\newcommand\dd{{~\rm d}}
\newcommand\Hm{\mathcal{H}}
\newcommand\E{\mathbb{E}}
\newcommand\Var{\mathbb{V}}

\newcommand\rr{\mathbf{r}}

\newcommand{\D}{\mathcal{D}}

\newcommand{\uu}{\pmb{u}}
\newcommand{\F}{_{\rm{F}}}
\newcommand{\tr}{{\rm{Tr}}}

\newcommand{\dm}{P}
\newcommand{\rP}{\rho_\dm}

\newcommand{\dof}{n}

\newcommand{\Oc}{\mathcal{O}}
\newcommand{\PO}{M}
\newcommand{\PE}{p_{\PO}}

\newcommand{\ct}{{\rm{h}}}
\newcommand{\T}{{\rm{T}}}

\newcommand{\mS}{\mathbb{S}}

\newcommand{\pmp}{\upphi}

\newcommand{\ml}{\widehat{\mathcal{Q}}}

\newcommand{\mldm}{\widehat{\upphi}}
\newcommand{\mlc}{\mathcal{C}}
\newcommand{\mlv}{\mathcal{V}}
\newcommand{\im}{\mathrm{i}}

\newcommand{\f}{f}
\newcommand{\ff}{f^{\frac{1}{2}}}

\newcommand{\U}{[-1,1]}

\newcommand{\Ec}{E_{\rm c}}

\newcommand{\bR}{{\bar{R}}}
\newcommand{\Ecl}[1]{E_{\rm c}^{(#1)}}
\newcommand{\G}{\pmb{G}}
\newcommand{\LL}{\mathbb{L}}
\newcommand{\lmlsl}[2]{\mldm_{\chi^{(#1)}} ^{(#2)}}
\newcommand{\lmls}[1]{\mldm_{\chi} ^{(#1)}}
\newcommand{\chil}[1]{\chi^{(#1)}}
\newcommand{\ql}[1]{\ml^{(#1)}}
\newcommand{\nl}[1]{n^{(#1)}}
\newcommand{\Hl}{H^{(\ell)}}
\newcommand{\Hlm}{H^{(\ell-1)}}
\newcommand{\re}{{\rm Re}}
\newcommand{\PM}{\mathcal{P}_M}



\title{Stochastic Density Functional Theory Through the Lens of Multilevel Monte Carlo Method
}

\author{Xue Quan\thanks{Academy of Mathematics and Systems Science, Chinese Academy of Sciences, Beijing 100190, China (\email{xuequan@amss.ac.cn}).}
\and Huajie Chen \thanks{Corresponding. School of Mathematical Sciences, Beijing Normal University, Beijing 100875, China (\email{chen.huajie@bnu.edu.cn}).}}

\headers{SDFT through the Lens of Multilevel Monte Carlo}{Xue Quan and Huajie Chen}

\begin{document}
\maketitle

\begin{abstract}
The stochastic density functional theory (sDFT) has exhibited advantages over the standard Kohn--Sham DFT method and has become an attractive approach for large-scale electronic structure calculations.
The sDFT method avoids the expensive matrix diagonalization by introducing a set of random orbitals and approximating the density matrix via Chebyshev expansion of a matrix-valued function. 
In this work, we study the sDFT with a plane-wave discretization, and discuss variance reduction algorithms in the framework of multilevel Monte Carlo (MLMC) methods. 
In particular, we show that the density matrix evaluation in sDFT can be decomposed into many levels by increasing the plane-wave cutoffs or the Chebyshev polynomial orders. This decomposition renders the computational cost independent of the discretization size or temperature.
To demonstrate the efficiency of the algorithm, we provide a rigorous analysis of the statistical errors and present numerical experiments on some material systems.
\end{abstract}

\begin{keywords}
stochastic density functional theory; 
plane-wave discretization; 
Chebyshev polynomial expansion; 
multilevel Monte Carlo method.
\end{keywords}

\begin{MSCcodes}
 65N25, 65C05, 68W20
\end{MSCcodes}

\section{Introduction}
\label{sec:introduction}

Kohn--Sham density functional theory (DFT) \cite{Hohenberg64,Kohn65} is one of the most widely used electronic structure models in molecular simulations and materials science, which achieves an excellent balance between accuracy and computational cost. 
The cost of standard Kohn--Sham DFT scales cubically with system size because of repeated diagonalization of the Kohn--Sham Hamiltonian. 
To overcome this computational bottleneck, much effort has been devoted to the development of so-called linear-scaling DFT methods \cite{Baer1997,Goedecker1999,Kohn1996,Li1993,Mauri1993,ONETEP2020,Shimojo2008,Yang1991}.
However, most methods rely on Kohn's ``nearsightedness" principle \cite{Kohn1996} and are typically {only} applicable to non-metallic systems.

An alternative approach termed stochastic DFT (sDFT) \cite{Baer13,Fabian2019} has been introduced to accelerate the DFT calculations. 
The key idea is to approximate the density matrix by applying the Chebyshev polynomial expansion of the projection operator to a set of stochastic orbitals.
In particular, one approximates the density matrix by
$$
f(H)\approx p^2_M(H) = \sum_{i=1}^n \big(p_M(H) \pmb{\mathfrak{e}}_i\big)\big(p_M(H) \pmb{\mathfrak{e}}_i\big)^{\rm h}
\approx \sum_{\chi\in\mS} \big(p_M(H) \chi\big)\big(p_M(H) \chi\big)^{\rm h},
$$
where $H\in\C^{n\times n}$ denotes the discrete Hamiltonian matrix within certain discretization, $p_M$ is an $M$-th order Chebyshev approximation of $\ff$, $\{\pmb{\mathfrak{e}}_i\}$ is a complete basis set, and $\mS$ is a set of stochastic orbitals, where each $\chi\in\mS$ satisfies $\chi\in\C^n$ and $\E[\chi\chi^\ct]=I$.
The sDFT method circumvents the expensive orthogonalization procedure in standard Kohn–Sham DFT, reducing the computation to recursive
matrix-vector {products} for evaluating $p_M(H)\chi$.
{While nearsightedness provides the physical basis for many linear-scaling electronic-structure methods, the sDFT approach does not rely on an explicit locality or low-rank assumption on the density matrix. 
Instead, it uses stochastic orbitals to form unbiased estimators of density-matrix quantities, whose resulting accuracy is controlled by the estimator variance and the number of stochastic orbitals.
}

Since the stochastic sampling of the random orbitals introduces significant statistical noise, the efficiency of sDFT critically depends on effective variance reduction techniques.
Many strategies have been developed in recent years, including the embedded-fragmentation scheme \cite{Chen2019oefsdft,Neuhauser14}, the energy-window sDFT \cite{EWChenBaer2019}, and the tempering sDFT \cite{Nguyen2021}. 
{Another related approach is mixed stochastic-deterministic DFT, in which a selected set of deterministic Kohn--Sham orbitals is combined with stochastic orbitals to reduce the stochastic variance \cite{LiuChen2022,White20}.}
However, to our knowledge, there is no mathematical theory for variance reduction techniques in sDFT.
Recently, we have seen a first mathematical analysis for sDFT \cite{CaiLindsey2025}, which establishes nearly-optimal scaling in both the thermodynamic and complete basis set limits, by exploiting a discretization-independent stochastic estimator, a stochastic mirror descent reformulation of the self-consistent field iteration, and a pole expansion calculation.

The purpose of this work is to develop a variance reduction framework for sDFT and provide a rigorous analysis for the associated statistical errors.
We employ the multilevel Monte Carlo (MLMC) method, a powerful noise reduction strategy that uses a hierarchical structure: most samples are taken on computationally inexpensive coarse levels, while only a small number of samples are taken on the more expensive fine levels.
The MLMC method was first introduced in \cite{Giles2008,Giles2015}, and has since found widespread applications in numerous fields \cite{Anderson2012,Barth2011,Ullmann2015}, particularly in matrix trace estimation problems \cite{Frommer2022,Hallman2022} that are closely related to this work.  
We investigate the use of MLMC within sDFT and provide insights for existing variance reduction strategies. 
In particular, we study the sDFT with a plane-wave discretization and develop two distinct MLMC approaches for evaluating the Kohn–Sham map. 
These two approaches leverage hierarchical constructions in plane-wave cutoffs and Chebyshev expansion orders, making the cost independent of the discretization size and the temperature, respectively.

\vskip 0.1cm

\noindent
{\bf Outline.}
The rest of this work is organized as follows. 
In \Cref{sec:ksDFT}, we briefly review the Kohn--Sham DFT with a plane-wave discretization, and present a density matrix based formulation for the problem. 
In \Cref{sec:model}, we describe the sDFT approach and provide some basic variance analysis. 
In \Cref{sec:MLMC_sDFT}, we present a general MLMC framework to evaluate the Kohn--Sham map in sDFT, and investigate two multilevel decomposition strategies by exploiting the hierarchy of plane-wave discretization and polynomial expansion, respectively. 
In \Cref{sec:numerics}, we present numerical simulations on some material systems to demonstrate the efficiency of MLMC methods. 
In \Cref{sec:conclusion}, we provide some concluding remarks.

\vskip 0.1cm

\noindent
{\bf Notations.}
For $a\in\C$, we will denote the complex conjugate by $\overline{a}$.
For a matrix $A \in \C^{m \times n}$, we will denote its conjugate transpose by $A^\ct$.
For a random variable $X$, we will use $\E[X]$ and $\Var[X]$ to denote the expectation and variance, respectively. 
With slight abuse of notation, we will use $|\cdot|$ to denote the Euclidean norm of a vector, the volume of a bounded domain in $\R^d$, and the cardinality of a finite set. 
Throughout this paper, we will use the symbol $C$ to denote a generic positive constant that may change from one line to the next. The dependence of $C$ on model parameters will normally be clear from the context or stated explicitly.

\section{Kohn--Sham density functional theory}
\label{sec:ksDFT}
\setcounter{equation}{0}

In this section, we review the Kohn--Sham DFT with a plane-wave discretization.
In particular, the discrete Kohn--Sham equations are formulated as a fixed-point problem with respect to the density matrix, which provides the foundation of the sDFT method.

\subsection{Kohn--Sham equation}
\label{sec:kohn-sham}

In this work, we shall focus on systems with periodic boundary condition.
Let $d$ be the dimension, and $\Omega=A[0,1)^d$ be the domain with $A\in \R^{d\times d}$ a non-singular matrix. 
The associated Bravais lattice and reciprocal lattice are denoted by $\LL:=A\Z^d$ and $\LL^*:=2\pi A^{-\T}\Z^d$ respectively. 
{For $1\leq p\leq \infty$, we define 
$$
L^p_{\rm per}(\Omega) := \big\{ u\in L^p_{\rm loc}(\R^d):~u(\rr)=u(\rr+\tau)~{\rm for~any}~\tau\in\LL \big\} .
$$ 

We consider an $N$-electron system in this periodic cell and neglect spin variables.
Let $\rho:\Omega\rightarrow\R$ be the electron density satisfying $\int_\Omega\rho(\rr)\dd \rr =N$, the Kohn--Sham Hamiltonian associated with $\rho$ is given by
\begin{equation}
\label{KSham}
\Hm^{\rm KS}[\rho] : = -\frac{1}{2}\Delta + v_{\rm ps} + v_{\rm H}(\rho) + v_{\rm xc}(\rho) ,
\end{equation}
where the four terms denote, respectively, the kinetic operator, the external pseudopotential generated by the nuclei and core electrons, the Hartree potential describing the mean-field electrostatic interaction among electrons, and the exchange-correlation potential.}
{The pseudopotential is decomposed into a local part $v_{\rm loc}:\Omega\to\mathbb{R}$ and a finite-rank nonlocal part $v_{\rm nl}$
\begin{align*}
\big(v_{\rm nl}\phi\big)(\rr)=\sum_{j=1}^{m_{\rm nl}}(\xi_j, \phi)\xi_j(\rr)\qquad \forall ~\phi\in L^2_{\rm per}(\Omega) ,
\end{align*}
where $m_{\rm nl}\in \mathbb{N}_+$ denotes the rank of the nonlocal pseudopotential and $\xi_j\in L^2_{\rm per}(\Omega)$ are the corresponding projector functions. 
The Hartree potential can be written as 
$v_{\rm H}(\rho)(\rr):=\int_{\Omega}K_{\rm per}(\rr-\rr')(\rho(\rr')-N/|\Omega|)\dd\rr'$, with $K_{\rm per}$ denoting the periodic Coulomb kernel and $N/|\Omega|$ being the average electron density \cite{martin04}.}

{Throughout this paper, we assume that $v_{\rm loc}$, the projectors $\xi_j$, and the exchange-correlation potential $v_{\rm xc}$ are sufficiently smooth. 
In particular, their Fourier coefficients are assumed to decay exponentially fast.}
{We emphasize that the methodology and analysis developed in this work do not rely on a specific exchange-correlation functional. 
Numerical experiments for both LDA and PBE exchange-correlation models are presented in Section \ref{sec:numerics}.}

{Under the above conditions, $\Hm^{\rm KS}[\rho]$ in \cref{KSham} is a bounded self-adjoint perturbation of the periodic Laplacian.
Therefore, it is a self-adjoint operator on $L^2_{\rm per}(\Omega)$, bounded from below, and has compact resolvent.} 
The ground state of the system can {then} be obtained by solving the following Kohn--Sham equations
\begin{equation}
\label{KSequation}
\Hm^{\rm KS}[\rho]~\psi_i = \lambda_i\psi_i \qquad i=1,2,\cdots ,
\end{equation}
where $\int_{\Omega}\psi_i^* (\rr)\psi_j(\rr)\,{\rm{d}}\rr=\delta_{ij}$, and the electron density is given by the eigenpairs $\rho(\rr) = {\sum_{i=1}^\infty}f(\lambda_i)|\psi_i(\rr)|^2$ with some smearing function $f:\R\rightarrow [0,1]$.
{Here, the eigenvalues are listed in nondecreasing order and repeated according to their multiplicities, and the summation is taken over all eigenpairs with eigenvalues counted according to multiplicity.}

{Several smearing schemes are commonly used in electronic structure calculations, including Fermi--Dirac \cite{Dirac1926}, Gaussian \cite{Vita1991}, and Methfessel--Paxton smearing \cite{Methfessel1989}. 
In this work, we employ Fermi--Dirac smearing, which is consistent with the finite-temperature formulation and the associated density-matrix representation used in our analysis.} 
Specifically, the Fermi--Dirac distribution is defined as
\begin{equation}
\label{fermiDirac}
f(\lambda) = f_{\mu}(\lambda)
:= \frac{1}{1+\exp\big(\beta (\lambda-\mu)\big)} 
\qquad\forall~\lambda\in\R,
\end{equation}
where $\beta>0$ determines the smearing width, and $\mu\in\R$ is the Fermi level such that the condition ${\sum_{i=1}^\infty} f_{\mu}(\lambda_i)=N$ is satisfied.
Note that when $\beta\rightarrow\infty$, $f_{\mu}$ becomes a Heaviside function $1_{(-\infty,\mu)}$, in which case only the $N$ lowest eigenvalues of $H^{\rm KS}[\rho]$ need to be solved.

Note that the Kohn--Sham equation \cref{KSequation} is a nonlinear eigenvalue problem, as the operator $\Hm^{\rm KS}[\rho]$ depends on the eigenpairs $\big\{(\lambda_i,\psi_i)\big\}_{i\in\N_+}$ to be solved.
In practice, a self-consistent field (SCF) iteration is usually used to solve the nonlinear problem.
At each iteration step, the Hamiltonian is constructed with an input electron density, then an output electron density is obtained by solving the linearized eigenvalue problem and some charge-mixing technique \cite{Herbst_2021,martin04,Woods_2019}.
The iteration procedure is terminated when the difference between two consecutive densities is negligible.

\subsection{Plane-wave discretizations}
\label{sec:plane-wave}

In this work, we consider periodic boundary conditions for the Kohn--Sham equations and employ a plane-wave method for numerical discretization.

We denote by $e_{\G}({\rr})=|\Omega|^{-1/2} e^{\im\G\cdot\rr}$ the plane-wave basis with wavevector $\G\in\LL^*$. 
The family $\{e_{\G}\}_{\G\in\LL^*}$ forms an orthonormal basis set of $L_{\rm per}^2(\Omega)$.
Given an energy cut-off $\Ec>0$, we define the following finite dimensional subspace
\begin{equation}
\label{def:spaceE}
X_{\Ec}(\Omega) := \bigg\{u\in L_{\rm per}^2(\Omega) :~ 
u(\rr) = \sum_{\G\in\LL^*,~|\G|^2\leq 2\Ec} c_{\G}e_{\G}({\rr}) \bigg\} ,
\end{equation}
the dimension of which is denoted by $n:={\rm dim}\big(X_{\Ec}(\Omega)\big)=\big|\{\G\in\LL^*:~|\G|^2\leq 2\Ec\}\big|$.

With the plane-wave discretization, the eigenfunctions $\psi_i(\rr)$ are approximated by linear combinations of the basis functions $e_{\G}({\bf r})$ in $X_{\Ec}(\Omega)$, whose coefficients are denoted by $\uu_i:=\big\{\uu_{i,\G}\big\}_{|\G|^2\leq 2\Ec}\in\C^n$.
Then \cref{KSequation} is approximated by the following nonlinear algebraic eigenvalue problem
\begin{align}
\label{ks:discrete}
H[\rho_{\uu}] \uu_i = \varepsilon_i \uu_i \qquad i= 1,\cdots,n ,
\end{align}
where the electron density $\rho_{\uu}$ is associated to the approximate eigenfunction as
\begin{align}
\label{rho:discrete}
\rho_{\uu}(\rr) 
& = \sum_{i=1}^n \f(\varepsilon_i) \Bigg| \sum_{|\G|^2\leq 2\Ec} \uu_{i,\G}e_{\G}(\rr) \Bigg|^2
\\\nonumber
& = \sum_{|\G|^2\leq2\Ec}\sum_{|\G'|^2\leq2\Ec} \Big(\sum_{i=1}^n \f(\varepsilon_i)
{\overline{\uu_{i,\G}}}\uu_{i,\G'}\Big) \overline{e_{\G}(\rr)} e_{\G'}(\rr),
\end{align}
and the matrix elements of the discrete Hamiltonian $H[\rho_{\uu}]\in\C^{n\times n}$ are given by
\begin{align}
\label{HKS:discrete}
H[\rho_{\uu}]_{\G\G'} & = \int_{\Omega} \overline{e_{\G}(\rr)} \Hm^{\rm KS}[\rho_{\uu}] e_{\G'}(\rr) \dd\rr .
\end{align}
The {\it a priori} error estimates for the approximations of eigenvalue problems can be found in \cite{osborn75} for linear problems and in \cite{Cances3,cances12,chen13} for nonlinear Kohn--Sham equations.

\subsection{Kohn--Sham map of density matrix}
\label{sec:ksmap}

In the rest of this section, we shall rewrite the discrete Kohn--Sham equations \cref{ks:discrete} as a fixed-point problem with respect to the density matrix.
The density matrix formalism is fundamental for introducing the stochastic DFT methods.

Let $\D:=\big\{\dm\in\C^{\dof\times \dof}:~\dm=\dm^\ct\big\}$ be the space for Hermite matrices, endowed with the Frobenius inner product $(A,B)_{\F} :=\tr(A^\ct B)$.
The admissible set of discrete density matrices for an $N$-electron system is given by
\begin{equation}
\label{manifolds}
\D_N:=\Big\{\dm\in\D:~{0\preceq\dm\preceq 1},~\tr(\dm)=N\Big\}.
\end{equation}
{For any $P\in\D_N$, since $\dm$ is Hermitian, it admits a spectral decomposition.
More precisely, there exist $f_i\in [0,1]$ satisfying $\sum_{i=1}^n f_i=N$ and $\pmb{v}_i\in\C^n ~(i=1,\cdots,n)$ satisfying $\pmb{v}_i^{\rm h}\pmb{v}_j = \delta_{ij}$,
such that $P$ can be written in the form of
\begin{equation}
\label{dm:orbitals}
P = \sum_{i=1}^n f_i \pmb{v}_i\pmb{v}_i^{\rm h} .
\end{equation}
Given a density matrix $\dm\in\D_N$, we observe from \cref{rho:discrete} and \cref{dm:orbitals} that that the associated electron density to $\dm$ can be expressed by
\begin{equation}
\label{rho:P}
\rP(\rr) := \sum_{|\G|^2\leq2\Ec}\sum_{|\G'|^2\leq2\Ec} P_{\G\G'} \overline{e_{\G}(\rr)} e_{\G'}(\rr) .
\end{equation}
Although the choice of the orthonormal vectors $\pmb{v}_i$ is not unique, it does not affect $\dm$, and hence does not affect the associated electron density.}

With an ``input" density matrix $P\in\D_N$, we write the discrete Kohn--Sham Hamiltonian with a little abuse of notation
\begin{equation}
\label{HKS:dm}
H[\dm] := H[\rho_{\dm}] .
\end{equation}
Denoting the eigenpairs of $H[\dm]$ by $\big\{(\varepsilon_i,\uu_i)\big\}_{1\leq i\leq n}$, we have an ``output" density matrix $\sum_{i=1}^n \f(\varepsilon_i)\uu_i\uu_i^\ct$ by using \cref{dm:orbitals}.
We describe the map from an input density matrix to an output density matrix by the Kohn--Sham map $\Phi:\D_N\to\D_N$, with
\begin{equation}
\label{KSmap}
\Phi(\dm):= \f\big(H[\dm]\big) .
\end{equation}
Then the ground state density matrix ${\dm_*}$ is a fixed-point of the map $\Phi$, that is, it solves the the following equation
\begin{equation}
\label{dm:fix}
P_* = \Phi\big(P_*\big) .
\end{equation}
The simplest method for seeking the fixed point solution of \cref{dm:fix} is to use the iteration $\dm_{k+1} = \Phi(\dm_k)$, until $\Vert\dm_{k+1}-\dm_{k}\Vert_{\F}$ is sufficiently small. 
This iteration (together with some mixing scheme) is equivalent to performing SCF iterations to solve the Kohn--Sham equations.

The primary computational bottleneck in Kohn--Sham DFT is the evaluation of the Kohn--Sham map $\Phi(\dm)$. 
The direct evaluation requires a diagonalization of the Hamiltonian $H[\dm]$, whose cost scales cubically as $\Oc(n^3)$.
In the zero temperature limit, i.e., when $\beta\rightarrow\infty$ in \cref{fermiDirac}, only the lowest $N$ eigenpairs are required, and the cost also scales cubically as $\Oc(nN^2)$ due to the orthogonalization of $N$ eigenvectors.

\section{Stochastic DFT}
\label{sec:model}
\setcounter{equation}{0}

The stochastic DFT method cures the cubic scaling by employing two key strategies to reduce the cost of evaluating \cref{KSmap}: (i) random orbital approximation, and (ii) Chebyshev polynomial expansion.
In this section, we provide details of these approximations and present a basic error analysis.

Without loss of generality, we assume in the following that the spectrum of $H[\dm]$ lies in $[-1,1]$. 
This assumption is not essential as one can always apply a shift and rescaling to the discrete Hamiltonian by $a\big(H[\dm]-bI\big)$ (and also to the smearing function by $\f(\frac{\cdot}{a} +b)$) with appropriate constants $a$ and $b$.

\subsection{Random orbital approximation}
\label{sec:randorb}

We first introduce the notion of random orbital.
We call $\chi\in\C^\dof$ a random orbital if the entries of $\chi$ are independent random variables and satisfy the condition
\begin{equation}
\label{ass:randorb}
\E\big[\chi\chi^\ct\big] = I 
\end{equation}
with $I\in\R^{n\times n}$ the identity matrix.
Standard choices for $\chi$ satisfying \cref{ass:randorb} could be
(a) $\chi_j\in\{-1,1,-\im,\im\}$ with equal probability $\frac{1}{4}$; (b) $\chi_j=\exp (\im\theta)$ with $\theta$ uniformly distributed in $[0,2\pi]$; and (c) $\chi_j$ follows the normal distribution.

Let $\dm\in\D_N$ and $\{\pmb{\mathfrak{e}}_i\}$ be the standard basis set of $\R^n$, with $\pmb{\mathfrak{e}}_i$ having 1 at the $i$-th entry and 0 otherwise. 
By using \cref{KSmap} and \cref{ass:randorb}, we can write the Kohn--Sham map of the density matrix as an expectation in the following form
\begin{align}
\nonumber
& \Phi(\dm) = \f\big(H[P]\big) = \sum_{i=1}^n \f\big(H[P]\big) \pmb{\mathfrak{e}}_i \pmb{\mathfrak{e}}_i^\ct
\\[1ex] 
\label{dm:rand:1}
=~ & \E\Big(\f\big(H[P]\big)\chi\chi^\ct \Big)
\\[1ex] 
\label{dm:rand:2}
=~ & \E\Big[\pmp\big(\dm,\chi,\ff\big)\Big]
\quad{\rm with}\quad
\pmp(\dm,\chi,\ff) := \Big(\ff(H[\dm])\chi\Big)\Big(\ff(H[\dm])\chi\Big)^\ct .
\end{align}
{
Although \eqref{dm:rand:1} follows directly from 
$\mathbb E(\chi\chi^{\mathrm h})=I$, we keep the complete-basis representation to emphasize its stochastic interpretation: the deterministic summation over $\{\pmb{\mathfrak e}_i\}$ is replaced by an expectation over random orbitals.
This avoids applying $f(H[P])$ to all basis vectors and allows a reduction of the computational cost through stochastic approximation.
}
Since the Fermi--Dirac distribution satisfies $\f(x)\geq 0$, the matrix-valued function $\ff\big(H[P]\big)$ is well-defined.
Note that while equations \cref{dm:rand:1} and \cref{dm:rand:2} are mathematically equivalent, the form \cref{dm:rand:2} is preferred in sDFT calculations because its sample complexity is less sensitive to the discretization size \cite{CaiLindsey2025}. 

We can then approximate \cref{dm:rand:2} by employing a stochastic sampling approach.
Let $\mS$ be a set of independent random orbitals satisfying \cref{ass:randorb}, obtained through some practical sampling techniques.
Then the expectation in \cref{dm:rand:2} can be approximated by 
\begin{equation}
\label{DMapprox}
\Phi_{\mS}(\dm) := \frac{1}{|\mS|}\sum_{{\chi}\in\mS}\pmp\big(\dm,\chi,f^{\frac{1}{2}}\big).
\end{equation}
The accuracy of many quantities of interest is determined by the stochastic error of the density matrix approximation $\Phi_{\mS}(P)$. 
Markov's inequality \cite{markov} indicates that
\begin{align}
\label{estimate:dm:markov_ineq}
\mathbb{P} \bigg( \big\|\Phi_{\mS}(\dm)- \Phi(\dm) \big\|_{\rm F}<\Cf \sqrt{\E\Big[\big\|\Phi_{\mS}(\dm)- \Phi(\dm) \big\|_{\rm F}^2\Big]} \bigg) \geq 1- \Cf^{-2} 
\quad{\rm for}~ \Cf>0.
\end{align}
Using the facts that the orbitals $\chi\in\mS$ are independent and $\pmp(P,\chi,\ff)$ provides an unbiased estimate of $\Phi(\dm)$, we can express the second moment term in \cref{estimate:dm:markov_ineq} as
\begin{align}
\label{second_mom}
\E\Big[\big\|\Phi_{\mS}(\dm)- \Phi(\dm) \big\|_{\rm F}^2\Big]
& = \frac{1}{|\mS|}\Var\Big[ \pmp\big(P,\chi,\ff\big) \Big]
\qquad\qquad{\rm with}
\\[1ex]
\label{var:pmp}
\Var\Big[ \pmp\big(P,\chi,\ff\big) \Big] 
& := \E\left[ \Big\|\pmp\big(P,\chi,\ff\big)-\E\Big[\pmp\big(P,\chi,\ff\big)\Big]\Big\|_{\rm F}^2 \right] .
\end{align}
We observe from \cref{estimate:dm:markov_ineq} and \cref{second_mom} that the accuracy of the stochastic approximation depends on (i) the number of random orbitals in $\mS$; and (ii) the variance of the random matrix.
Therefore, the variance \cref{var:pmp} is the key to the efficiency of the sDFT calculations. 
We provide an estimate for this variance in the following lemma, whose proof is given in \Cref{proof:lemma:var:sdft}.

\begin{lemma}
\label{lemma:var:sdft}
Let $\chi\in\C^n$ be the random orbital satisfying \cref{ass:randorb}.
Then
\begin{align}
\label{var1}
\nonumber
\Var\big[\pmp(\dm,\chi,\ff)\big]
& = \sum_{\substack{i,j=1\\i\neq j}}^\dof \big(\Phi(\dm)\big)_{ii}\big(\Phi(\dm)\big)_{jj}
+ \sum_{\substack{i,j=1\\i\neq j}}^\dof \big(\Phi(\dm)\big)_{ij}^2\E\big[\overline{\chi}_i^2\big]\E\big[\chi_j^2\big]
\\[1ex]
& \quad + \sum_{i=1}^\dof\big(\Phi(\dm)\big)_{ii}^2(\E\big[|{\chi}_i|^4\big]-1) .
\end{align}
Moreover, if $\chi$ additionally satisfies
\begin{align}
\label{chi:ass:2}
\E\big[\chi_i^2\big] = 0 
\quad{\rm and}\quad 
\E\big[|\chi_i|^4\big] = 1
\qquad{\rm for~any~} 1\leq i\leq n ,
\end{align}
then
\begin{equation}
\label{var2}
\Var\big[\pmp(\dm,\chi,f^{\frac{1}{2}})\big] =\sum_{\substack{i,j=1\\i\neq j}}^\dof\big(\Phi(\dm)\big)_{ii}\big(\Phi(\dm)\big)_{jj} 
= \big(\tr\left(\Phi(\dm)\right)\big)^2 - \sum_{i=1}^\dof\big(\Phi(\dm)\big)_{ii}^2 .
\end{equation}
\end{lemma}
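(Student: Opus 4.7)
The plan is to set $A := \ff(H[\dm])$, which is Hermitian since $H[\dm]$ is Hermitian and $\ff$ is real-valued, and to observe that $A^2 = \f(H[\dm]) = \Phi(\dm)$. Writing $A^\ct = A$, the entries of $\pmp(\dm,\chi,\ff)$ become $\pmp_{k\ell} = \sum_{p,q} A_{kp}A_{q\ell}\chi_p\overline{\chi_q}$, and \cref{ass:randorb} gives $\E[\pmp_{k\ell}] = (A^2)_{k\ell} = \bigl(\Phi(\dm)\bigr)_{k\ell}$ at once. By the second equality in \cref{var:pmp}, the variance therefore reduces to computing $\E\bigl[\Vert\pmp\Vert_{\rm F}^2\bigr] - \Vert\Phi(\dm)\Vert_{\rm F}^2$.

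The central computation is to expand $\Vert\pmp\Vert_{\rm F}^2 = \sum_{k,\ell}|\pmp_{k\ell}|^2$ and collapse the inner sums over $k,\ell$ using $\sum_k A_{p'k}A_{kp} = (A^2)_{p'p}$. Setting $B := \Phi(\dm) = A^2$, this yields
$$\Vert\pmp\Vert_{\rm F}^2 = \sum_{p,q,p',q'} B_{p'p}B_{qq'}\,\chi_p\overline{\chi_q}\,\overline{\chi_{p'}}\chi_{q'},$$
and the problem reduces to computing the fourth moment $\E[\chi_p\overline{\chi_q}\overline{\chi_{p'}}\chi_{q'}]$. Independence of the entries together with $\E[\chi_i]=0$ (forced by the off-diagonal case of \cref{ass:randorb}) ensures that this expectation vanishes unless the four indices pair up, leaving only the three pairings $\{p=q,\,p'=q'\}$, $\{p=p',\,q=q'\}$, $\{p=q',\,q=p'\}$ together with the fully-diagonal case $p=q=p'=q'$.

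I would then enumerate these four cases with care, using the Hermitian identity $B_{p'p}=\overline{B_{pp'}}$ on the first pairing and the relabeling $(i,j)=(q,p)$ on the third, to obtain
\begin{align*}
\E\bigl[\Vert\pmp\Vert_{\rm F}^2\bigr] &= \sum_{p\neq p'}|B_{pp'}|^2 + \sum_{p\neq q}B_{pp}B_{qq}\\
&\quad + \sum_{i\neq j}(B_{ij})^2\,\E[\chi_j^2]\,\E[\overline{\chi_i}^2] + \sum_p B_{pp}^2\,\E[|\chi_p|^4].
\end{align*}
Subtracting $\Vert B\Vert_{\rm F}^2 = \sum_i B_{ii}^2 + \sum_{i\neq j}|B_{ij}|^2$ then cancels the off-diagonal $|B_{pp'}|^2$ contribution and merges the two diagonal terms into $B_{pp}^2(\E[|\chi_p|^4]-1)$, producing exactly \cref{var1}. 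Inserting the stronger condition \cref{chi:ass:2} annihilates the second and third terms of \cref{var1}, and the remaining sum rearranges as $\bigl(\sum_i B_{ii}\bigr)^2 - \sum_i B_{ii}^2 = (\tr\Phi(\dm))^2 - \sum_i (\Phi(\dm))_{ii}^2$, giving \cref{var2}.

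The main obstacle is not analytic but combinatorial: one must separate the three ``two-pair'' cases from the all-equal diagonal case to avoid double counting, and must track which pairing produces $|B_{ij}|^2$ (from a conjugated/unconjugated match) versus $(B_{ij})^2$ (from the pairing that couples the two unconjugated factors together and the two conjugated factors together). The clean cancellation between $\sum_{p\neq p'}|B_{pp'}|^2$ and the off-diagonal part of $\Vert B\Vert_{\rm F}^2$ serves as the natural consistency check that the bookkeeping has been done correctly.
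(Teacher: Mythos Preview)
Your proposal is correct and follows essentially the same route as the paper's proof: both set $A=\ff(H[\dm])$, $B=A^2=\Phi(\dm)$, expand the fourth-order moment $\E[\overline{\chi}_k\chi_\ell\chi_s\overline{\chi}_t]$ using independence, and identify the three surviving pairings plus the fully diagonal case. The only cosmetic difference is the order of summation: you collapse the outer Frobenius indices $k,\ell$ into $B$'s \emph{before} the pairing analysis, whereas the paper computes $\Var[(B_\chi)_{ij}]$ entrywise in terms of $A$ and only afterwards sums over $i,j$ via the identities $\sum_{i,j}|A_{ik}|^2|A_{j\ell}|^2=B_{kk}B_{\ell\ell}$ and $\sum_{i,j}\overline{A}_{ik}A_{j\ell}A_{i\ell}\overline{A}_{jk}=B_{k\ell}^2$; your ordering is marginally cleaner but the content is identical.
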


{
\begin{remark}[Fourth-moment condition for random orbitals]
Among the three random orbital examples provided below equation \cref{ass:randorb}, both (a) and (b) satisfy the additional condition \eqref{chi:ass:2}, while (c) does not -- $\chi_i$ follows a normal distribution with the fourth moment $\E\big[|\chi_i|^4\big]=3$.
\end{remark}
}

\begin{remark}
[$L^2$-error of the electron density {under fixed-density scaling}]
\label{remark:err:rho}
Let $\dm\in\D_N$, and let $\rho_{\mS}$ and $\rho$ be the electron densities associated to $\Phi_{\mS}(\dm)$ and $\Phi(\dm)$ respectively.
By applying Markov's inequality \cite{markov}, the $L^2$-error of the electron density can be estimated by
\begin{align}
\label{estimate:rho:markov_ineq}
\mathbb{P} \left(\|\rho_{\mS}-\rho\|_{L^2(\Omega)}<\Cf \sqrt{\E\big[\|\rho_{\mS}-\rho\|^2_{L^2(\Omega)}\big]}\right) \geq 1- \Cf^{-2} 
\quad{\rm for}~ \Cf>0 .
\end{align}
It follows from \cref{rho:P} and the Cauchy–Schwarz inequality that
\begin{align}
\label{err:rho_L2}
\nonumber
\E\Big[\|\rho_{\mS}-\rho\|^2_{L^2(\Omega)}\Big] 
& = \frac{1}{|\Omega|}\sum_{\G} \E\bigg[ \bigg(\sum_{\G'} \Big(\big(\Phi_{\mS}(\dm)\big)_{\G',\G+\G'} -\big(\Phi(\dm)\big)_{\G',\G+\G'}\Big)\bigg)^2 \bigg]
\\[1ex]
& \leq\frac{1}{|\Omega|\cdot|\mS|}\sum_{\G} \bigg( \sum_{\G'}\sqrt{\Var\Big[ \big(\pmp(P,\chi,\ff)\big)_{\G',\G+\G'}\Big]} \bigg)^2 .
\end{align}
Since the variance of the random matrix element exhibits exponential decay (see e.g. \cref{var:proof:1}, \cref{var:proof:2} and \Cref{app:lemma:G_decay} in the appendix)
\begin{equation*}
\Var\Big[ \big(\pmp(P,\chi,\ff)\big)_{\G',\G+\G'}\Big] \leq \big(\Phi(\dm)\big)_{\G'\G'} \big(\Phi(\dm)\big)_{\G+\G',\G+\G'} \leq C e^{-\gamma(|\G'|+|\G+\G'|)} ,
\end{equation*}
we can apply Stechkin's lemma and derive the following estimate for the second moment term in \cref{err:rho_L2}
\begin{equation}
\label{err:rho:L2}
\E\Big[ \big\|\rho_{\mS} - \rho\big\|^2_{L^2(\Omega)} \Big] \leq \frac{C}{|\Omega|\cdot|\mS|}\Var\big[ \pmp(P,\chi,\ff)\big].
\end{equation}

{We assume $|\Omega|\propto N$, corresponding to a fixed-density supercell scaling. This is the standard setting for supercell convergence in crystalline systems, for example when larger systems are generated by repeating a primitive cell, possibly with structural defects (see \Cref{sec:numerics}).}
Since \cref{var2} and $\tr(\Phi(\dm)) = N$ imply that $\Var\big[\pmp(\dm,\chi,f^{\frac{1}{2}})\big] = \Oc(N^2)$, which together with \cref{estimate:rho:markov_ineq}, \cref{err:rho:L2}, and the scaling assumption $|\Omega|\propto N$ yield
$$
\|\rho_{\mS}-\rho\|_{L^2(\Omega)} = \Oc\Big(\sqrt{N/|\mS|}\Big) .
$$
Therefore, to achieve the same accuracy as the system size $N$ increases, the number of random orbitals in $\mS$ should be chosen proportional to $N$.
We mention that a different scaling for $|\mS|$ can be obtained when a different accuracy criterion is considered.
In many physics contexts \cite{Baer13,EWChenBaer2019}, the error averaged per electron is used as the metric, under which the number of required random orbitals can be significantly smaller.

{Note that the fixed-density scaling is not intended to cover all supercell limits. 
For instance, in slab or surface calculations, one may increase only the vacuum region while keeping the physical slab fixed. 
In this case, $|\Omega|$ may grow independently of $N$, and the assumption $|\Omega|\propto N$ no longer holds.
}
\end{remark}

\subsection{Chebyshev polynomial expansion}
\label{sec:chebyshev}

To evaluate the random matrix in \cref{dm:rand:2}, one needs to apply the matrix-valued function $\ff\big(H[P]\big)$ on each random orbital $\chi\in\mS$.
To avoid direct diagonalization of the Hamiltonian $H[P]$, $\ff$ is approximated on $[-1,1]$ by an $M$-th order polynomial expansion:
\begin{equation}
\label{CPexpansion}
\ff(x) \approx \PE(x) := \sum_{m=0}^{\PO}c_mT_m(x)
\qquad x\in [-1,1],
\end{equation}
where $T_m(\cdot)$ is the $m$-th Chebyshev polynomial and $c_m$ is the corresponding coefficient.
Then \cref{dm:rand:2} and \cref{DMapprox} can be further approximated by
\begin{align}
\label{reDMPC}
\Phi_{\PO}(\dm) := p_M^2\big(H[\dm]\big) & = \E\Big[ \pmp\big(\dm,\chi,\PE\big) \Big]
\\[1ex]
\label{reDMPCS}
& \approx \frac{1}{|\mS|}\sum_{{\chi}\in\mS} \pmp\big(\dm,\chi,\PE\big) =: \Phi_{\mS,M}(\dm) .
\end{align}
For each random orbital $\chi\in\mS$, the product $\PE\big(H[\dm]\big)\chi$ can be efficiently computed through recursive matrix-vector multiplications using Chebyshev polynomial recurrence relations. 
In practice, these matrix-vector multiplications are carried out using the Fast Fourier Transform (FFT), which allows the computational cost to scale {quasi}linearly with the matrix size $n$, {that is, $\Oc(n\log n)$}.

Note that $\ff$ is analytic on $\U$ but possesses singularities in the complex plane.
The locations of these singularities depend on both the smearing parameter $\beta$ and the Hamiltonian scaling constant $a$.
Consequently, the Chebyshev polynomial approximation error decays exponentially with the polynomial order $M$ \cite{Trefethen2013}
\begin{equation}
\label{cheberror}
\big\Vert\PE-\ff\big\Vert_{L^\infty(\U)} \leq C \exp\big(-\alpha \PO\big) ,
\end{equation}
where {the decay rate $\alpha\approx {\rm sinh}^{-1}(a\pi/\beta)$ is determined by the smearing parameter $\beta$ and the scaling constant $a$. Thus, a larger spectral range (a smaller $a$), or a lower electronic temperature (a larger $\beta$), decreases $\alpha$ and requires a higher Chebyshev polynomial degree for the same approximation accuracy.}

We now obtain an efficient approximation of the Kohn--Sham map $\Phi(\dm)$ for a given density matrix $\dm\in\D_N$, as $\Phi_{\mS,M}(\dm)$ defined in \cref{reDMPCS}.
This is the essence of sDFT, that avoids the expensive eigensolver by replacing the complete basis set with stochastic orbitals and approximating the smearing function with Chebyshev polynomial expansion. 
The approximation error of $\Phi_{\mS,M}$ is governed by the sampling size $|\mS|$, the variance estimated in \Cref{lemma:var:sdft}, and the polynomial order $M$.

\begin{remark}[Cost scaling]
\label{remark:cost}
The total computational cost for evaluating $\Phi_{\mS,M}(\dm)$ scales as {$\Oc(M|\mS|n\log n)$}.
Since $|\mS| \propto N$ (see \Cref{remark:err:rho}), the complexity becomes {$\Oc(MNn\log n)$}.
This cost depends on the system size $N$, the degrees of freedom $n$ (which is proportional to $N$ in most of the DFT calculations), and the smearing parameter $\beta$.
In the following section, we shall introduce two variance reduction strategies: one that makes the computational cost independent of $n$, and the other that makes it independent of $\beta$.
\end{remark}

\section{Multilevel Monte Carlo methods for sDFT}
\label{sec:MLMC_sDFT}
\setcounter{equation}{0}
 
In this section, we shall first present a general multilevel Monte Carlo (MLMC) framework to modify the standard sDFT scheme for evaluating the Kohn--Sham map in \cref{reDMPCS}, 
\begin{align*}
\Phi_{\mS,M}(\dm) = \frac{1}{|\mS|}\sum_{{\chi}\in\mS} \pmp(\dm,\chi,\PE) 
= \frac{1}{|\mS|}\sum_{{\chi}\in\mS} \Big(\PE\big(H[\dm]\big)\chi\Big)\Big(\PE\big(H[\dm]\big)\chi\Big)^\ct ,
\end{align*}
and then investigate two strategies of the multilevel decomposition, by exploiting the hierarchy of plane-wave discretization and polynomial expansion respectively.

\subsection{General framework}
\label{sce:MLMCframe}

To evaluate $\Phi_{\mS,M}$ within the MLMC framework, we shall first construct a sequence of approximations of $\PE\big(H[\dm]\big)$, denoted by
$$
\ml^{(\ell)}\in\C^{n\times n}, 
\qquad \ell=0,1,\cdots,L .
$$
Specifically, we set $\ml^{(L)}:=\PE(H[\dm])$ for the highest level.
As $\ell$ increases, the approximations $\ml^{(\ell)}$ exhibit a hierarchical structure, with increasing accuracy but also increasing computational cost.
This yields a sequence of increasingly precise approximations for the random matrix $\pmp(\dm,\chi,p_M)$ in \cref{reDMPCS} as
\begin{equation}
\label{sdm_l}
\mldm_\chi^{(\ell)} := \big(\ml^{(\ell)}\chi\big)\big(\ml^{(\ell)}\chi\big)^\ct, 
\qquad \ell=0,1,\cdots,L .
\end{equation}
We then obtain a multilevel decomposition of $\Phi_{M}(\dm)$ as
\begin{equation}
\label{pd_ml}
\Phi_{M}(\dm) = \E_{\chi^{(0)}}\big[\lmlsl{0}{0}\big] + \sum_{\ell=1}^L\E_{\chil{\ell}}\big[\lmlsl{\ell}{\ell}-\lmlsl{\ell}{\ell-1}\big] ,
\end{equation}
which gives the fundamental formula for the MLMC method.
At each level $\ell~(\ell=0,1,\cdots,L)$, we construct a distinct set of random orbitals $\mS^{(\ell)}$ to approximate the expectation value. The size of $\mS^{(\ell)}$ varies across levels to account for the different variances of the random matrices. 
After sampling the sets of random orbitals for all levels, we can approximate $\Phi_{M}(\dm)$ in \cref{pd_ml} with
\begin{equation}
\label{mlmc}
\Phi_{\{\mS^{(\ell)}\},M}(\dm) :=
\frac{1}{|\mS^{(0)}|}\sum_{\chi^{(0)}\in\mS^{(0)}}\lmlsl{0}{0} + \sum_{\ell=1}^L\bigg(\frac{1}{|\mS^{(\ell)}|}\sum_{\chi^{(\ell)}\in\mS^{(\ell)}}\Big(\lmlsl{\ell}{\ell}-\lmlsl{\ell}{\ell-1}\Big)\bigg).
\end{equation}
The multilevel strategy \cref{mlmc} outperforms the single level approach \cref{reDMPCS} by taking most of the samples at the computationally inexpensive low levels, and only a few samples at the costly high levels.
The efficiency of the MLMC methods in sDFT lies in two key aspects:
(i) Design appropriate hierarchical approximations $\big\{\ml^{(\ell)}\big\}_{\ell=0}^L$ for $\PE\big(H[\dm]\big)$ so that the variance decreases as $\ell$ increases; and (ii) Select optimal sampling sizes $|\mS^{(\ell)}|$ to balance the computational cost and the approximation error.
In the remainder of this subsection, we present the variance estimates and the strategy for optimizing the computational cost within a general MLMC framework.

With a hierarchical approximation $\big\{\ml^{(\ell)}\big\}_{\ell=0}^L$, the variance at each level is
\begin{align}
\label{mlvell}
\mlv_\ell := \Var[\lmls{\ell}-\lmls{\ell-1}]
\qquad {{\rm for}~\ell=1,\cdots,L}
\end{align}
{and $\mlv_0=\Var[\lmls{0}]$.}
The following lemma provides a general estimate of the variance $\mlv_\ell$ for $\ell=1,\cdots,L$, whose proof is given in \Cref{proof:lemma:ml_var}.

\begin{lemma}
\label{lemma:ml_var}
Let $\chi\in\C^n$ be the random orbital satisfying \cref{ass:randorb} and \cref{chi:ass:2}.
Then there exists a constant $C$ independent of $n,N,M$, such that 
\begin{align}
\label{var_sub}
\mlv_\ell \leq CN \big\| \ml^{(\ell-1)} - \PE(H[\dm]) \big\|_{\F}^2
\qquad \forall~1\leq\ell\leq L .
\end{align}
\end{lemma}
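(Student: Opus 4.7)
My plan is to reduce the MLMC increment variance to a simpler single-level comparison against the target $Q := \PE(H[\dm])$. Writing
\begin{equation*}
\lmls{\ell} - \lmls{\ell-1} = \big[\lmls{\ell} - (Q\chi)(Q\chi)^\ct\big] - \big[\lmls{\ell-1} - (Q\chi)(Q\chi)^\ct\big]
\end{equation*}
and using the elementary inequality $\Var[X - Y] \leq 2\Var[X] + 2\Var[Y]$, it suffices to establish the sub-estimate
\begin{equation*}
\Var\!\big[(A\chi)(A\chi)^\ct - (Q\chi)(Q\chi)^\ct\big] \leq CN \|A - Q\|_{\rm F}^2
\qquad \text{for Hermitian } A \text{ approximating } Q,
\end{equation*}
and then apply it with $A = \ml^{(\ell)}$ and $A = \ml^{(\ell-1)}$, combined with the natural MLMC monotonicity $\|\ml^{(\ell)} - Q\|_{\rm F} \leq \|\ml^{(\ell-1)} - Q\|_{\rm F}$ built into the hierarchical construction.

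To establish the sub-estimate, I introduce the centered random matrix $Y := \chi\chi^\ct - I$, so that the centered increment equals $AYA - QYQ$ (using Hermiticity of $A$ and $Q$), and expand $\E[\|AYA - QYQ\|_{\rm F}^2] = \E[\tr((AYA - QYQ)^2)]$ into three terms of the form $\E[\tr(MYNY)]$ by cyclic rearrangement, with $(M,N) \in \{(A^2,A^2),\,(AQ,QA),\,(Q^2,Q^2)\}$. A case analysis of $\E[\bar\chi_i \chi_j \bar\chi_k \chi_l]$ parallel to the proof of \Cref{lemma:var:sdft}, exploiting \cref{ass:randorb}, \cref{chi:ass:2} and entrywise independence, should show that only the index pattern $i=j$, $k=l$, $j\neq k$ contributes with value $1$, yielding the clean identity
\begin{equation*}
\E[\tr(MYNY)] = \tr(M)\tr(N) - \sum_j M_{jj} N_{jj}.
\end{equation*}
Combining the three expansions and discarding the resulting diagonal remainder (nonnegative by $|(AQ)_{jj}|^2 \leq (A^2)_{jj}(Q^2)_{jj}$ and AM--GM) produces
\begin{equation*}
\Var[\,\cdot\,] \leq \big(\tr(A^2 - Q^2)\big)^2 + 2\big[\|A\|_{\rm F}^2 \|Q\|_{\rm F}^2 - (\tr(AQ))^2\big].
\end{equation*}

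The first term is at most $(\|A\|_{\rm F} + \|Q\|_{\rm F})^2 \|A-Q\|_{\rm F}^2$ by Cauchy--Schwarz on $\tr((A+Q)(A-Q))$, and the second, which is the Gram determinant of $\{A, Q\}$ in the Frobenius inner product, is at most $\|Q\|_{\rm F}^2 \|A - Q\|_{\rm F}^2$. Absorbing $\|A\|_{\rm F}$ into $C\|Q\|_{\rm F}$ using $\|A\|_{\rm F} \leq \|Q\|_{\rm F} + \|A-Q\|_{\rm F}$ consolidates this to $C\|Q\|_{\rm F}^2 \|A-Q\|_{\rm F}^2$. The crucial $N$-factor finally enters through
\begin{equation*}
\|Q\|_{\rm F}^2 = \tr\big(\PE^2(H[\dm])\big) \leq 2\,\tr\big(\f(H[\dm])\big) + 2n e^{-2\alpha M} \leq CN,
\end{equation*}
where the first inequality follows from $|\PE^2 - \f| \lesssim |\PE - \ff|$ together with the Chebyshev bound \cref{cheberror}, and the last holds in the natural regime $M \gtrsim \log(n/N)$.

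The main obstacle is the moment computation for $\E[\tr(MYNY)]$: one has to enumerate the index patterns of $\E[\bar\chi_i \chi_j \bar\chi_k \chi_l]$ and verify that only the configuration $i=j$, $k=l$, $j\neq k$ survives under the combined hypotheses on $\chi$. Once that identity is in hand, the remaining estimates are standard Cauchy--Schwarz and triangle-inequality manipulations that are routine but demand attentive bookkeeping.
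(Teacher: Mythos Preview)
Your argument is correct and reaches the same conclusion, but the execution differs from the paper's. The paper likewise inserts the target $Q := \PE(H[\dm])$ and reduces to bounding $\Var\big[(A\chi)(A\chi)^\ct - (Q\chi)(Q\chi)^\ct\big]$ for $A\in\{\ql{\ell},\ql{\ell-1}\}$, but rather than your centered-$Y$ expansion it splits the product difference \emph{algebraically} as
\[
(A\chi)\big((A-Q)\chi\big)^\ct + \big((A-Q)\chi\big)(Q\chi)^\ct,
\]
and then bounds the variance of each piece separately by a direct analogue of \Cref{lemma:var:sdft}, obtaining $\tr(A^2)\,\tr\!\big((A-Q)^2\big)$ and $\tr(Q^2)\,\tr\!\big((A-Q)^2\big)$. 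This is shorter: the factor $\|A-Q\|_{\rm F}^2$ drops out immediately from the split, and the $N$-factor enters through $\tr(A^2),\tr(Q^2)=\Oc(N)$ with no Gram-determinant or AM--GM step needed. Your route instead keeps the difference intact, derives the exact identity $\E[\tr(MYNY)] = \tr(M)\tr(N) - \sum_j M_{jj}N_{jj}$ (sharper as an intermediate result and of independent interest), and then extracts $\|A-Q\|_{\rm F}^2$ via the Gram-determinant bound---more bookkeeping, but a cleaner moment formula along the way. Both arguments lean on the same implicit hypotheses (Hermitian $\ql{\ell}$, $\tr\big((\ql{\ell})^2\big)=\Oc(N)$, and monotone accuracy of the hierarchy); your observation that $\|Q\|_{\rm F}^2 \leq CN$ tacitly requires $M \gtrsim \log(n/N)$ is a point the paper passes over.
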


Intuitively, the estimate \cref{var_sub} shows that as $\ell$ increases, if $\ml^{(\ell)}$ provides a better approximation of $\PE(H[\dm])$, then the variance decays and hence fewer random orbitals are required at higher levels.

Then we shall investigate the design of an optimal sampling strategy across all levels.
{For each level $\ell$ and each random orbital $\chi\in\mS^{(\ell)}$, let $\mlc_\ell$ denote the computational cost of evaluating the corresponding single-sample contribution.
This is obtained by applying $\ml^{(0)}$ to $\chi$ for $\ell=0$, and by applying both $\ml^{(\ell)}$ and $\ml^{(\ell-1)}$ to $\chi$ and taking their difference for $\ell\geq 1$.}
We see from \cref{mlmc} that the total cost and the total variance of the MLMC method are given respectively by $\mlc = \sum_{\ell=0}^L \big|\mS^{(\ell)}\big| \mlc_\ell$ and $\mlv=\sum_{\ell=0}^L \mlv_\ell/\big|\mS^{(\ell)}\big|$. 
{According to \Cref{remark:err:rho}, the sampling error is $\Oc(\sqrt{\mlv/N})$. 
We therefore impose the variance constraint $\mlv/N=\epsilon^2$ and minimize the total cost $\mathcal C$, which leads to the following choice of the number of stochastic orbitals at each level
\begin{equation}
\label{num_orb}
\big|\mS^{(\ell)}\big| = \left\lceil\epsilon^{-2} \frac{1}{N}\sqrt{\frac{\mlv_\ell}{\mlc_\ell}} \Bigg(\sum_{\ell=0}^L\sqrt{\mlv_\ell\mlc_\ell}\Bigg)\right\rceil .
\end{equation}
}
Then the total cost for an MLMC approximation of the Kohn--Sham map, i.e., evaluation of \cref{mlmc}, is estimated by
\begin{align}
\label{ml_cost}
\mlc ~{\approx}~ \epsilon^{-2} \frac{1}{N} \Bigg(\sum_{\ell=0}^L\sqrt{\mlv_\ell\mlc_\ell}\Bigg)^2 .
\end{align}
Note that both the variance $\mlv_\ell$ and the computational cost $\mlc_\ell$ depend on the hierarchical construction of $\ml^{(\ell)}$.
Therefore, we can optimize the hierarchical structure by minimizing the total cost in \cref{ml_cost}, giving rise to the problem 
\begin{equation} 
\label{opt_ml_select}
\min_{\{\ml^{(\ell)}\}} \sum_{\ell=0}^L\sqrt{\mlv_\ell\mlc_\ell}.
\end{equation}

In the rest of this section, we shall develop two MLMC approaches for sDFT calculations leveraging hierarchical structures derived from the plane-wave discretization and Chebyshev polynomial expansion, respectively.
Note that the number of levels $L$ is fixed in the following discussions and numerical experiments. Nevertheless, our theories and methods can be extended to accommodate a varying $L$.

\begin{remark}[Cost reduction by MLMC]
\label{remark:mlmc:save}
In our following hierarchical constructions, the variance $\mlv_\ell$ decreases more rapidly than the cost $\mlc_\ell$ increases as the level $\ell$ grows. 
As a result, the product $\mlv_\ell \mlc_\ell$ decreases with $\ell$, implying that the total cost of the MLMC approximation is dominated by $\mlv_0 \mlc_0$. 
Since the cost of standard sDFT can be approximated as $\mlc_L \mlv_0$, the MLMC approach reduces the cost of sDFT by a factor of $\mlc_0 / \mlc_L$ (see also \cite{Giles2015}).
\end{remark}

\subsection{Multilevel on energy cutoffs}
\label{sec:Ec}

In DFT calculations, the primary numerical parameter that determines the accuracy and computational cost is the energy cutoff $\Ec$ for the plane-wave basis set \cref{def:spaceE}.
We first exploit the energy cutoff parameter to construct hierarchical approximations of $p_M(H[\dm])$.

Let
$$
0<\Ec^{(0)}<\cdots<\Ec^{(L)}=\Ec
$$
be a sequence of energy cutoffs.
At the $\ell$-th level, the energy cutoff $\Ec^{(\ell)}$ defines a finite dimensional subspace $X_{\Ecl{\ell}}(\Omega)$, whose degrees of freedom is given by $\nl{\ell}:={\rm dim}\big(X_{\Ecl{\ell}}(\Omega)\big) = \mathcal{O}\big((\Ec^{(\ell)})^\frac{d}{2}\big)$. 
For each level $\ell = 0, \dots, L$, we construct the corresponding Hamiltonian matrix $\Hl [\dm] \in \C^{\nl{\ell} \times \nl{\ell}}$, whose matrix elements follow the same definition as \cref{HKS:discrete}.
The idea is to approximate $p_M(H[\dm])$ by $p_M\big(\Hl [\dm]\big)$ at the $\ell$-th level.
Since $\PE(\Hl [\dm])\in\C^{\nl{\ell} \times \nl{\ell}}$ has a smaller matrix size than the full one $p_M(H[\dm])\in\C^{n\times n}$, it must be expanded by zero-padding to ensure compatibility with the MLMC formula \cref{mlmc}.
In particular, matrix elements of $\ml^{(\ell)}\in \C^{n\times n}$ at the $\ell$-th level are defined by 
\begin{equation} 
\label{ec_h}
\big(\ml^{(\ell)}\big)_{\G\G'} := 
\left\{\begin{array}{ll} \PE\big(\Hl [\dm]\big)_{\G\G'} \qquad &{\rm if}~ |\G|^2\leq 2\Ec^{(\ell)} ~~ {\rm and} ~~ |\G'|^2\leq 2\Ec^{(\ell)},
\\[1ex]
0 & {\rm otherwise}.
\end{array}
\right. 
\end{equation}
The random matrix $\mldm_\chi^{(\ell)}$ in \cref{sdm_l} can be obtained accordingly.

{At level $\ell$, the evaluation of each single-sample contribution is performed by recursive matrix-vector products. 
Let $\widetilde{\mlc}(\Ec):=\Ec^{d/2}\log(1+\Ec^{d/2})$.
Then the corresponding per-orbital cost scales as
\begin{align}
\label{cl:plane-wave}
&\mlc_0\propto \nl{0} \log \nl{0} \propto \widetilde{\mlc}(\Ec^{(0)})
\qquad\qquad{\rm and}
\\[1ex]
\nonumber
&\mlc_\ell \propto \nl{\ell} +\nl{\ell-1}\propto \widetilde{\mlc}(\Ec^{(\ell)}) +\widetilde{\mlc}(\Ec^{(\ell-1)}),\quad \ell=1,\cdots,L.
\end{align}
}
We shall then use \Cref{lemma:ml_var} to estimate the variance $\mlv_\ell$ with the multilevel construction \cref{ec_h}.
The following result shows that the variance decays exponentially with respect to the increasing energy cutoffs, the proof of which is given in \Cref{proof:thm:mlmc_ec}.
 
\begin{theorem}
\label{thm:mlmc_ec}
Let $\chi$ be the random orbital satisfying \cref{ass:randorb} and \cref{chi:ass:2}, $\ml^{(\ell)}$ be given by \cref{ec_h}, $\mldm_\chi^{(\ell)}$ be given by \cref{sdm_l}, and $\mlv_\ell$ be given by \cref{mlvell}.
Then there exist constants $C$ and $\gamma$ depending on the smearing parameter $\beta$, such that 
\begin{equation} 
\label{ec:val_l}
\mlv_\ell \leq CN \left( \exp\big(-2\gamma \sqrt{\Ec^{(\ell-1)}}\big)
+ \exp\big(-2\alpha \PO)\right) 
\qquad \forall~\ell = 1,\cdots,L .
\end{equation}
\end{theorem}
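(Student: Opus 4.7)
My strategy is to combine \cref{lemma:ml_var} with a triangle-inequality decomposition of $\ml^{(\ell-1)}-\PE(H[\dm])$ that cleanly separates the Chebyshev polynomial approximation error from the plane-wave truncation error. First, \cref{lemma:ml_var} reduces the task to establishing
\begin{equation*}
\big\|\ml^{(\ell-1)}-\PE(H[\dm])\big\|_{\F}^2 \leq C\left(\exp\big(-2\gamma\sqrt{\Ecl{\ell-1}}\big) + \exp(-2\alpha\PO)\right).
\end{equation*}
Let $\widetilde{A}\in\C^{n\times n}$ denote the zero-padded extension to dimension $n$ of a matrix $A\in\C^{\nl{\ell-1}\times \nl{\ell-1}}$, so that $\ml^{(\ell-1)}=\widetilde{\PE(\Hlm[\dm])}$ by \cref{ec_h}. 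Inserting the intermediate quantities $\widetilde{\ff(\Hlm[\dm])}$ and $\ff(H[\dm])$, the triangle inequality bounds $\big\|\ml^{(\ell-1)}-\PE(H[\dm])\big\|_{\F}$ by $T_1+T_2+T_3$, where
\begin{align*}
T_1 &:= \big\|\PE(\Hlm[\dm])-\ff(\Hlm[\dm])\big\|_{\F}, \\
T_2 &:= \big\|\widetilde{\ff(\Hlm[\dm])}-\ff(H[\dm])\big\|_{\F}, \\
T_3 &:= \big\|\ff(H[\dm])-\PE(H[\dm])\big\|_{\F}.
\end{align*}

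Second, I would bound the Chebyshev terms $T_1$ and $T_3$ directly via the uniform estimate \cref{cheberror}. Since the spectra of both $H[\dm]$ and $\Hlm[\dm]$ lie in $\U$, spectral calculus yields $\|\PE(H)-\ff(H)\|_{\F}\leq\sqrt{n}\,\|\PE-\ff\|_{L^\infty(\U)}\leq C\exp(-\alpha\PO)$ for any Hermitian $H$ with spectrum in $\U$, with the dimension-dependent prefactor absorbed into $C$. Hence $T_1^2+T_3^2\leq C\exp(-2\alpha\PO)$, matching the Chebyshev part of the claimed bound.

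Third, the plane-wave truncation term $T_2$ is the main technical obstacle. My plan is to represent $\ff(H[\dm])$ through a Dunford contour integral
\begin{equation*}
\ff(H[\dm])=\frac{1}{2\pi i}\oint_{\cont}\ff(z)(z-H[\dm])^{-1}\,dz,
\end{equation*}
and similarly for $\widetilde{\ff(\Hlm[\dm])}$, over a contour $\cont$ contained in the strip of analyticity of $\ff$ (whose width is controlled by $\beta$) that encloses $\U$. This reduces $T_2$ to the Frobenius norm of the resolvent difference $(z-H[\dm])^{-1}-\widetilde{(z-\Hlm[\dm])^{-1}}$ integrated against $\ff(z)$. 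Controlling this resolvent difference relies on two ingredients: (i) the pseudopotential smoothness of $v_{\rm ext}$ guarantees exponential decay of the Fourier coefficients of $H[\dm]$ in the high-frequency region $|\G|^2>2\Ecl{\ell-1}$; and (ii) uniform resolvent bounds on $\cont$ that follow from the distance between $\cont$ and the spectra of $H[\dm]$ and $\Hlm[\dm]$. Standard plane-wave Galerkin convergence arguments then produce $T_2\leq C\exp\big(-\gamma\sqrt{\Ecl{\ell-1}}\big)$, with $\gamma$ determined jointly by the strip width (hence by $\beta$) and the Fourier decay rate of $v_{\rm ext}$. Squaring the triangle-inequality bound and inserting it into \cref{lemma:ml_var} produces \cref{ec:val_l}. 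The delicate step in this argument is the selection of $\cont$ so that $\ff$ and both resolvents remain uniformly bounded while $\gamma$ stays strictly positive with a dependence determined by $\beta$ alone.
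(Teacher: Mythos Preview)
Your overall architecture matches the paper's exactly: you invoke \cref{lemma:ml_var}, then use the same three-term decomposition (your $T_1,T_2,T_3$ are the paper's $I_2,I_3,I_1$), and you handle the Chebyshev terms $T_1,T_3$ the same way. The difference is in the truncation term $T_2$, where your plan has a genuine gap.

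Your contour/resolvent-difference argument controls only the low-frequency block $|\G|^2,|\G'|^2\leq 2\Ecl{\ell-1}$: there you compare $(z-H[\dm])^{-1}$ with $(z-\Hlm[\dm])^{-1}$ via a resolvent identity, and Combes--Thomas off-diagonal decay plus smoothness of the potential indeed gives an entrywise bound $Ce^{-\gamma\sqrt{\Ecl{\ell-1}}}$ (this is the paper's \cref{app:lemma:trunc_converge}). But the Frobenius norm in $T_2$ also sums over the high-frequency block where at least one of $|\G|^2,|\G'|^2$ exceeds $2\Ecl{\ell-1}$. There $\widetilde{\ff(\Hlm[\dm])}_{\G\G'}=0$, so you must bound $\big|\ff(H[\dm])_{\G\G'}\big|$ \emph{itself}, not a resolvent difference. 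On a fixed contour enclosing $\U$, the resolvent entries $(z-H[\dm])^{-1}_{\G\G'}$ have no decay in $|\G|$ (only off-diagonal decay), so your integral representation gives nothing useful in this block; the phrase ``standard plane-wave Galerkin convergence arguments'' does not cover this.

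The paper supplies the missing ingredient as a separate lemma (\cref{app:lemma:G_decay}): $\big|\ff(H)_{\G\G'}\big|\leq Ce^{-\gamma(\min\{|\G|,|\G'|\}+|\G-\G'|)}$. Its proof does \emph{not} use your contour; instead it restricts to a high-frequency block Hamiltonian, uses Gershgorin to show that block has spectrum above roughly $|\G|^2/8$, and then exploits the exponential decay of $\ff$ at large argument. With this decay in hand, the paper splits the Frobenius sum into low and high frequencies and sums each piece to $Ce^{-\gamma\sqrt{\Ecl{\ell-1}}}$. Your plan needs this second lemma (or an equivalent) to close.

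A minor point: writing ``the dimension-dependent prefactor absorbed into $C$'' for $T_1,T_3$ is dangerous, since the theorem's constant should not grow with $n$; the paper is admittedly terse here, but you should avoid explicitly introducing a $\sqrt{n}$ factor.
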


The second term in the estimate \cref{ec:val_l} is not relevant to the hierarchical construction \cref{ec_h}, but comes from the polynomial approximation error.
In practice, we can take sufficiently large $M$ so that the variance becomes dominated by the first part of \cref{ec:val_l}.
{Note that according to \Cref{lemma:var:sdft}, we have $\mlv_0 = \Oc(N^2)$.} Combining these with the cost $\mlc_\ell$ in \cref{cl:plane-wave}, the optimization problem \cref{opt_ml_select} can be reformulated as the following problem that finds the optimal energy cutoffs across all levels
{
\begin{align}
\label{ec:opt}
\min_{\big\{\Ec^{(\ell)}\big\}} \bigg\{&\sqrt{N\widetilde{\mlc}(\Ec^{(0)})}+\sum_{\ell=1}^L \exp\Big( -{\gamma}\sqrt{\Ec^{(\ell-1)}} \Big)  \sqrt{\widetilde{\mlc}(\Ec^{(\ell)})+\widetilde{\mlc}(\Ec^{(\ell-1)})} \bigg\}.
\end{align}
In deriving \cref{ec:opt}, we have omitted multiplicative constants and common prefactors that are independent of the cutoff parameters, since they do not affect the minimizer. 
The remaining expression keeps the relevant dependence on the cutoffs and on the system size $N$.
The resulting optimization should be understood as an idealized asymptotic cost model. 
In an actual plane-wave implementation, the basis consists of the discrete set of reciprocal lattice vectors satisfying the cutoff condition.
Hence increasing the energy cutoff changes the number of plane waves only in a stepwise manner, rather than exactly according to the continuum scaling law. 
In practice, the cost should be interpreted in terms of the actual number of selected plane-waves at each level.}

For simplicity of implementations, we can choose energy cutoffs that follow an algebraic growth rate across levels 
\begin{equation}
\label{ecl:p}
\Ec^{(\ell)} = E_0 + (\Ec-E_0)\left(\frac{\ell+s}{L+s}\right)^p \quad{\rm with} ~~ p>0 .
\end{equation}
Here, $E_0>0$ is a small energy cutoff for level $\ell=0$ to capture the key characteristics of the system, and the parameter $s$ is selected empirically.
Then we can optimize the rate parameter $p$ in \cref{ecl:p} to determine the optimal cost for this MLMC strategy (see our numerical experiments in \Cref{sec:numerics}).

\begin{remark}
[Cost scaling for multilevel on energy cutoffs]
\label{remark:cost:mlec}
We observe that the second term in the objective function in \cref{ec:opt} can be bounded by a constant independent of the finest cutoff $\Ec$
{
\begin{align*}
\sum_{\ell=1}^L e^{-{\gamma}\sqrt{\Ec^{(\ell-1)}}} \sqrt{\widetilde{\mlc}(\Ec^{(\ell)})+\widetilde{\mlc}(\Ec^{(\ell-1)})}
\leq C\int_{\Ecl{0}}^{\Ec} e^{-\gamma\sqrt{x}}x^{\frac{d}{4}+\frac{1}{2}}\dd x \leq C\Gamma\bigg(\frac{d}{2}+3\bigg) ,
\end{align*}
where $\Gamma(\cdot)$ denotes the Gamma function.
Here, the first inequality follows by comparing the discrete sum with the corresponding integral and the bound $\sqrt{\log(1 + x^{d/2})} \leq C \sqrt{x}$.} 
As a result, the total cost of the MLMC scheme with multilevel energy cutoffs is dominated by the cost at level $\ell=0$, which is $\Oc({\epsilon^{-2}}N^{-1}\mlv_0\mlc_0)$. 
Together with the fact $\mlv_0=\Oc(N^2)$ and {$\mlc_0\propto Mn^{(0)}\log n^{(0)}$}, the total cost scales as {$\Oc(\epsilon^{-2}NMn^{(0)}\log n^{(0)})$}, which does not depend on the discretization size $n$ (and hence also independent of the finest energy cutoff $\Ec$).
\end{remark}

\subsection{Multilevel on polynomial orders}
\label{sec:ml:polynomial}

The polynomial order $M$ serves as another crucial numerical parameter that governs both accuracy and computational cost in sDFT. 
This motivates us to construct hierarchical approximations of $p_M(H[\dm])$ by selecting a sequence of increasing polynomial orders.

Let
$$
0<\PO^{(0)}<\dots<\PO^{(L)}=\PO.
$$
We approximate $p_M(H[\dm])$ by
\begin{equation} 
\label{spd_l}
\ql{\ell} := p_{\PO^{(\ell)}}\big(H[\dm]\big) ,
\end{equation}
which gives rise to the corresponding random matrix $\mldm_\chi^{(\ell)}$ in \cref{sdm_l}.
{By using the recursive matrix-vector products, the computational cost for the evaluation of each single-sample contribution scales as
\begin{equation} 
\label{cost:dgree}
\mlc_\ell\propto\PO^{(\ell)}, \qquad\ell=0,1,\cdots,L.
\end{equation}
}
We next analyze the variance $\mlv_\ell$ at each level. 
The following result establishes that the variance $\mlv_\ell$ decays exponentially as $\ell$ increases, whose proof is given in \Cref{proof:thm:mlmc_pd}.

\begin{theorem}
\label{thm:mlmc_pd}
Let $\chi$ be the random orbital satisfying \cref{ass:randorb} and \cref{chi:ass:2}, $\ml^{(\ell)}$ be given by \cref{spd_l}, $\mldm_\chi^{(\ell)}$ be given by \cref{sdm_l}, and $\mlv_\ell$ be given by \cref{mlvell}.
Then there exist constants $C$ and $\alpha$ depending on the smearing parameter $\beta$, such that 
\begin{equation} 
\label{pd:val_l}
\mlv_\ell \leq C N\exp\big( -2\alpha M^{(\ell-1)} \big) 
\qquad \forall~\ell=1,\cdots,L.
\end{equation}
\end{theorem}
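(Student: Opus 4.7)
The plan is to combine \Cref{lemma:ml_var} with the exponential Chebyshev convergence bound in \cref{cheberror}. The hierarchical construction \cref{spd_l} sets $\ml^{(\ell)} = p_{M^{(\ell)}}(H[\dm])$, and substituting this into \Cref{lemma:ml_var} immediately gives
\begin{equation*}
\mlv_\ell \leq CN \big\Vert p_{M^{(\ell-1)}}(H[\dm]) - p_M(H[\dm]) \big\Vert_{\rm F}^2,
\end{equation*}
so the task reduces to controlling the Frobenius-norm distance between two Chebyshev partial sums of $\ff$ applied to the same Hermitian matrix $H[\dm]$.

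For this Frobenius-norm difference, since $H[\dm]$ is Hermitian with spectrum contained in $[-1,1]$, the spectral theorem yields $\Vert g(H[\dm]) \Vert_{\rm F}^2 = \sum_{i=1}^n |g(\lambda_i)|^2$ for every polynomial $g$, where $\{\lambda_i\}$ denote the eigenvalues of $H[\dm]$. I would take $g = p_{M^{(\ell-1)}} - p_M$, insert $\ff$ through the triangle inequality, and apply \cref{cheberror} to each resulting term, obtaining $|p_m(\lambda_i) - \ff(\lambda_i)| \leq C \exp(-\alpha m)$ for $m \in \{M^{(\ell-1)}, M\}$. The rate $\alpha$ inherits its dependence on $\beta$ from the width of the strip of analyticity of $\ff$ in the complex plane (equivalently, from the Bernstein ellipse determined by the pole of $f_\mu$ nearest to $[-1,1]$). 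Since $M^{(\ell-1)} < M^{(\ell)} = M$, the coarser-level contribution dominates and produces the factor $\exp(-2\alpha M^{(\ell-1)})$ in the final bound.

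The main obstacle will be securing the prefactor $N$ rather than the total discretization size $n$. A crude uniform bound on each summand produces a factor $n$ in front of the exponential, from naively counting all $n$ eigenvalues. To tighten this to the $N$ claimed in the statement, I would exploit the localization of spectral mass: the trace identity $\tr\big(\f(H[\dm])\big) = N$ combined with the near-step shape of $f_\mu$ effectively restricts the relevant spectral weight of $\ff(H[\dm])$ to $\Oc(N)$ eigenvalues around or below the Fermi level $\mu$, while the unoccupied tail is harmless because $\ff$ itself is exponentially small there and the Chebyshev error inherits an additional suppression on that region. Splitting the eigenvalue sum into occupied and unoccupied parts and reassembling with the exponential decay above yields $\mlv_\ell \leq CN \exp(-2\alpha M^{(\ell-1)})$ with $C$ and $\alpha$ depending only on $\beta$, as claimed.
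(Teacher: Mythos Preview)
Your first two paragraphs are exactly the paper's proof: reduce via \Cref{lemma:ml_var} to bounding $\big\|p_{M^{(\ell-1)}}(H[\dm])-p_M(H[\dm])\big\|_{\F}$, pass to the sup norm on $[-1,1]$ through the eigenvalue decomposition, insert $\ff$ by the triangle inequality, and apply \cref{cheberror} twice, keeping the dominant term $e^{-\alpha M^{(\ell-1)}}$.

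Your third paragraph is an unnecessary detour. The factor $N$ in \cref{pd:val_l} comes \emph{entirely} from \Cref{lemma:ml_var}; the remaining Frobenius bound is simply written in the paper as $\big\|\ml^{(\ell-1)}-p_M(H[\dm])\big\|_{\F}\leq C\|p_{M^{(\ell-1)}}-p_M\|_{L^\infty(\U)}$, with the eigenvalue count absorbed into the generic constant $C$ (the same convention is used in the analogous step \cref{ec:I1} when proving \Cref{thm:mlmc_ec}). So you do not need to manufacture an $N$ here.

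As an aside, the spectral-splitting argument you sketch would not deliver what you want anyway. The uniform bound \cref{cheberror} gives only $|p_m(\lambda_i)-\ff(\lambda_i)|\leq Ce^{-\alpha m}$ regardless of whether $\lambda_i$ lies in the occupied or unoccupied window, so on the unoccupied tail you still get $|p_{M^{(\ell-1)}}(\lambda_i)-p_M(\lambda_i)|\leq 2Ce^{-\alpha M^{(\ell-1)}}$ with no additional suppression from the smallness of $\ff(\lambda_i)$. Summing over those eigenvalues therefore still produces a factor $n-N$, not something that can be hidden inside $N$. If one genuinely wanted a constant independent of $n$, a different mechanism (e.g.\ off-diagonal decay of $p_m(H[\dm])$ in the spirit of \Cref{app:lemma:G_decay}) would be required, but the paper does not pursue this for \Cref{thm:mlmc_pd}.
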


Combining \cref{cost:dgree}, \cref{pd:val_l} and $\mlv_0=\Oc(N^2)$, we approximately obtain the optimization problem \cref{opt_ml_select} for polynomial orders
\begin{equation}
\label{pd:opt}
\min_{\{M^{(\ell)}\}} \bigg\{{\sqrt{NM^{(0)}}} + \sum_{\ell=1}^L \exp\Big( -\alpha M^{(\ell-1)} \Big) \sqrt{M^{(\ell)}}\bigg\} . 
\end{equation}
Analogous to the energy cutoff selection approach in \cref{ecl:p}, we assume an algebraic growth of $M^{(\ell)}$, by taking
\begin{equation}
\label{pdl:q}
M^{(\ell)} =M_0 + \left\lceil (M-M_0)\left(\frac{\ell+t}{L+t}\right)^q\right\rceil \quad{\rm with} ~~ q>0 .
\end{equation}
Here, $M_0\in\Z_+$ is a small polynomial order to roughly capture the characteristics of the function $f$, and the parameter $t$ is selected empirically.
We then optimize the rate parameter $q$ in \cref{pdl:q} to determine the optimal cost for this MLMC strategy (see our numerical experiments in \Cref{sec:numerics}).

\begin{remark}[Cost scaling for multilevel on polynomial orders]
\label{remark:cost:mlpd}
Similar to \Cref{remark:cost:mlec}, the cost of the multilevel on polynomial orders is $\Oc({\epsilon^{-2}}N^{-1}\mlv_0\mlc_0)$. Combined with $\mlv_0=\Oc(N^2)$ and {$\mlc_0\propto M^{(0)}n\log n$}, this multilevel approach yields a total computational cost of {$\Oc(\epsilon^{-2}NM^{(0)}n\log n)$}, which is independent of the polynomial order $M$ (and hence also independent of the smearing parameter $\beta$).
\end{remark}

\section{Numerical experiments}
\label{sec:numerics}

In this section, we present numerical simulations of several {two- and three-dimensional material systems} using sDFT. 
{The computations are implemented based on the packages DFTK.jl \cite{DFTKjcon} (a library of Julia routines for plane-wave DFT algorithms) and SDFT.jl \cite{SDFT} (our Julia package for sDFT simulations).
}

\vskip 0.2cm

\noindent
{\bf {Example 1 (2D systems).}}
We consider three graphene-type systems of increasing complexity: perfect graphene lattice, graphene with a Stone--Wales defect, and boron/nitrogen-doped graphene (with a doping ratio of 10\%). 
The corresponding atomic configurations are shown in \Cref{fig:graphene}. 
The system size $N$ is varied from 16 to 128 through the construction of supercells with different sizes.
{All simulations for 2D systems are carried out on a workstation equipped with an Intel Xeon W-3275M 56-core processor and 1.5 TB of RAM.}

{For the DFT models, we use PseudoDojo UPF norm-conserving pseudopotentials in separable (Kleinman--Bylander) form \cite{Kleinman1982,VANSETTEN201839},}
{together with the exchange-correlation functional consisting of Slater exchange \cite{Dirac_1930} and Perdew--Wang correlation \cite{perdew92}. 
In all simulations, only the $\Gamma$ point is used for $k$-point sampling in the Brillouin zone.}

We first test the variance of the random matrix $\pmp(\dm,\chi,f^{\frac{1}{2}})$ by theoretical prediction (from \Cref{lemma:var:sdft}) and numerical tests (from averaging 10 independent simulations).
We choose the energy cutoff at $\Ec=20$ Ha and the smearing parameter at $\beta = 10^3~{\rm Ha}^{-1}$, and show the growth of the variance with respect to the system size in \Cref{fig:var:dm}. 
The results demonstrate excellent agreement between the numerically sampled variance and the theoretical prediction from \cref{var2} across all three different types of systems. 
Moreover, we see that the variance exhibits a quadratic increase with respect to the system size. 

We further test the stochastic error of the electron density
$$
\Delta\rho := \|\rho_{\mS} - \rho_{\rm ref}\|_{L^2(\Omega)},
$$
where the reference density $\rho_{\rm ref}$ is obtained from a standard DFT calculation using the complete basis set $\{\mathfrak{e}_i\}$. 
In \Cref{fig:std:rho}, we compare the stochastic error for {supercells} of sizes $2\times 2$, $3\times 3$ and $4\times 4$. We observe that the error scales linearly with $(N/|\mS|)^{1/2}$ across all systems, which is consistent with our analysis in \Cref{remark:err:rho}.

\begin{figure}[htbp!]
\centering
    \subfloat[Perfect lattice]{\label{fig:perfect_lat}\includegraphics[height=3.7cm]{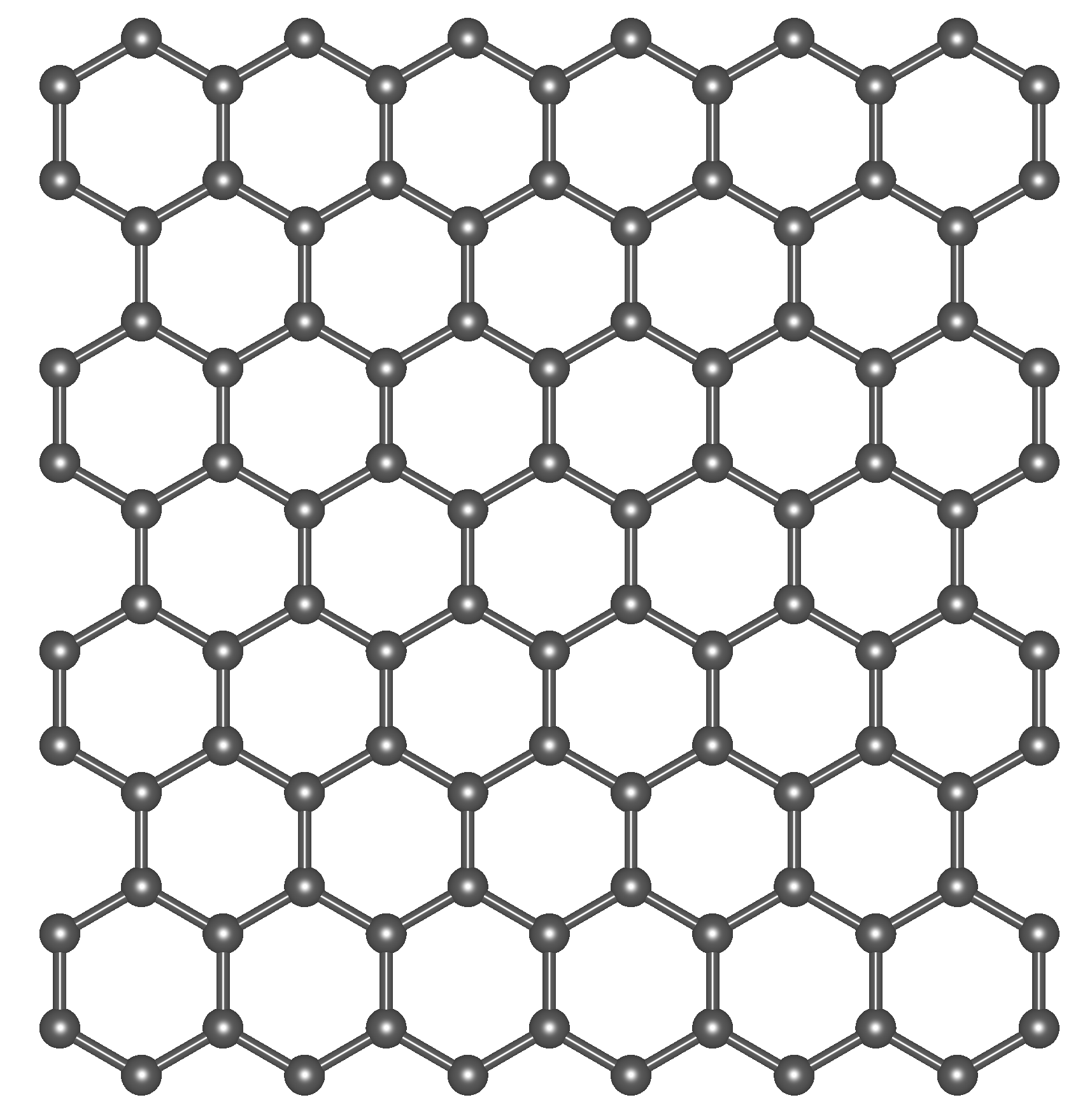}}
    \hskip 0.5cm
    \subfloat[Stone--Wales]{\label{fig:stone_wales}\includegraphics[height=3.7cm]{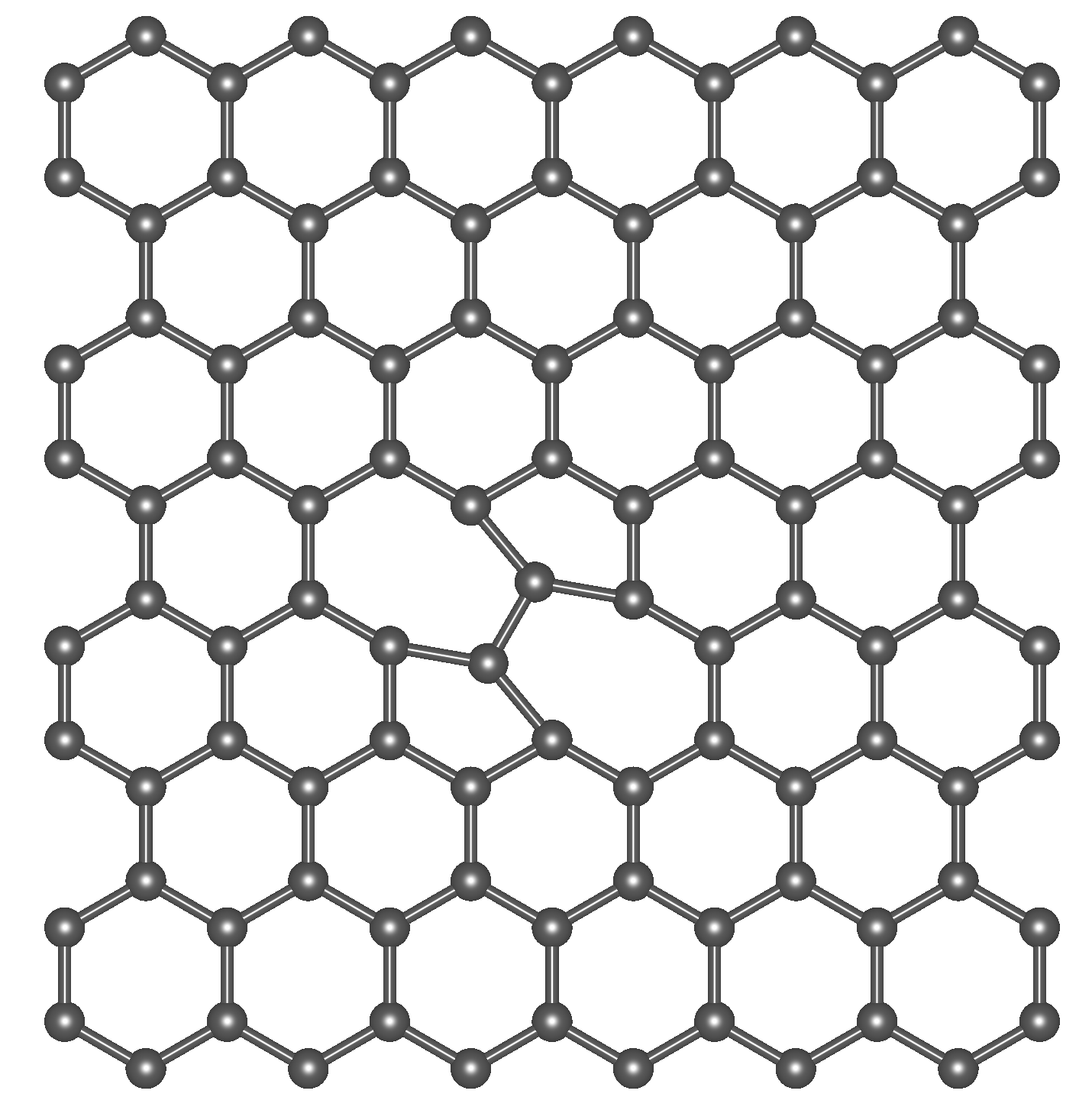}}
    \hskip 0.5cm
    \subfloat[Doping]{\label{fig:doping}\includegraphics[height=3.7cm]{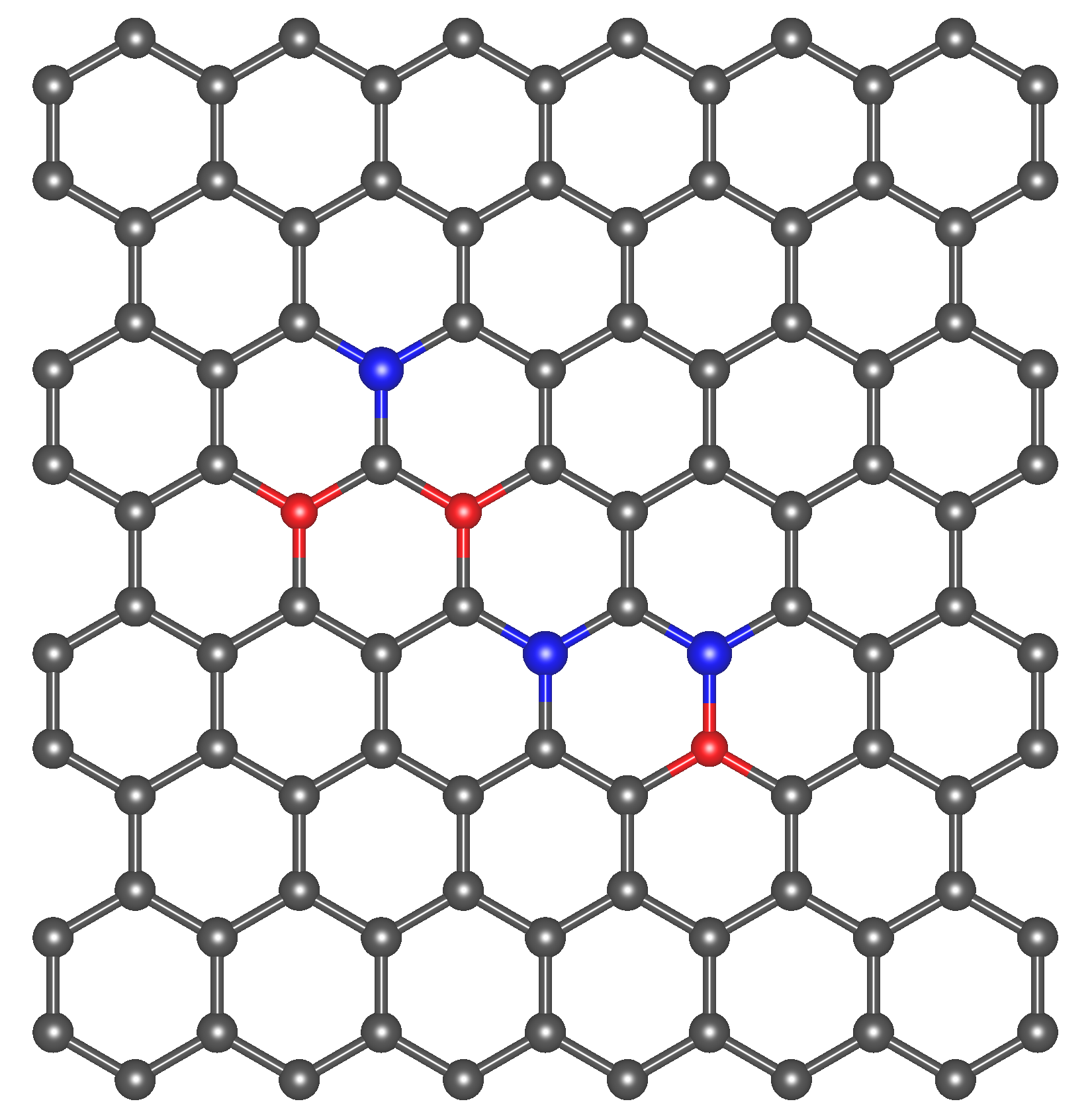}}
    \caption{Atomic configurations of the grahene-type systems. (a) Perfect lattice. (b) Stone--Wales defect. (c) Boron (blue) / nitrogen (red) doped graphene.} 
    \label{fig:graphene}
\vskip 0.2cm
\centering
    \subfloat[Perfect lattice]{\label{fig:sdft:var:graphene}\includegraphics[height=3.8cm]{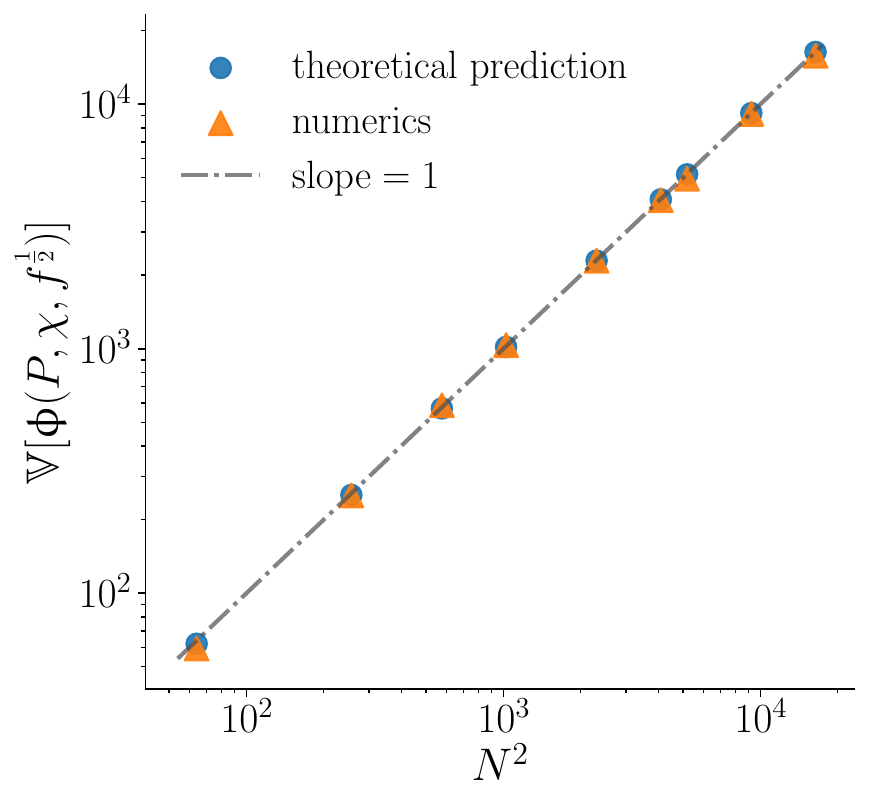}}
    \hskip 0.2cm
    \subfloat[Stone--Wales]{\label{fig:sdft:var:stone_wales}\includegraphics[height=3.8cm]{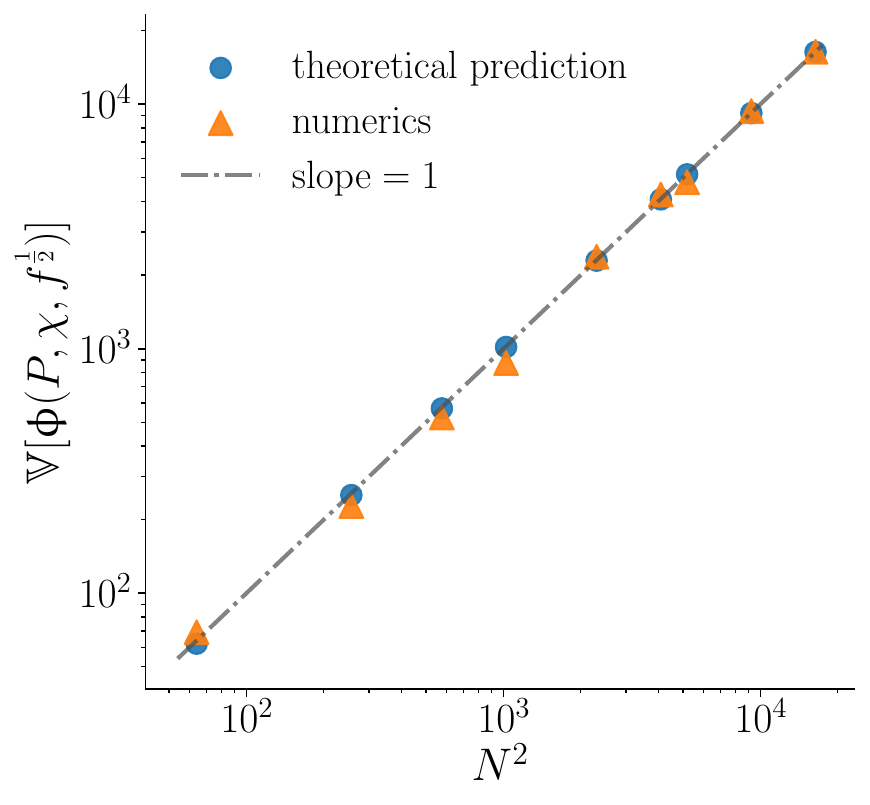}}
    \hskip 0.2cm
    \subfloat[Doping]{\label{fig:sdft:var:doping}\includegraphics[height=3.8cm]{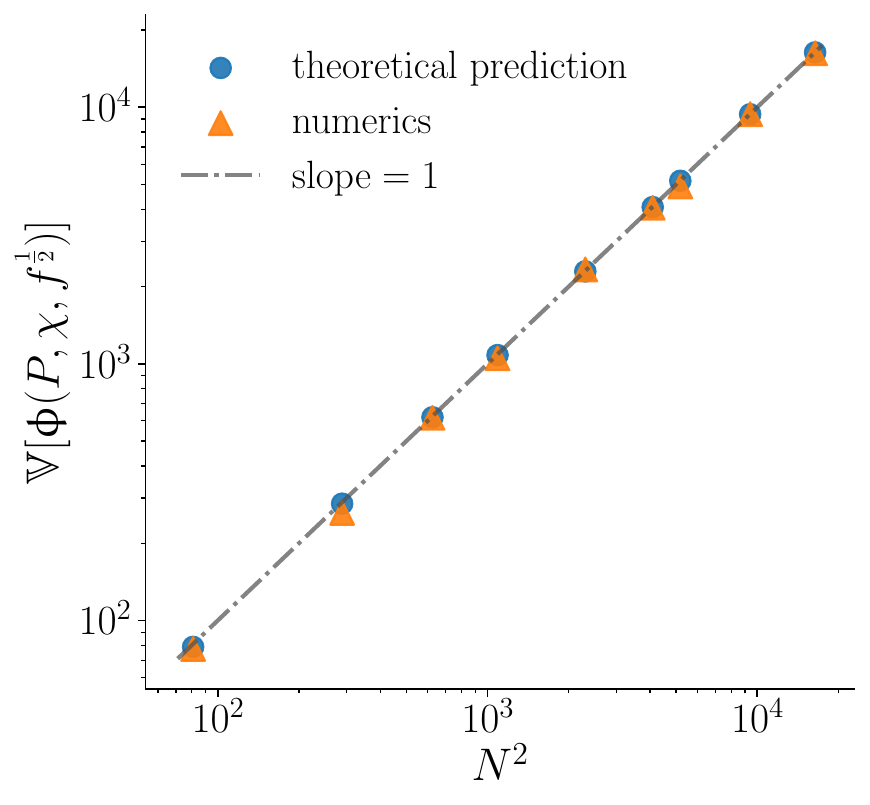}}
    \caption{The variance of the random matrix $\pmp(\dm,\chi,f^{\frac{1}{2}})$ with respect to the number of electrons.}
    \label{fig:var:dm}
\vskip 0.2cm
\centering
    \subfloat[Perfect lattice]{\label{fig:sdft:err:graphene}\includegraphics[height=3.8cm]{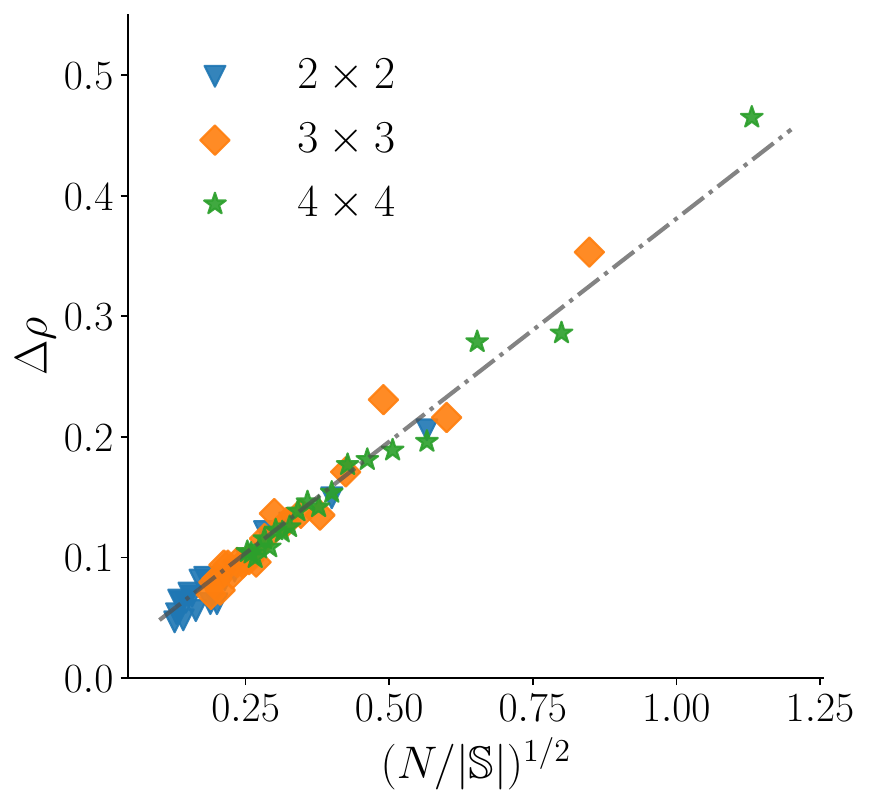}}
    \hskip 0.2cm
    \subfloat[Stone--Wales]{\label{fig:sdft:err:stone_wales}\includegraphics[height=3.8cm]{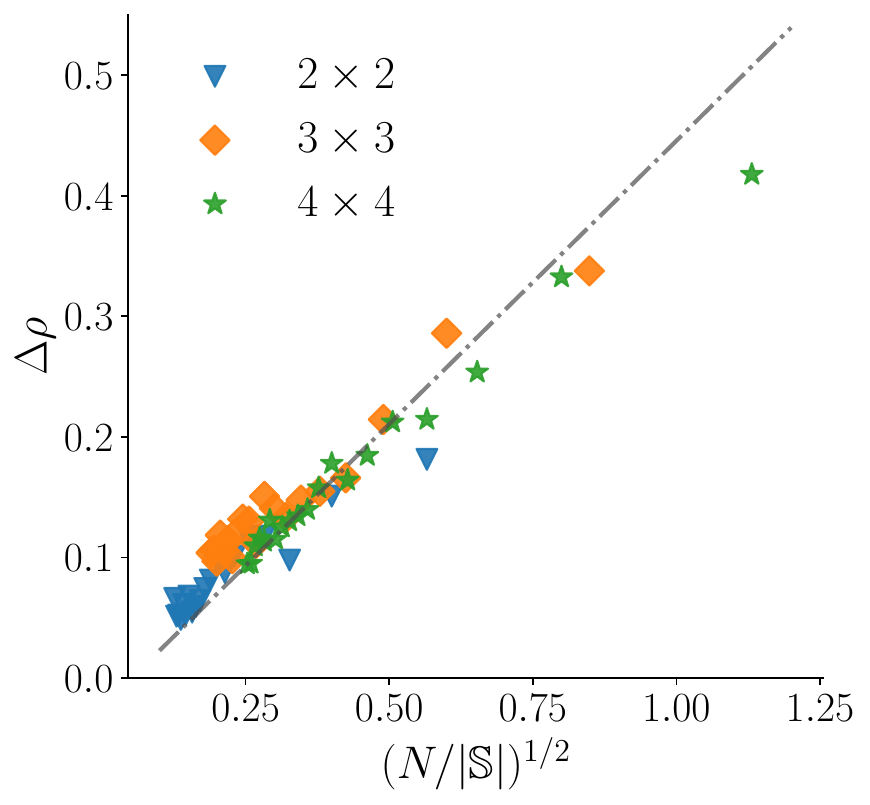}}
    \hskip 0.2cm
    \subfloat[Doping]{\label{fig:sdft:err:doping}\includegraphics[height=3.8cm]{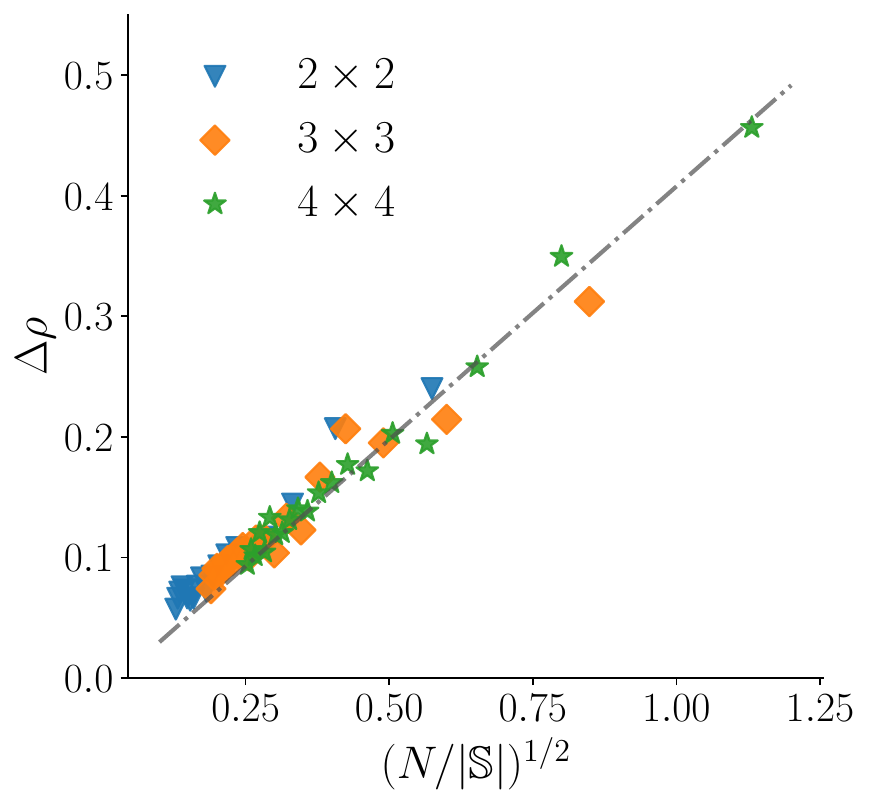}}
    \caption{The statistical error of electron density with respect to $(N/|\mS|)^{1 /2}$.} 
    \label{fig:std:rho}
\end{figure}

We finally test the performance of the MLMC methods in sDFT for the three systems, with two multilevel strategies introduced in \Cref{sec:MLMC_sDFT}.
We quantify the computational cost in terms of the number of floating-point operations determined by \cref{ml_cost}. 
{
In the numerical comparison below, the target accuracy is kept the same for all methods. 
Hence we omit the common $\epsilon$-dependent prefactor when reporting the operation counts, which is the same to all strategies and does not affect their relative cost comparison.
The level variances $\mlv_\ell$ are estimated by numerical sampling, while the level costs $\mlc_\ell$ are approximated by operation counts.
For the plane-wave cutoff hierarchy, we use $\mlc_0\approx Mn^{(0)}\log n^{(0)}$ for $\ell=0$, and $\mlc_\ell\approx M\big(n^{(\ell)} \log n^{(\ell)}+n^{(\ell-1)}\log n^{(\ell-1)}\big)$ for $\ell\geq 1$. 
For the polynomial order hierarchy, we use $\mlc_\ell\approx M^{(\ell)} n\log n$.
}

For the multilevel strategy with energy cutoffs, we fix the smearing parameter at $\beta=10^2$ ${\rm Ha}^{-1}$ and 
{consider three finest-level energy cutoffs $\Ec=20,~30,~40$ Ha. For each $\Ec$, the Chebyshev expansion is truncated at the smallest order $M$ such that $|c_{m}|\leq 10^{-3}$ for $m>M$, which gives $M\approx 420,~480,~520$, respectively.} 
We employ an algebraic growth in \cref{ecl:p} with parameters $E_0=6$ Ha, $s=0.1$, {and $p \in [0.9, 1.5]$}.
We show in \Cref{fig:mlmcec:var} the variance of both $\lmls{\ell}$ and $\lmls{\ell} - \lmls{\ell-1}$ across levels with the energy cutoff $\Ec = 40$ Ha.
It is observed that $\Var[\lmls{\ell}-\lmls{\ell-1}]$ decays exponentially as $\ell$ increases, while the variance $\Var[\lmls{\ell}]$ remains nearly unchanged across levels.
This is consistent with our discussion in \Cref{thm:mlmc_ec}, indicating that the MLMC scheme requires significantly less random orbitals than the standard sampling scheme. 
In \Cref{fig:mlmcec:cost}, we compare the computational costs {for the three finest-level} energy cutoffs $\Ec=20,~30,~40$ Ha, with {the} corresponding numbers of levels $L = 2,~3,~4$, respectively. 
We observe that the computational costs align along a line of {$NMn^{(0)}\log n^{(0)}$}, indicating that the cost is independent of $\Ec$ with the MLMC scheme.

For the multilevel strategy with polynomial orders, we fix the energy cutoff at $\Ec=20$ Ha, and {consider three smearing parameters $\beta=10^2,~10^3,~10^4$ ${\rm Ha}^{-1}$. For each choice of $\beta$, 
the Chebyshev  expansion is truncated at the smallest finest-level degree $M$ such that $|c_{m}|<5\times10^{-5}$ for $m>M$, leading to $M\approx 1300,~6000,~12200$, respectively.} 
We employ an algebraic growth in \cref{pdl:q} with parameters $M_0 = 85$, $t = 0$, {and $q\in[2,5]$. Here, $M_0$ is selected as the smallest polynomial degree satisfying $|c_m|<5\times 10^{-3}$ for $m>M_0$, corresponding to a coarse-level truncation tolerance one hundred times larger than the finest-level tolerance.}
In \Cref{fig:mlmcpd:var}, we show the variance of random matrices with the smearing parameter $\beta = 10^4~\mathrm{Ha}^{-1}$, and observe that $\Var[\lmls{\ell}-\lmls{\ell-1}]$ decays exponentially with $\ell$, while $\Var[\lmls{\ell}]$ remains nearly constant.
In \Cref{fig:mlmcpd:cost}, we compare the computational costs {for the three} smearing parameters $\beta=10^2,~10^3,~10^4$ ${\rm Ha}^{-1}$, with {the} corresponding numbers of levels $L = 2,~3,~4$, respectively.
We see that all costs align along a line {of $NM^{(0)}n\log n$}, indicating that the cost of the multilevel approach is independent of the smearing parameter $\beta$.

\begin{figure}[htbp!]
\centering
    \subfloat[Perfect lattice]{\label{fig:mlmcec:var:graphene}\includegraphics[height=3.6cm]{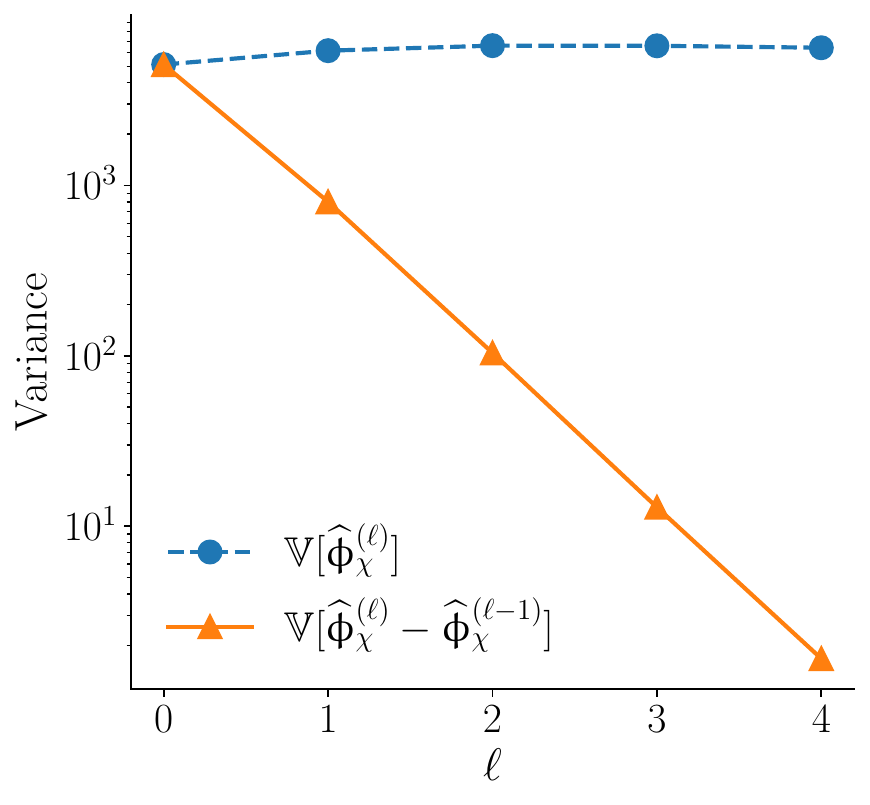}}
    \hskip 0.3cm
    \subfloat[Stone--Wales]{\label{fig:mlmcec:var:stone_wales}\includegraphics[height=3.5cm]{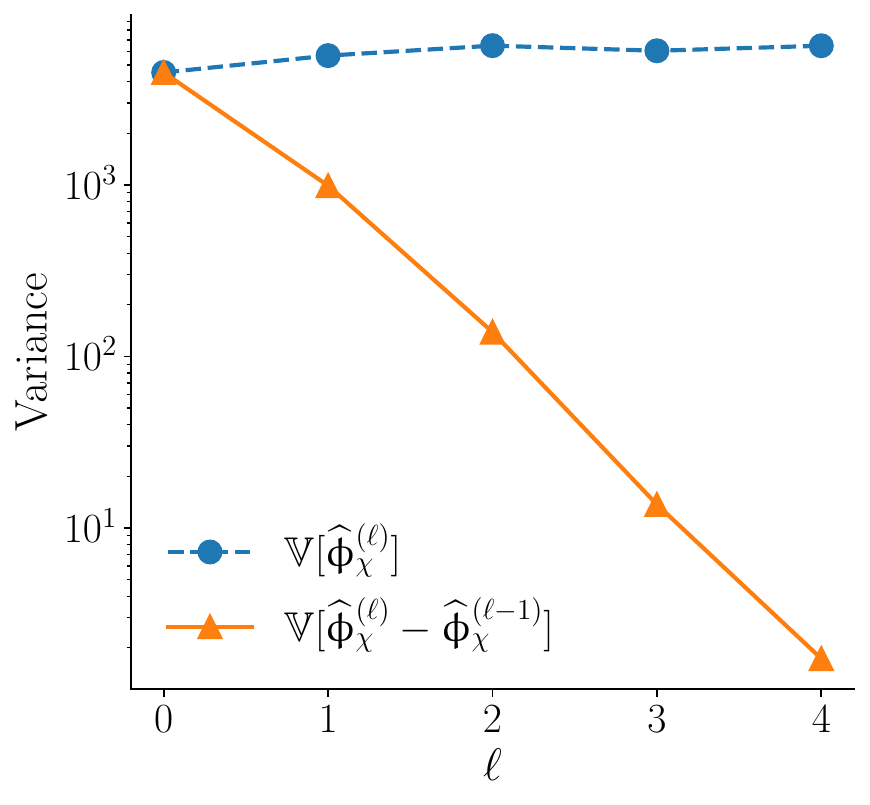}}
    \hskip 0.3cm
    \subfloat[Doping]{\label{fig:mlmcec:var:doping}\includegraphics[height=3.6cm]{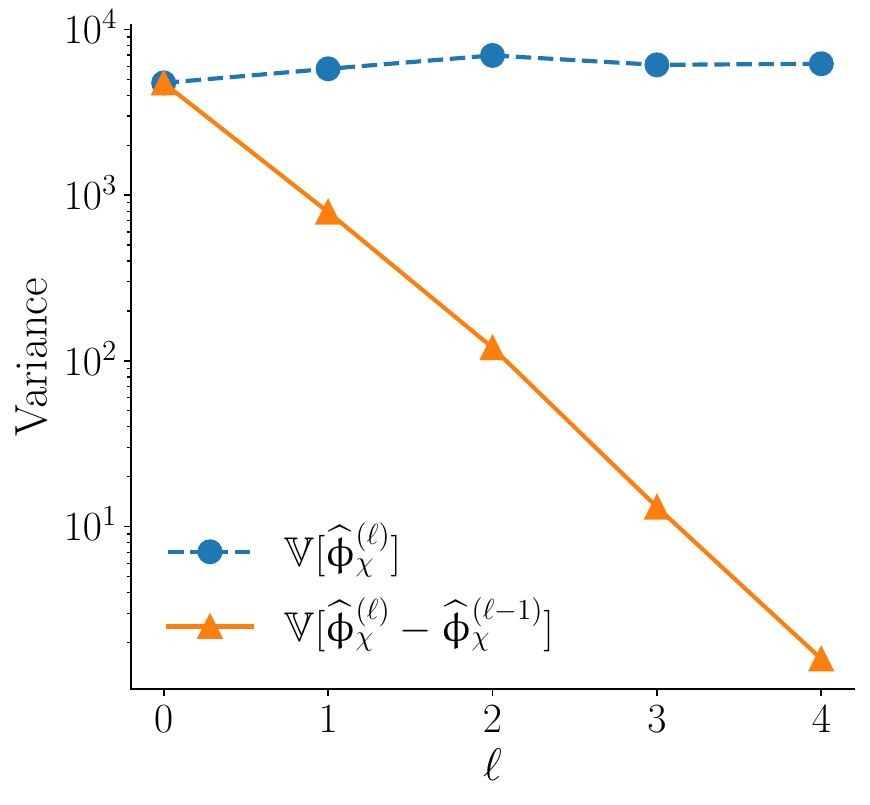}}
    \caption{Variance across levels for multilevel energy cutoffs.}
    \label{fig:mlmcec:var}
\centering
    \subfloat[Perfect lattice]{\label{fig:mlmcec:cost:graphene}\includegraphics[height=3.6cm]{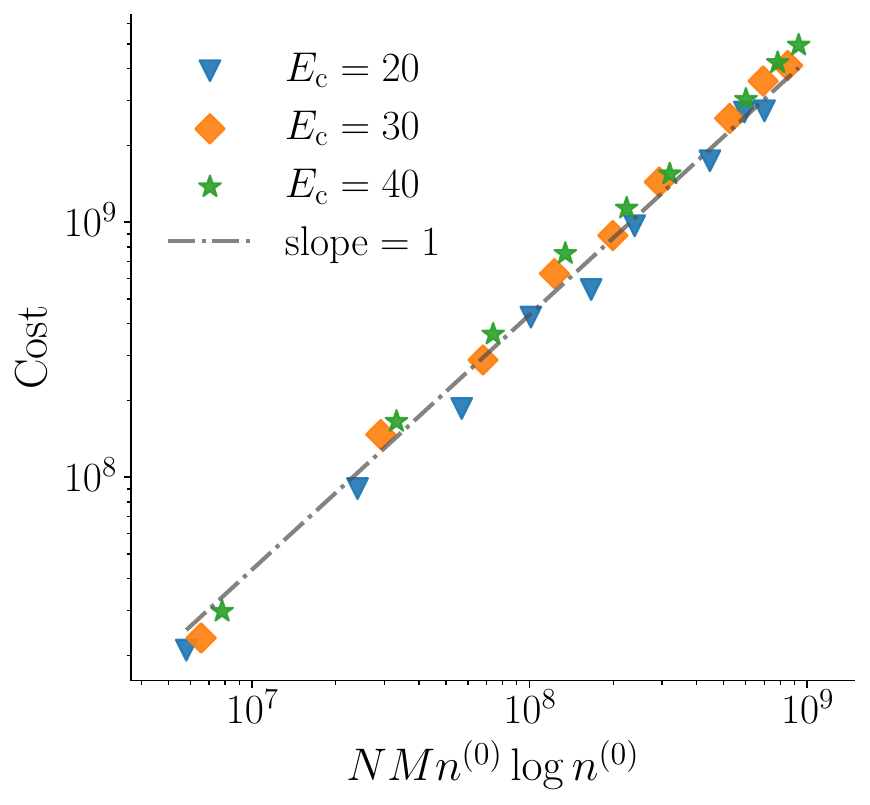}}
    \hskip 0.3cm
    \subfloat[Stone--Wales]{\label{fig:mlmcec:cost:stone_wales}\includegraphics[height=3.6cm]{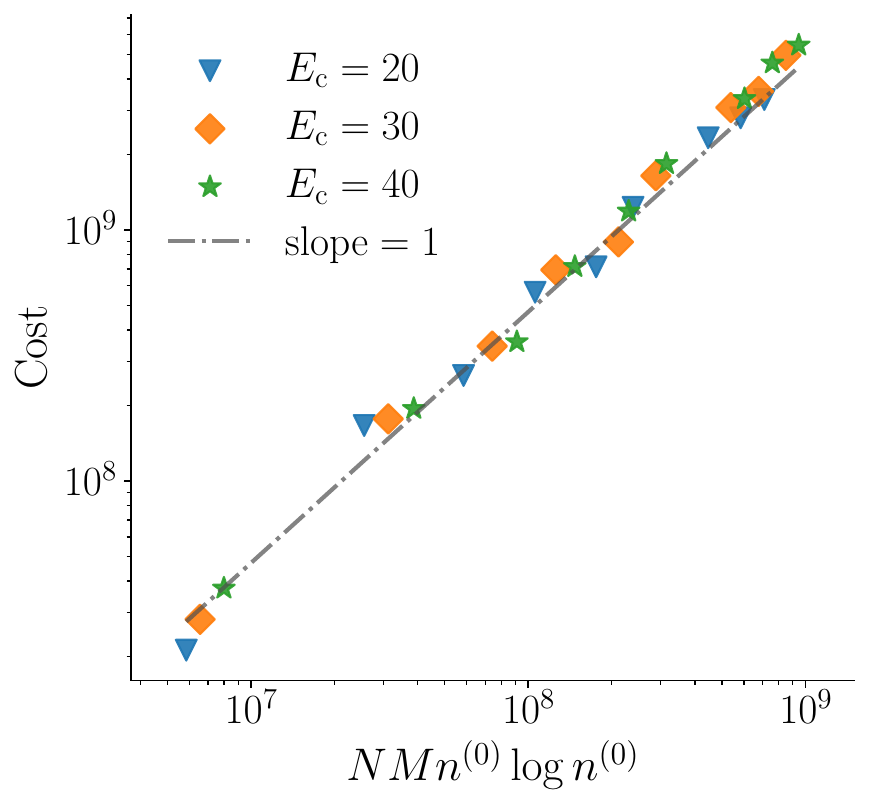}}
    \hskip 0.3cm
    \subfloat[Doping]{\label{fig:mlmcec:cost:doping}\includegraphics[height=3.5cm]{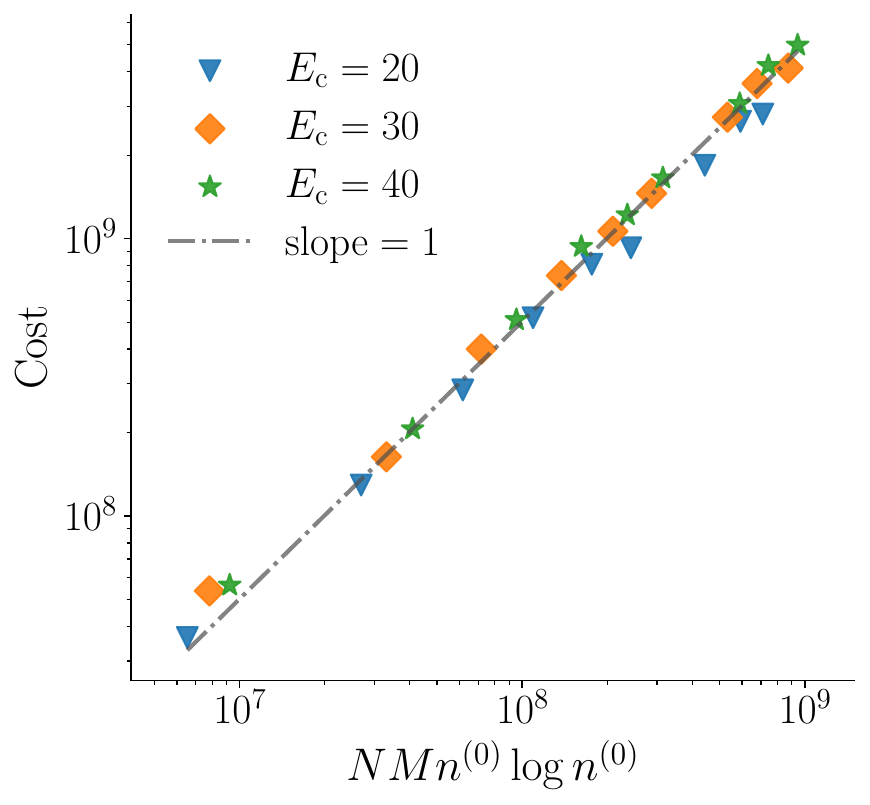}}
    \caption{Computational cost with respect to {$NMn^{(0)}\log n^{(0)}$}, for multilevel energy cutoffs.}
    \label{fig:mlmcec:cost}
\centering
    \subfloat[Perfect lattice]{\label{fig:mlmcpd:var:graphene}\includegraphics[height=3.6cm]{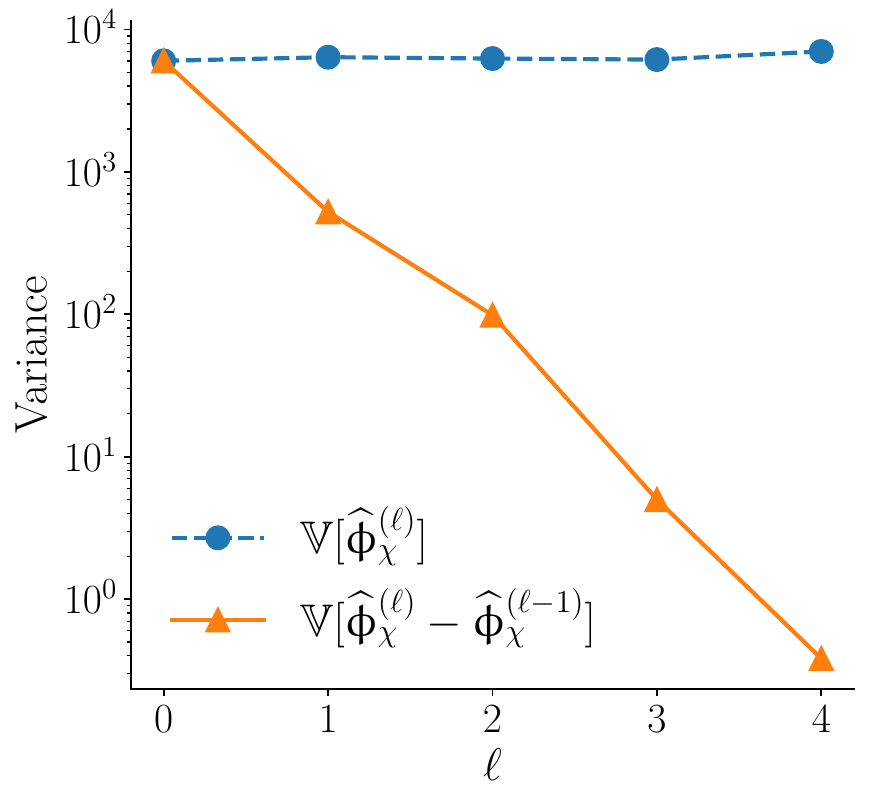}}
    \hskip 0.3cm
    \subfloat[Stone--Wales]{\label{fig:mlmcpd:var:stone_wales}\includegraphics[height=3.6cm]{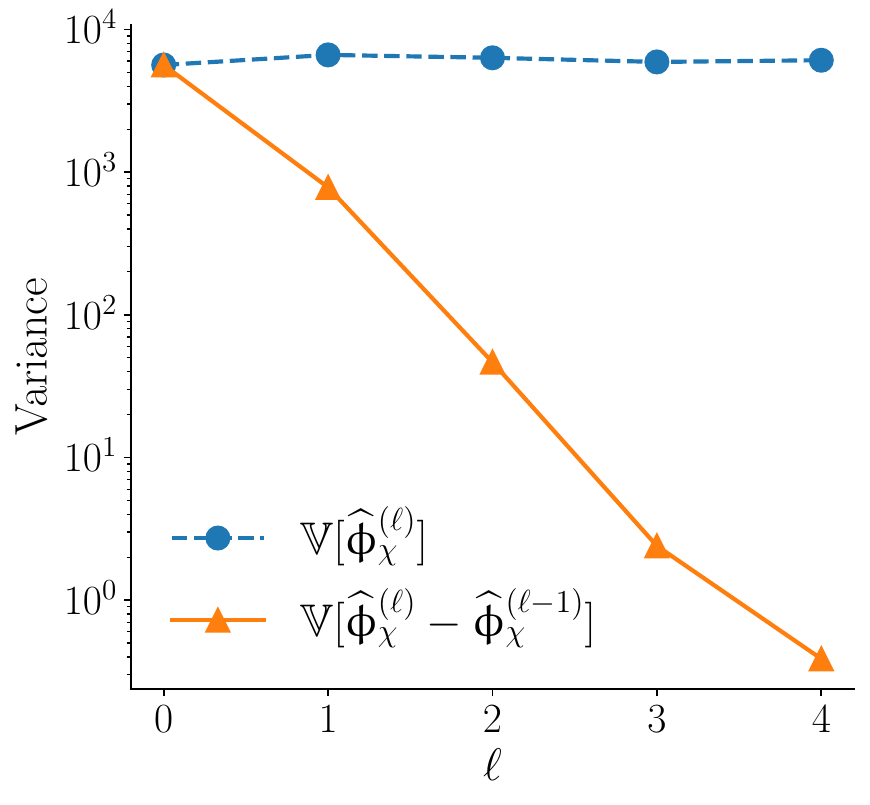}}
    \hskip 0.3cm
    \subfloat[Doping]{\label{fig:mlmcpd:var:doping}\includegraphics[height=3.6cm]{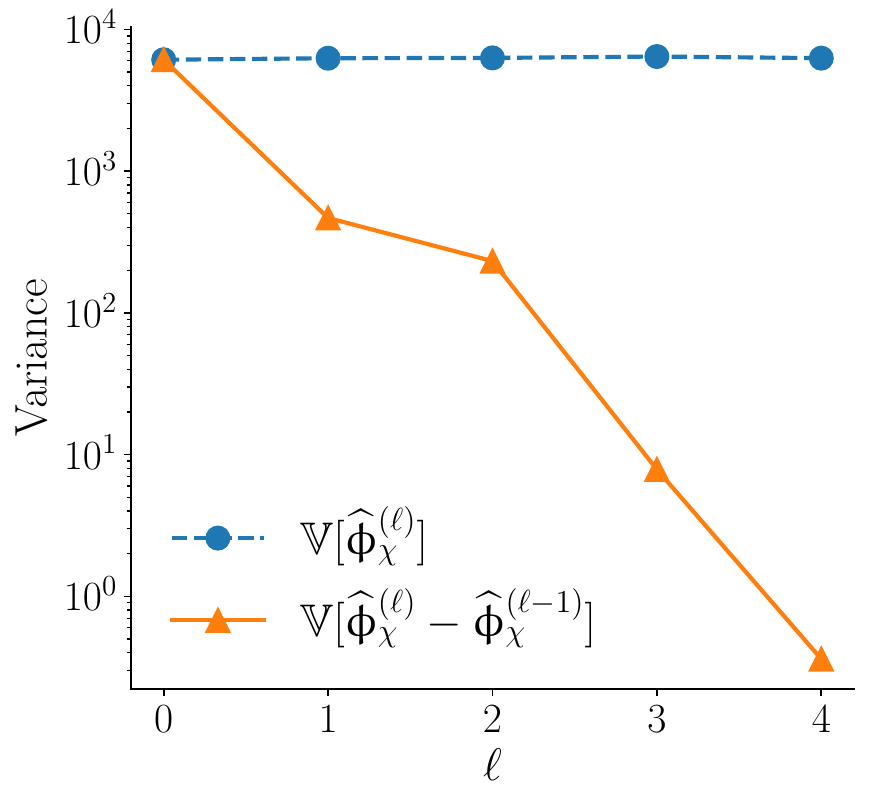}}
    \caption{Variance across levels for multilevel polynomial orders.}
    \label{fig:mlmcpd:var}
\centering
    \subfloat[Perfect lattice]{\label{fig:mlmcpd:cost:graphene}\includegraphics[height=3.6cm]{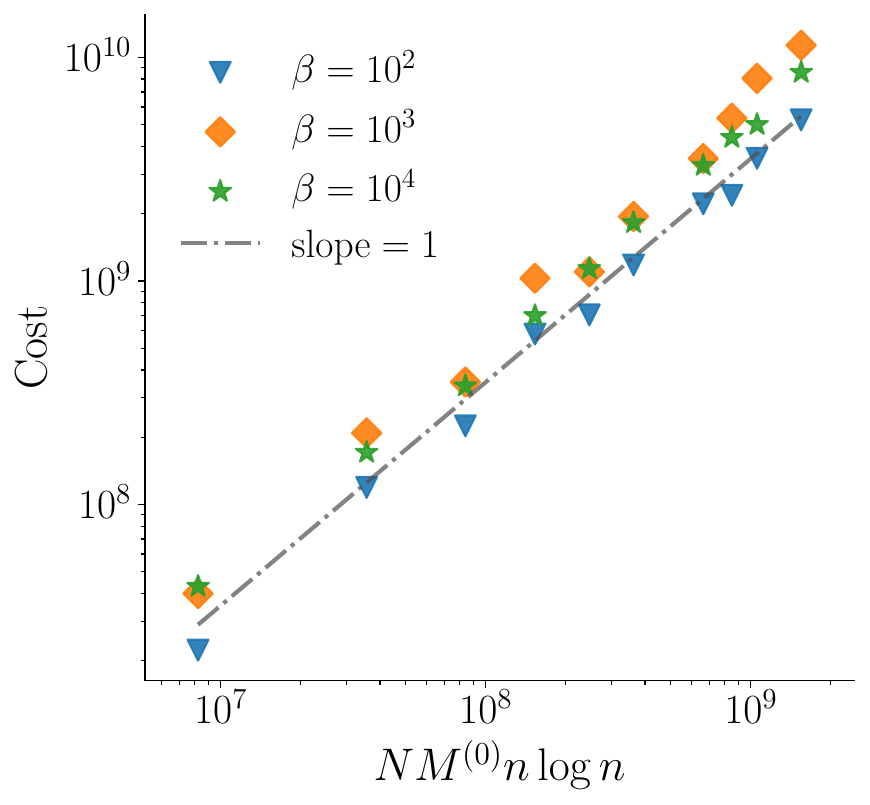}}
    \hskip 0.3cm
    \subfloat[Stone--Wales]{\label{fig:mlmcpd:cost:stone_wales}\includegraphics[height=3.6cm]{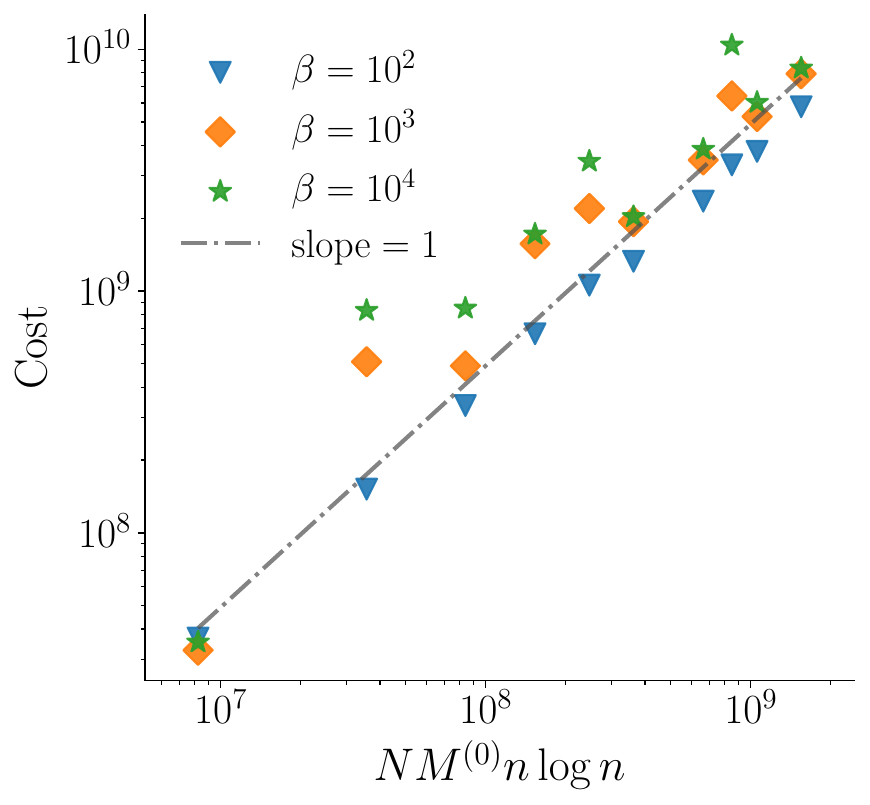}}
    \hskip 0.3cm
    \subfloat[Doping]{\label{fig:mlmcpd:cost:doping}\includegraphics[height=3.6cm]{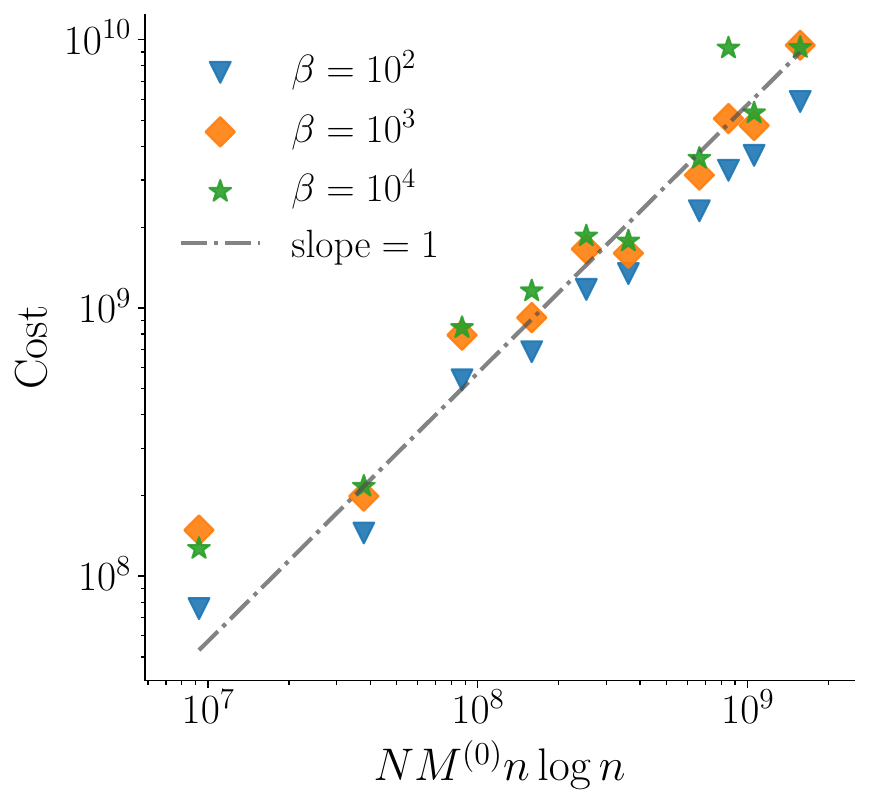}}
    \caption{Computational cost with respect to {$NM^{(0)}n\log n$}, for multilevel polynomial orders.}
    \label{fig:mlmcpd:cost}
\end{figure}

\vskip 0.2cm

\noindent
{\bf {Example 2 (3D systems).}}
{We consider 3D systems based on bulk silicon in the diamond-cubic structure: pristine silicon, boron-doped silicon (with a doping ratio of 2\%), and carbon-doped silicon (with a doping ratio of 2\%). 
The latter two systems are constructed by replacing some silicon atoms in the diamond-cubic silicon supercells with boron and carbon atoms, respectively. 
The system size $N$ is varied from 32 to 1152 through the construction of supercells with different sizes. 
All simulations for 3D systems are carried out on
one BSCC-T6 compute node equipped with two Intel Xeon Platinum 9242 CPUs at 2.30 GHz, providing 96 CPU cores and 384 GB of RAM. 
In each simulation, 24 CPU cores on a single node are used.

For DFT models, we use the analytic Goedecker--Teter--Hutter (GTH) pseudopotential together with the PBE-GGA exchange-correlation functional \cite{Perdew1996}.
In all simulations, only the $\Gamma$ point is used for $k$-point sampling in the Brillouin zone.}

{We first compare the computational efficiency of standard Kohn--Sham DFT and the two MLMC-based sDFT methods.
All tests are performed with energy cutoff $\Ec=25~{\rm Ha}$ and smearing parameter $\beta=10~{\rm Ha}^{-1}$. 
In the standard DFT calculations, the number of eigenvalues needed is chosen such that the occupation number of the highest band is below $10^{-6}$. 
For the sDFT calculations, the Chebyshev expansion order is chosen so that the truncation tolerance satisfies $|c_{m}|<10^{-6}$ for $m>M$, which gives $M=135$.
We target an $L^2$ density error of approximately $0.1$. In our tests, this is achieved by determining the number of stochastic orbitals at each level from \cref{num_orb} with sampling tolerance $\epsilon=0.5$.
For the energy-cutoff MLMC method, we use \cref{ecl:p} with $E_0$ chosen such that the corresponding number of degrees of freedom is approximately $4500$. We set $s=0.1$ and $p=1.7$, and vary the number of levels $L$ from $1$ to $4$. 
For the polynomial-order MLMC method, we use \cref{pdl:q} with $M_0$ chosen such that $|c_m|<10^{-2}$ for all $m>M_0$. We set $t=0$, $q=0.8$, and $L=2$.
The wall-time comparison and the corresponding density errors are reported in \Cref{fig:wall_time} and \Cref{fig:mlmc_error}, respectively.

\Cref{fig:wall_time} shows that the cost of standard Kohn--Sham DFT grows rapidly with system size, whereas the MLMC-based sDFT methods remain substantially more efficient for larger systems.
\Cref{fig:mlmc_error} further shows that both MLMC methods maintain the prescribed density accuracy across all three silicon systems, with only small fluctuations due to stochastic sampling. 
These results demonstrate that the proposed MLMC strategies reduce the computational cost while controlling the density error.}

{We finally compare in \Cref{fig:speed_up} the two MLMC calculations with the corresponding single-level sDFT method using the same finest discretization parameters. 
We see that both MLMC strategies provide clear speed-ups over single-level sDFT. 
The speed-up of the energy-cutoff MLMC method increases with system size, reaching about $7$--$9$ for the largest systems tested, whereas that of the polynomial-order MLMC method remains around $2$--$3$.
This difference is expected in the present setting. The energy-cutoff hierarchy reduces the spatial discretization cost on coarse levels and hence weakens the dependence on the system size $N$, whereas the polynomial-order hierarchy uses the same plane-wave basis and only reduces the polynomial degree. Since the finest polynomial order $M=135$ used here is moderate, the potential gain from the polynomial-order hierarchy is limited. We expect this hierarchy to become more beneficial in lower-temperature calculations, where a larger polynomial degree is required.
}
 
\begin{figure}[htbp!]
    \centering
    \subfloat[Pristine silicon]{\includegraphics[height=3.7cm]{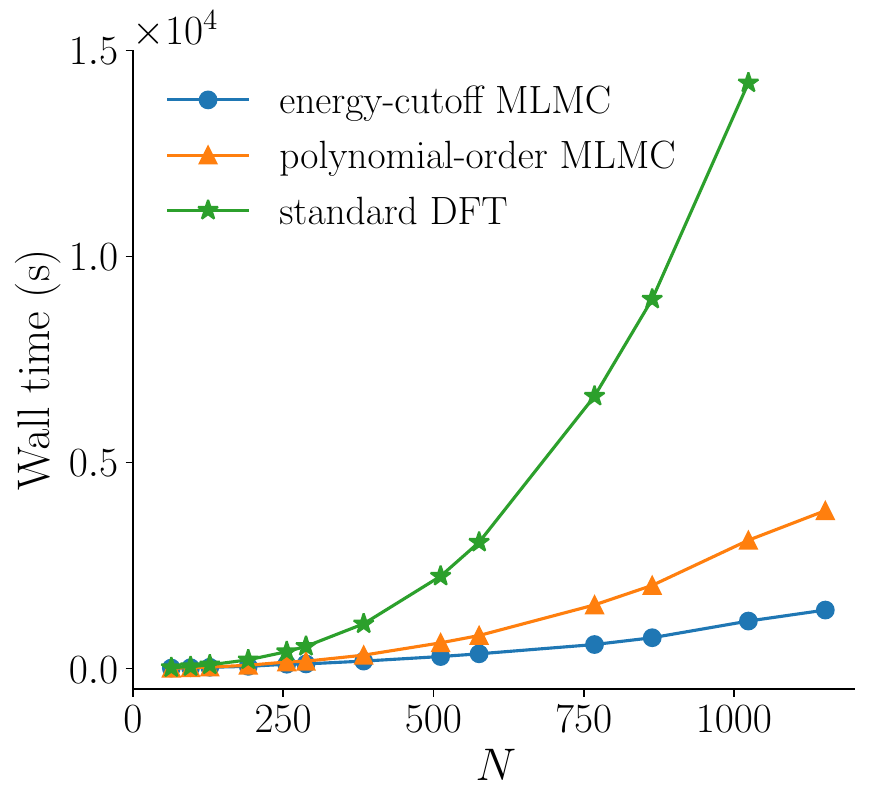}}
    \hskip 0.3cm
    \subfloat[Boron-doped silicon]{\includegraphics[height=3.7cm]{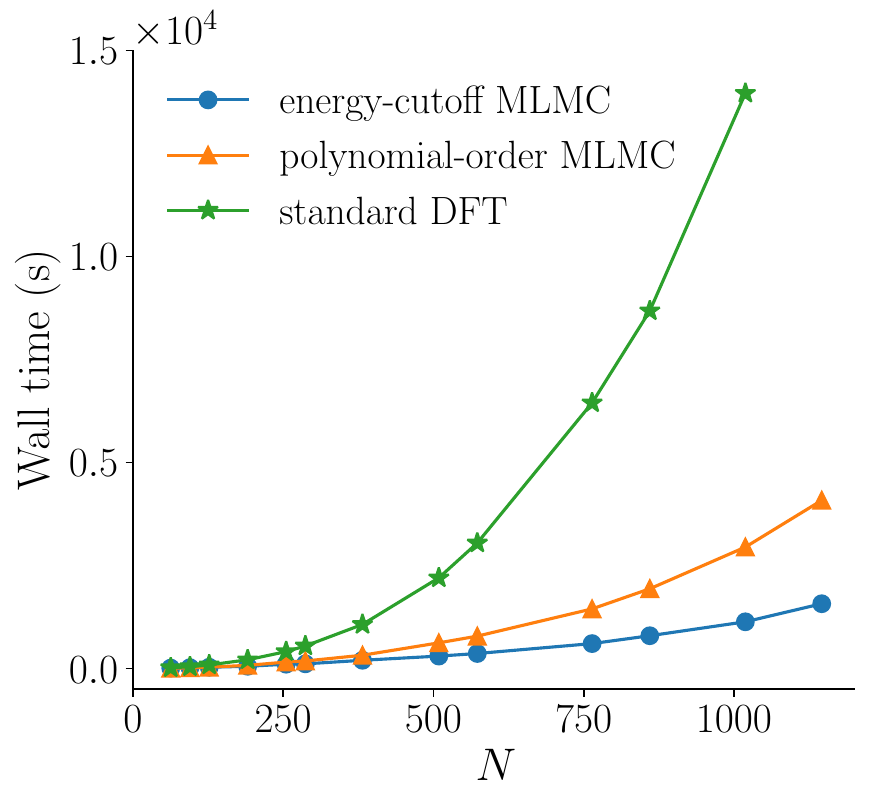}}
    \hskip 0.3cm
    \subfloat[Carbon-doped silicon]{\includegraphics[height=3.7cm]{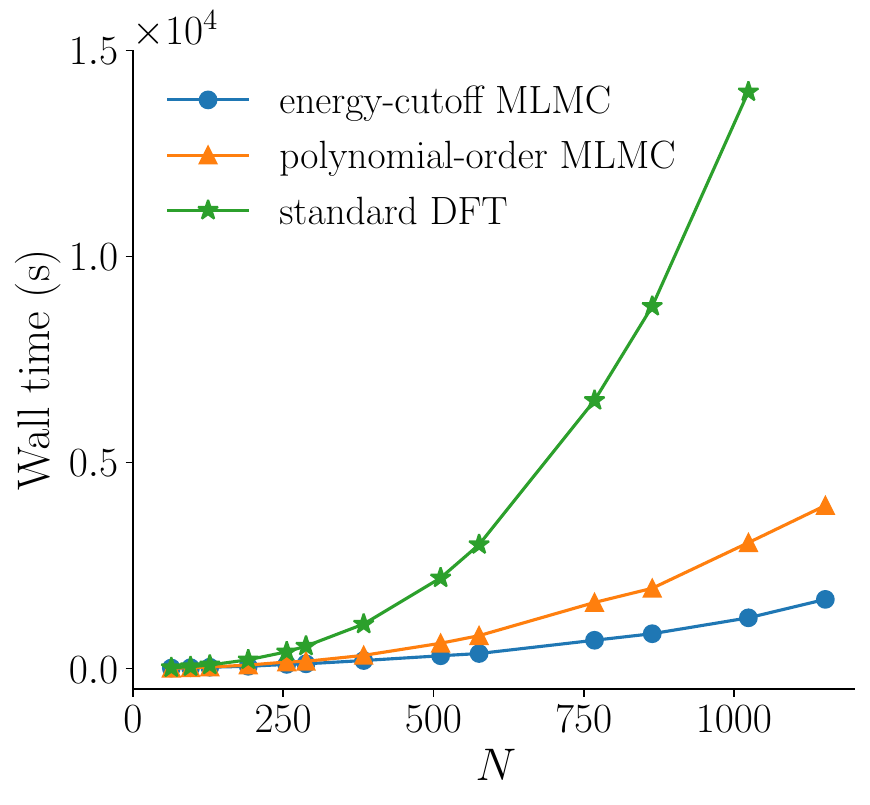}}
    \caption{Wall time comparison between standard Kohn--Sham DFT calculations and the two MLMC calculations.}
    \label{fig:wall_time}
    \centering
    \subfloat[Pristine silicon]{\includegraphics[height=3.7cm]{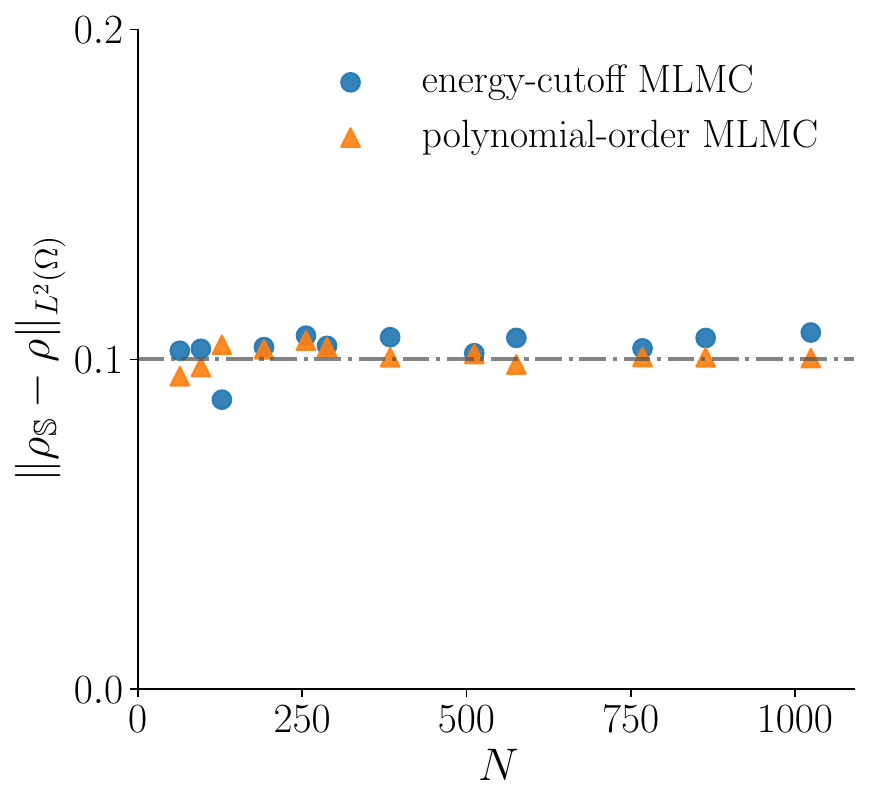}}
    \hskip 0.3cm
    \subfloat[Boron-doped silicon]{\includegraphics[height=3.7cm]{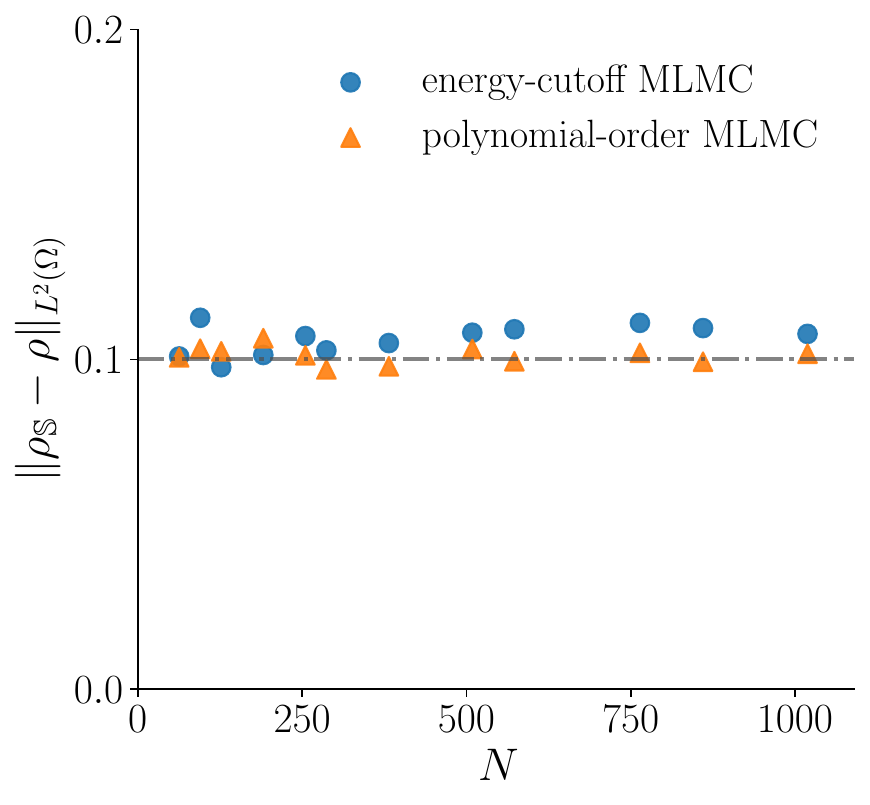}}
    \hskip 0.3cm
    \subfloat[Carbon-doped silicon]{\includegraphics[height=3.7cm]{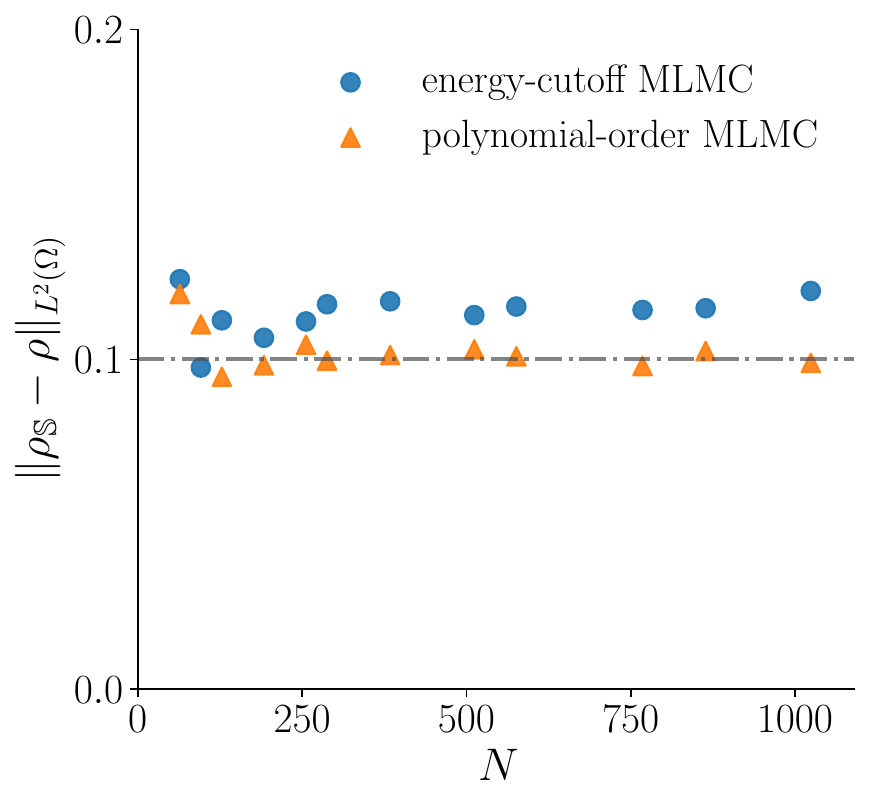}}
    \caption{The $L^2$-errors of electron density for the two MLMC calculations. 
    The results from standard Kohn--Sham DFT calculations are taken as the reference density. 
    The dashed lines indicate the target accuracy.}
    \label{fig:mlmc_error}
    \centering
    \subfloat[Pristine silicon]{\includegraphics[height=3.6cm]{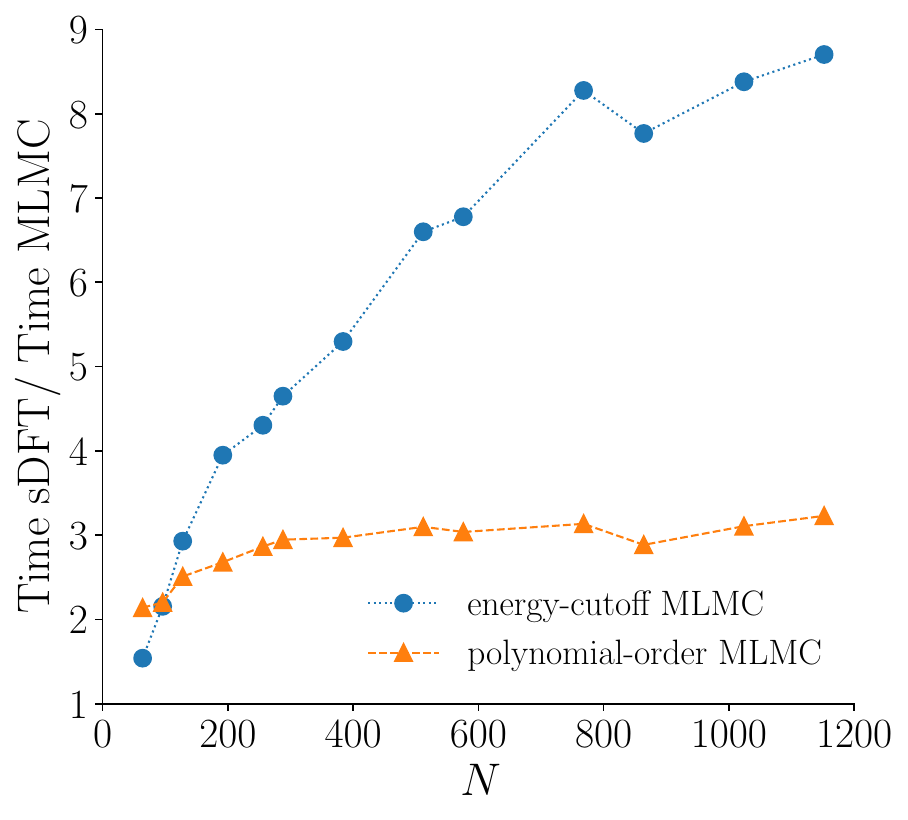}}
    \hskip 0.3cm
    \subfloat[Boron-doped silicon]{\includegraphics[height=3.6cm]{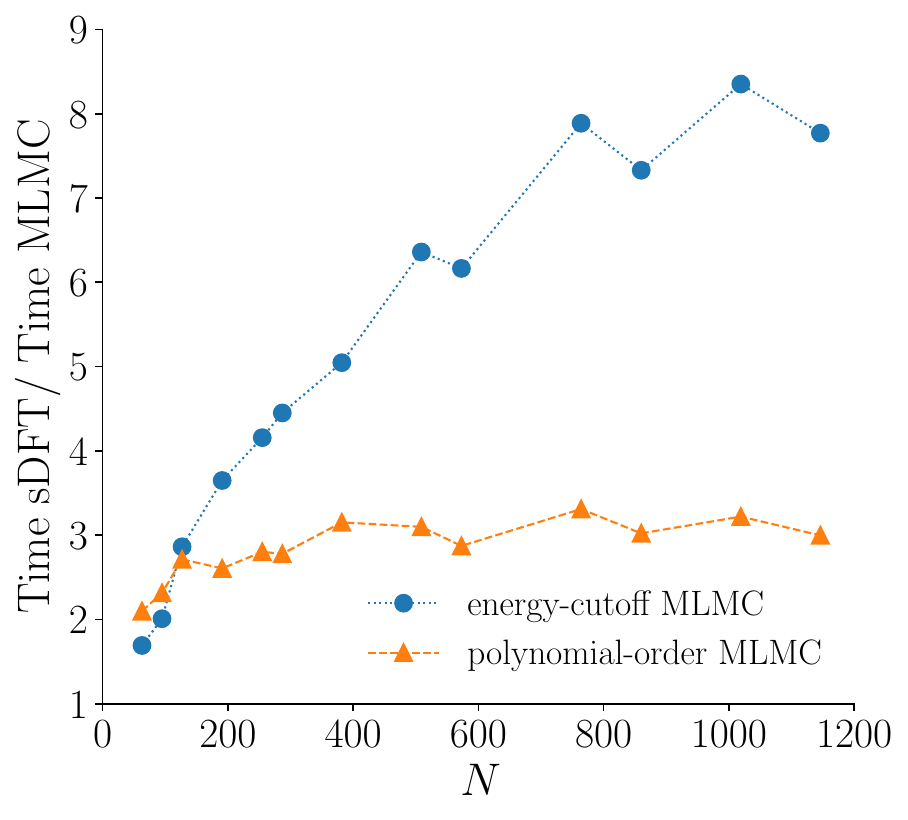}}
    \hskip 0.3cm
    \subfloat[Carbon-doped silicon]{\includegraphics[height=3.6cm]{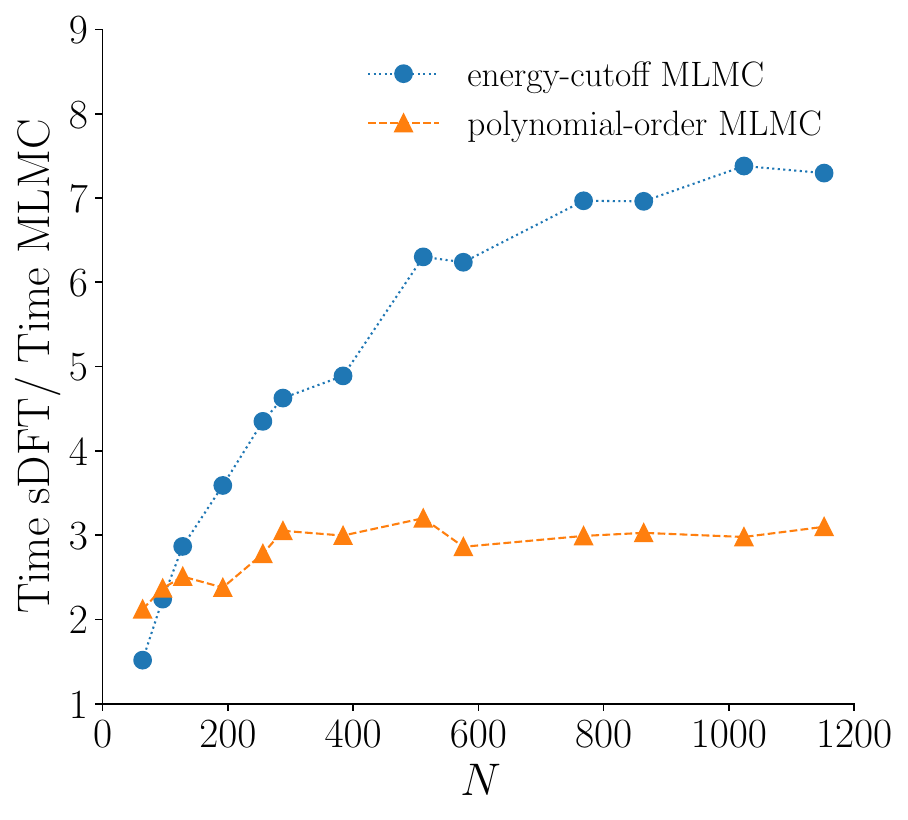}}
    \caption{Speed-up factors of the two MLMC calculations relative to the corresponding single-level sDFT method using the same finest discretization parameters. 
    Values larger than one indicate a reduction in computational time.}
    \label{fig:speed_up} 
\end{figure}

\section{Conclusions {and perspectives}}
\label{sec:conclusion}

In this work, we study the sDFT method under a plane-wave discretization and explore various variance reduction strategies within the MLMC framework. 
In particular, we develop two hierarchical approaches based on plane-wave cutoffs and Chebyshev expansion orders, which render the computational cost independent of the discretization size and the temperature, respectively. 
Our analysis also provides theoretical insights into existing variance-reduction techniques in sDFT, such as the energy-window method \cite{EWChenBaer2019} and the tempering method \cite{Nguyen2021}.

{Our future work is to combine the two hierarchies developed in this paper into a multi-index Monte Carlo framework \cite{Haji-Ali_2016}. 
In the present work, the plane-wave cutoff and the Chebyshev polynomial degree are treated separately as refinement parameters in two single-index MLMC constructions. 
A multi-index formulation would instead introduce a hierarchy indexed simultaneously by these two parameters. 
This would make it possible to balance the discretization error from the plane-wave approximation with the polynomial approximation error in the evaluation of the density matrix, and to optimize the computational budget jointly across the two directions. 
Such a formulation may provide a more efficient variance-reduction scheme than treating the two hierarchies independently, and could be further extended to other discretization parameters, such as $k$-point meshes.
}

{\section*{Acknowledgments}
This work was supported by the National Natural Science Foundation of China under grant No. 12371431.
The authors are grateful to the anonymous reviewers for their careful reading of the manuscript and their valuable suggestions, which helped improve the paper.}

\appendix

\section{Proofs}
\label{sec:profs}
\renewcommand{\theequation}{A.\arabic{equation}}
\setcounter{equation}{0}

\subsection{Proof of \texorpdfstring{\Cref{lemma:var:sdft}}{lemma:var:sdft}}
\label{proof:lemma:var:sdft}

Let $\chi$ be the random orbital satisfying \cref{ass:randorb}.
Denote by $A:=\ff(H[\dm])$, $B:=AA^\ct$, and $B_\chi :=(A\chi)(A\chi)^\ct$.
Then to show \cref{var1}, it is only necessary to prove
\begin{align}
\label{proof:AB}
\Var\big[B_\chi\big]
& = \sum_{i,j=1,i\neq j}^\dof B_{ii}B_{jj}
+\sum_{i,j=1,i\neq j}^\dof B_{ij}^2\E\big[\overline{\chi}_i^2\big]\E\big[\chi_j^2\big]
+\sum_{i=1}^\dof B^2_{ii}\Big(\E\big[|\chi_i|^4\big]-1\Big) .
\end{align}

We first express the matrix elements of $B$ and $B_\chi$ explicitly
\begin{equation*}
B_{ij} =  \sum_{k}A_{ik}\overline{A}_{jk}
\quad{\rm and}\quad
(B_\chi)_{ij} = \big((A\chi)(A\chi)^\ct\big)_{ij}=\sum_{k,\ell}A_{ik}\overline{A}_{j\ell}\chi_k\overline{\chi}_\ell .
\end{equation*}
By using \cref{ass:randorb}, we can derive the expectation and variance of $(B_\chi)_{ij}$ as
\begin{align}
\nonumber
\E\big[(B_\chi)_{ij}\big] & = \sum_{k,\ell}A_{ik}\overline{A}_{j\ell}\E\big[\chi_k\overline{\chi}_\ell\big] = \sum_k A_{ik}\overline{A}_{jk} 
\qquad{\rm and}
\\[1ex]
\nonumber
\Var\big[(B_\chi)_{ij}\big] & = \E\big[\big|(B_\chi)_{ij}\big|^2\big]-\big|\E\big[(B_\chi)_{ij}\big]\big|^2
\\[1ex]
\nonumber
& = \sum_{k,\ell,s,t}\overline{A}_{ik}{A}_{j\ell}A_{is}\overline{A}_{jt}\E\big[\overline{\chi}_k\chi_\ell \chi_s\overline{\chi}_t\big]-\sum_{k,s} \overline{A}_{ik}{A}_{jk}  A_{is}\overline{A}_{js} 
\\[1ex]
\label{var:proof:1}
& = \sum_{k\neq \ell,s\neq t}\overline{A}_{ik}A_{j\ell}A_{is}\overline{A}_{jt}\E[\overline{\chi}_k\chi_\ell  \chi_s\overline{\chi}_t] + \sum_k |A_{ik}|^2|A_{jk}|^2\Big(\E\big[|\chi_k|^4\big]-1\Big).
\end{align}
To estimate the first term of \cref{var:proof:1}, we have from \cref{ass:randorb} that $\E\big[\overline{\chi}_k\chi_\ell  \chi_s\overline{\chi}_t\big] \neq 0$ only if $(k,\ell)=(s,t)$ or $(k,\ell)=(t,s)$.
This indicates
\begin{align}
\nonumber
& \sum_{k\neq \ell,s\neq t}\overline{A}_{ik}A_{j\ell}A_{is}\overline{A}_{jt}\E\big[\overline{\chi}_k\chi_\ell \chi_s\overline{\chi}_t\big]
\\[1ex]
\nonumber
= & \sum_{k\neq \ell}|A_{ik}|^2|A_{j\ell}|^2 \E\big[\overline{\chi}_k\chi_\ell \chi_k\overline{\chi}_\ell\big] + \sum_{k\neq \ell}\overline{A}_{ik}A_{j\ell}A_{i\ell}\overline{A}_{jk}\E\big[\overline{\chi}_k\chi_\ell \chi_\ell \overline{\chi}_k\big]
\\[1ex]
\label{var:proof:2}
= & \sum_{k\neq \ell}|A_{ik}|^2|A_{j\ell}|^2 + \sum_{k\neq \ell}\overline{A}_{ik}A_{j\ell}A_{i\ell}\overline{A}_{jk} \E\big[\overline{\chi}_k^2\big]\E\big[\chi_\ell^2\big] .
\end{align}

Note that $H[\dm]\in\D$ implies $A\in\D$, therefore we have $A=A^\ct$ and 
\begin{align}
\label{var:proof:3}
\sum_{i,j}|A_{ik}|^2|A_{j\ell}|^2 & = \bigg(\sum_i |A_{ki}|^2\bigg)\bigg(\sum_j |A_{\ell j}|^2\bigg)=B_{kk}B_{\ell \ell },
\\[1ex]
\label{var:proof:4}
\sum_{i,j}\overline{A}_{ik}A_{j\ell}A_{i\ell}\overline{A}_{jk} & = \bigg(\sum_i A_{ki}\overline{A}_{\ell i}\bigg)\bigg(\sum_j A_{kj}\overline{A}_{\ell j}\bigg)=B^2_{k\ell} .
\end{align}
By taking into accounts \cref{var:proof:1}-\cref{var:proof:4}, we obtain
\begin{align*}
\nonumber
\Var\big[B_\chi\big] = \sum_{i,j}\Var\big[(B_\chi)_{ij}\big] 
= \sum_{k\neq \ell}B_{kk}B_{\ell\ell} + \sum_{k\neq \ell}B^2_{k\ell}\E\big[\overline{\chi}_k^2\big]\E\big[\chi_\ell ^2\big] +\sum_k B^2_{kk} \Big(\E\big[|\chi_k|^4\big]-1\Big) 
\end{align*}
which completes the proof of \eqref{proof:AB} (and hence \cref{var1}).

The result \cref{var2} follows immediately from \cref{var1} if the random orbital $\chi_i$ additionally satisfy the conditions in \cref{chi:ass:2}.
This completes the proof.
\hfill$\proofbox$

\subsection{Proof of \texorpdfstring{\Cref{lemma:ml_var}}{lemma:ml:var}}
\label{proof:lemma:ml_var}

For simplicity of the presentation, we will denote $\PE(H[\dm])$ by $\PM$ in this proof.
We have from \cref{mlvell} that 
\begin{align}
\label{var_ml2}
\mlv_\ell = \Var\big[\lmls{\ell}-\lmls{\ell-1}\big]
\leq \left(\sqrt{\Var\big[\lmls{\ell}-\pmp\big(\dm,\chi,\PE\big)\big]} + \sqrt{\Var\big[\lmls{\ell-1}-\pmp\big(\dm,\chi,\PE\big)\big]}\right)^2.
\end{align}
By using the formulas \cref{dm:rand:2} for $\pmp\big(\dm,\chi,\PE\big)$ and \cref{sdm_l} for $\lmls{\ell}$, we have 
\begin{align*}
& \Var\big[\lmls{\ell}-\pmp\big(\dm,\chi,\PE\big)\big] 
\\[1ex]
= \;& \Var\Big[ \big(\ql{\ell}\chi\big)\big(\ql{\ell}\chi\big)^\ct - \big(\PM\chi\big)\big(\PM\chi\big)^\ct \Big] 
\\[1ex]
=\;&\Var\Big[\Big(\ql{\ell}\chi\Big)\Big(\big(\ql{\ell}-\PM\big)\chi\Big)^\ct+\Big(\big(\ql{\ell}-\PM\big)\chi\Big)\Big(\PM\chi\Big)^\ct\Big] 
\\[1ex]
\leq\; & \left(\sqrt{\Var\Big[\Big(\ql{\ell}\chi\Big)\Big(\big(\ql{\ell}-\PM\big)\chi\Big)^\ct\Big]} + \sqrt{\Var\Big[\Big(\big(\ql{\ell}-\PM\big)\chi\Big)\Big(\PM\chi\Big)^\ct\Big]}\right)^2 .
\end{align*}
{For $\chi\in\C^n$ additionally satisfying \cref{chi:ass:2},} by using a proof similar to that of \Cref{lemma:var:sdft}, we can obtain
\begin{align*}
\Var\Big[\Big(\ql{\ell}\chi\Big)\Big(\big(\ql{\ell}-\PM\big)\chi\Big)^\ct\Big]
& \leq \tr\Big(\big(\ql{\ell}\big)^2\Big)\tr\Big(\big(\ql{\ell}-\PM\big)^2\Big)
\qquad{\rm and}
\\[1ex]
\Var\Big[ \Big(\big(\ql{\ell}-\PM\big)\chi\Big)\Big(\PM\chi\Big)^\ct\Big]
& \leq \tr\big(\Phi_M(\dm)\big)\tr\Big(\big(\ql{\ell}-\PM\big)^2\Big).
\end{align*}
Since $\tr\Big(\big(\ql{\ell}\big)^2\Big)=\Oc(N)$ and $\tr\big(\Phi_M(\dm)\big)=\Oc(N)$, we have
\begin{align*}
\Var[\lmls{\ell}-\pmp\big(\dm,\chi,\PE\big)] 
\leq C N \big\|\ql{\ell}-\PM\big\|^2_{\F}.
\end{align*}
This together with \cref{var_ml2} implies
\begin{align*}
\mlv_\ell \leq C N\left(\big\|\ql{\ell}-\PM\big\|_{\F} + \big\|\ql{\ell-1}-\PM\big\|_{\F}\right)^2
\leq C N\big\|\ql{\ell-1}-\PM\big\|^2_{\F},
\end{align*}
where the second inequality follows from the fact that the accuracy of $\ql{\ell}$ increases with $\ell$. 
This completes the proof.
\hfill$\proofbox$

\subsection{Proof of \texorpdfstring{\Cref{thm:mlmc_ec}}{thm:mlmc:ec}}
\label{proof:thm:mlmc_ec}

According to \Cref{lemma:ml_var}, the variance at level $\ell$ is controlled by the approximation error of $\ql{\ell-1}$. 
{Therefore, the proof of \Cref{thm:mlmc_ec} amounts to deriving an exponentially decaying error estimate for $\ff(\Hlm[\dm])$ as the energy cutoff $\Ecl{\ell-1}$ increases. This relies on two observations of its matrix elements: convergence with respect to the cutoff and exponential off-diagonal decay. These are established in \Cref{app:lemma:trunc_converge} and \Cref{app:lemma:G_decay}, respectively.

In the rest of this section, we will denote $H[\dm]$ by $H$ for simplicity of the presentation. 
Let $\widetilde{\Hlm}$ be the zero-padded matrix of $\Hlm$ over $\C^{n\times n}$, that is
\begin{equation} 
\label{Htrun}
\widetilde{\Hlm}_{\G\G'} := 
\left\{\begin{array}{ll} H_{\G\G'} \qquad &{\rm if}~ |\G|^2,|\G'|^2\leq 2\Ecl{\ell-1},
\\[1ex]
0 & {\rm otherwise}.
\end{array}
\right. 
\end{equation}
}

\begin{lemma}
\label{app:lemma:trunc_converge}
There exist positive constants $C$ and $\gamma$ depending on $\beta$ such that 
\begin{equation}
\label{cut_decay}
\left|{\ff(\widetilde{\Hlm})}_{\G\G'} - \ff(H)_{\G\G'}\right|\leq C e^{-\gamma \sqrt{\Ecl{\ell-1}}}
\end{equation}
for any $1\leq\ell\leq L$ and $|\G|^2, |\G'|^2\leq \Ecl{\ell-1}$.
\end{lemma}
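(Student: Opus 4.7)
The plan is to use the Dunford--Taylor contour integral representation to reduce the lemma to an off-diagonal decay estimate for resolvents. Since $\ff$ is analytic in a neighborhood of $\sigma(H) \cup \sigma(\widetilde{\Hlm}) \subset [-1,1]$, with its nearest singularities being the poles of the Fermi--Dirac distribution at $\mu + i\pi(2k+1)/\beta$ together with branch points of the square root, I would pick a closed contour $\Gamma$ surrounding $[-1,1]$ and lying strictly inside this domain of analyticity, and write
$$\ff(H) - \ff(\widetilde{\Hlm}) = \frac{1}{2\pi i} \oint_{\Gamma} \ff(z) \bigl[(zI-H)^{-1} - (zI-\widetilde{\Hlm})^{-1}\bigr]\,{\rm d}z.$$
The $\beta$-dependence of the constants $C$ and $\gamma$ in \eqref{cut_decay} will then enter through the distance from $\Gamma$ to the Matsubara poles and through $\|\ff\|_{L^\infty(\Gamma)}$.

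Next, I would apply the second resolvent identity $(zI-H)^{-1} - (zI-\widetilde{\Hlm})^{-1} = (zI-H)^{-1}(H-\widetilde{\Hlm})(zI-\widetilde{\Hlm})^{-1}$ and examine the $(\G,\G')$ entry for $|\G|^2, |\G'|^2 \leq \Ecl{\ell-1}$. The block-diagonal structure of $\widetilde{\Hlm}$ (with $\Hlm$ on $X_{\Ecl{\ell-1}}$ and zero on its orthogonal complement) makes its resolvent block-diagonal as well, so $[(zI-\widetilde{\Hlm})^{-1}]_{\K'\G'}$ vanishes unless $|\K'|^2 \leq 2\Ecl{\ell-1}$. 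Combined with $(H - \widetilde{\Hlm})_{\K\K'} = 0$ whenever both indices sit inside the cutoff region, this forces the remaining summation index to satisfy $|\K|^2 > 2\Ecl{\ell-1}$, producing a geometric separation $|\G - \K| \geq (\sqrt{2}-1)\sqrt{\Ecl{\ell-1}}$ between the target row $\G$ and the indices that actually contribute.

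The crucial ingredient is then a Combes--Thomas-type exponential off-diagonal estimate for the resolvent, namely
$$\bigl|[(zI-H)^{-1}]_{\G\K}\bigr| \leq C e^{-\gamma |\G - \K|} \qquad \forall~z\in\Gamma,$$
which I would establish via conjugation of $H$ by an exponential weight $e^{\eta\cdot\G}$, controlling the commutator through the exponential decay of the Fourier coefficients $\hat V(\G-\G')$ of the analytic pseudopotential $v_{\rm ext}$ together with the diagonal dominance of the kinetic term $|\G|^2/2$. Substituting this bound, using that $|(H-\widetilde{\Hlm})_{\K\K'}| = |\hat V(\K-\K')|$ is summable in $\ell^1$ along rows, and that $(zI-\widetilde{\Hlm})^{-1}$ is uniformly bounded in operator norm on $\Gamma$, one extracts the factor $e^{-\gamma(\sqrt{2}-1)\sqrt{\Ecl{\ell-1}}}$. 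Integrating along $\Gamma$ and redefining $\gamma \leftarrow \gamma(\sqrt{2}-1)$ and absorbing the length of $\Gamma$ and $\|\ff\|_{L^\infty(\Gamma)}$ into $C$ then gives \eqref{cut_decay}.

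The main obstacle will be the Combes--Thomas estimate itself, both because its rate $\gamma$ must be controlled uniformly in $z\in\Gamma$ (hence uniformly in $\beta$ up to the explicit dependence we want to track), and because the plane-wave matrix of $H$ is not banded in the usual sense -- the decay comes entirely from the regularity of $v_{\rm ext}$ via $\hat V$. A careful choice of the weight parameter $\eta$ relative to the decay rate of $\hat V$ and to $\mathrm{dist}(\Gamma,\sigma(H))$ will be required to make the argument work, and one must verify that the resulting constants are independent of both the refinement level $\ell$ and the overall cutoff $\Ec$.
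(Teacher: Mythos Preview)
Your proposal is correct and follows essentially the same route as the paper: contour integral representation of $\ff(H)-\ff(\widetilde{\Hlm})$, the second resolvent identity, and a Combes--Thomas off-diagonal decay estimate for the resolvent derived from the exponential decay of the Fourier coefficients of the (pseudo)potential. The only cosmetic difference is that the paper applies Combes--Thomas to both $(z-H)^{-1}$ and $(z-\widetilde{\Hlm})^{-1}$ and sums over the two index ranges where $(H-\widetilde{\Hlm})_{\G''\G'''}\neq 0$, whereas you exploit the block-diagonal structure of $(z-\widetilde{\Hlm})^{-1}$ to reduce to a single Combes--Thomas bound plus an operator-norm estimate; both variants yield the same exponential factor.
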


\begin{proof}
Let $\mathfrak{d}(\cdot)$ denote the spectrum of a matrix.
Then we have from \eqref{Htrun} that $\mathfrak{d}(\widetilde{\Hlm}) = \mathfrak{d}(\Hlm)\cup\{0\}$.
Let $\cont$ be a contour that encloses the spectrum of $\widetilde{\Hlm}$ and $H$, and avoids all the singularities of $\ff$, while satisfying 
\begin{equation}
\label{cont_dist}
\min\left\{{\rm dist}\big(\cont,\mathfrak{d}(H)\big),~ {\rm dist}\big(\cont,\mathfrak{d}(\widetilde{\Hlm})\big)\right\}\geq\sigma_\beta .
\end{equation}
Here $\sigma_\beta$ depends on $\beta$.

Using the {exponential decay assumption for the Fourier coefficients} of the potentials of the Kohn--Sham Hamiltonian in \cref{KSham}, we have that there exists some $\gamma_h>0$ such that
\begin{equation}
\label{decay:H}
\Big|\widetilde{\Hlm}_{\G\G'}\Big|\leq Ce^{-\gamma_h|\G-\G'|}\quad{\rm and }\quad\Big|H_{\G\G'}\Big|\leq Ce^{-\gamma_h|\G-\G'|} .
\end{equation}
Then by using \cref{cont_dist}, \cref{decay:H} and a Combes-Thomas type estimate \cite{combes1973asymptotic} (see also similar arguments in \cite[Lemma 6]{ChenOrtner2017}), we have that there exists constants $C>0$ and $\gamma_{h,\beta}>0$ depending on $\beta$, such that for any $z\in\cont$,
\begin{equation}
\label{RH}
\Big|\big(z-\widetilde{\Hlm}\big)^{-1}_{\G\G'}\Big| \leq C e^{-\gamma_{h,\beta}|\G-\G'|} , 
~~
\Big|\big(z-H\big)^{-1}_{\G\G'}\Big| \leq Ce^{-\gamma_{h,\beta}|\G-\G'|} .
\end{equation}

Note that \cref{Htrun} and \cref{decay:H} implies
\begin{equation*} 
\Big|\big(H-\widetilde{\Hlm}\big)_{\G''\G'''}\Big| \leq
\left\{\begin{array}{ll}
0\qquad &{\rm if}~ |\G''|^2,|\G'''|^2\leq 2 \Ecl{\ell-1},
\\[1ex]
Ce^{-\gamma_h|\G''-\G'''|} & {\rm otherwise}.
\end{array}
\right. 
\end{equation*}
This together with \cref{RH} and $|\G|^2, |\G'|^2\leq 2\Ecl{\ell-1}$ implies that
\begin{align*}
& \Big|\big(z-\widetilde{\Hlm}\big)^{-1}_{\G\G'}-\big(z-H\big)^{-1}_{\G\G'}\Big|
\\[1ex]
\leq\, &\sum_{\substack{|\G''|^2\leq 2\Ec\\|\G'''|^2\leq 2\Ec}}\Big|\big(z-\widetilde{\Hlm}\big)^{-1}_{\G\G''} \big(H-\widetilde{\Hlm}\big)_{\G''\G'''} \big(z-H\big)^{-1}_{\G'''\G'}\Big|
\\[1ex]
\leq\,& C\Bigg(\sum_{\substack{2\Ecl{\ell-1}<|\G''|^2\leq 2\Ec\\|\G'''|^2\leq 2\Ec}}
+ \sum_{\substack{|\G''|^2\leq 2 \Ecl{\ell-1}\\2\Ecl{\ell-1}<|\G'''|^2 \leq 2\Ec}}\Bigg) e^{-\big(\gamma_{h,\beta}|\G-\G''| + \gamma_h|\G''-\G'''| + \gamma_{h,\beta}|\G'''-\G'|\big)}
\\[1ex] 
\leq\,& C \sum_{2\Ecl{\ell-1}<|\G''|^2\leq 2\Ec} e^{-\gamma\big(|\G-\G''| + |\G'-\G''|\big)} 
~ \leq ~ C e^{-\gamma \sqrt{\Ecl{\ell-1}}} ,
\end{align*}
where the constants $C$ and $\gamma$ depend on $\beta$.
This together with the contour integral representation implies
\begin{align*}
& \left|{\ff(\widetilde{\Hlm})}_{\G\G'} - \ff(H)_{\G\G'}\right|
\\[1ex]
\leq & C\oint_{\cont}|\ff(z)|\cdot\bigg|\big(z-\widetilde{\Hlm}\big)^{-1}_{\G\G'} - \big(z-H\big)^{-1}_{\G\G'}\bigg|\dd z
\leq C e^{-\gamma \sqrt{\Ecl{\ell-1}}}\oint_{\cont}|\ff(z)|\dd z .
\end{align*}

Since the Hamiltonian $H$ is bounded from below and $\ff$ exhibits an exponential decay, we have that $\oint_{\cont}|\ff(z)|\dd z$ is bounded by some constant independent of $\Ecl{\ell}$. 
Therefore, the estimate \cref{cut_decay} holds and the proof is completed. 
\end{proof}

\vskip 0.2cm

\begin{lemma}
\label{app:lemma:G_decay}
There exist positive constants $C$ and $\gamma$ depending on $\beta$, such that 
\begin{equation}
\label{G_decay}
\left|\ff(\Hlm)_{\G\G'}\right|\leq C e^{-\gamma \left(\min\{|\G|,|\G'|\}+|\G-\G'|\right)}
\end{equation}
for any $1\leq\ell\leq L + 1$.
\end{lemma}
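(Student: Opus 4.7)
The strategy is to prove the (slightly stronger) bound $|\ff(\Hlm)_{\G\G'}|\le Ce^{-\gamma(|\G|+|\G'|)}$, which implies the stated estimate since $\min\{|\G|,|\G'|\}+|\G-\G'|\le 2(|\G|+|\G'|)$. I would realize this via the Riesz--Dunford contour integral
$$
\ff(\Hlm)_{\G\G'} = \frac{1}{2\pi i}\oint_{\cont}\ff(z)\,\big[(z-\Hlm)^{-1}\big]_{\G\G'}\,\dd z,
$$
where $\cont$ encloses $\ddf(\Hlm)\subset[-1,1]$ (by the global scaling of \Cref{sec:model}, which also applies to $\Hlm$ through the variational principle) and stays inside the analyticity strip $\{|\mathrm{Im}\,z|<\pi/\beta\}$ of $\ff$. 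Two distinct decay mechanisms of the integrand then need to be combined: a Combes--Thomas estimate producing decay in $|\G-\G'|$, and a diagonal-dominance/contour-shifting argument producing decay in $|\G|$ and in $|\G'|$.

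For the off-diagonal Combes--Thomas step I would repeat the weighted-resolvent argument used in \Cref{app:lemma:trunc_converge} in directional form: for any unit vector $u$ and any $\eta<\gamma_h$, conjugation of $\Hlm$ by the diagonal weight $(M_{\eta,u})_{\G\G}=e^{\eta\, u\cdot\G}$ preserves the kinetic diagonal $T_{\G\G}=\tfrac12|\G|^2$ and produces only a bounded non-self-adjoint perturbation in the potential, with entries $V_{\G\G'}\,(e^{\eta\, u\cdot(\G-\G')}-1)$ pointwise bounded by $Ce^{(\eta-\gamma_h)|\G-\G'|}$. Choosing $u=(\G-\G')/|\G-\G'|$ then yields $|[(z-\Hlm)^{-1}]_{\G\G'}|\le C_\beta\,e^{-\eta|\G-\G'|}$ uniformly for $z\in\cont$. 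For the decay in $|\G|$ (and symmetrically in $|\G'|$), I would exploit the resolvent identity
$$
(z-\Hlm)^{-1}_{\G\G'} = \frac{\delta_{\G\G'}}{z-\tfrac12|\G|^2} + \frac{1}{z-\tfrac12|\G|^2}\sum_{\G''}V_{\G\G''}\,(z-\Hlm)^{-1}_{\G''\G'},
$$
whose iteration exposes small prefactors $|z-\tfrac12|\G|^2|^{-1}$ whenever $\tfrac12|\G|^2\gg|\mathrm{Re}\,z|$; deforming one branch of $\cont$ to $\mathrm{Re}\,z\sim c|\G|$ (still inside the analyticity strip) picks up an additional factor $|\ff(z)|\lesssim e^{-\beta c|\G|/4}$, which upgrades the polynomial-in-$|\G|$ decay obtained from the iterated identity to the required exponential decay $e^{-\gamma|\G|}$.

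Combining the two bounds inside the contour integral and using the $\G\leftrightarrow\G'$ symmetry yields $|\ff(\Hlm)_{\G\G'}|\le Ce^{-\gamma(|\G|+|\G'|+|\G-\G'|)}$, from which the stated estimate follows. The main obstacle is to track the uniformity of all constants in $\ell$: one needs to verify that the weighted Combes--Thomas rate $\eta$, the contour-to-spectrum distance $\sigma_\beta=\mathrm{dist}(\cont,\ddf(\Hlm))$, and the contour-shift parameter depend only on $\beta$. This is ultimately guaranteed by three $\ell$-independent structural facts: $\ddf(\Hlm)\subset[-1,1]$ from the global scaling, $|V_{\G\G'}|\le Ce^{-\gamma_h|\G-\G'|}$ uniformly in $\ell$ from the smoothness of $v_{\mathrm{ext}}$, and the fixed analyticity-strip width $\pi/\beta$ of $\ff$. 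The delicate step is realizing both the $|\G-\G'|$-decay and the $|\G|$-decay on a common contour without degrading either rate, which is why the contour must be constructed to simultaneously stay in the strip, keep a bounded Combes--Thomas distance, and reach $\mathrm{Re}\,z\sim c|\G|$.
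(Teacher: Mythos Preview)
Your sketch does not close, and the gap is precisely at the ``delicate step'' you flag. The contour $\Gamma$ must enclose \emph{all} of $\ddf(\Hlm)$, in particular the low-lying eigenvalues near the bottom of the spectrum. On the portion of $\Gamma$ near those eigenvalues one has $|\ff(z)|=O(1)$, so any exponential decay in $|\G|$ would have to come entirely from the resolvent. But one application of your resolvent identity yields only the prefactor $|z-\tfrac12|\G|^2|^{-1}\sim|\G|^{-2}$, and iterating does not improve this: the intermediate indices $\G''$ in the path expansion are unconstrained and may sit near the origin, where $|z-\tfrac12|\G''|^2|^{-1}$ is of order $\sigma_\beta^{-1}$ rather than small. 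Your proposed fix---deforming ``one branch'' of $\Gamma$ to $\mathrm{Re}\,z\sim c|\G|$---moves only the \emph{right} side of the contour; it does nothing for the left side, which still runs near the low eigenvalues where $\ff$ is $O(1)$. After the deformation you are therefore left with a contribution from the left branch that is at best polynomially small in $|\G|$, and the claimed upgrade to $e^{-\gamma|\G|}$ does not follow.

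The paper avoids this obstruction by a different mechanism. Setting $\bR=|\G|/2$ (with $|\G|\le|\G'|$), it treats the case $|\G-\G'|\ge\bR/2$ by Combes--Thomas alone, since then $|\G-\G'|\gtrsim|\G|$ and the off-diagonal bound already dominates $\min\{|\G|,|\G'|\}$. In the remaining case $|\G-\G'|<\bR/2$ (so that both $|\G|,|\G'|>\bR$), it introduces an auxiliary Hamiltonian $\Hlm_{\bR}$ obtained by deleting all modes with $|\G''|\le\bR/2$. By Gershgorin, $\ddf(\Hlm_{\bR})$ lies near $\{a(\tfrac12|\G''|^2-b):|\G''|>\bR/2\}$, hence has \emph{no} low-lying part; the contour for $\ff(\Hlm_{\bR})$ can then be taken entirely in a region where $|\ff(z)|\lesssim e^{-\zeta_\beta\bR}$, producing the exponential factor in $|\G|$ directly. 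The price is the extra term $|\ff(\Hlm)_{\G\G'}-\ff(\Hlm_{\bR})_{\G\G'}|$, but since $\G,\G'$ sit at distance $\gtrsim\bR$ from the truncation boundary $|\G''|=\bR/2$, this is handled by the same resolvent-difference argument as in \Cref{app:lemma:trunc_converge} and is again $O(e^{-\gamma\bR})$. The key idea you are missing is thus this low-mode excision, which is what allows the contour to be pushed into the region where $\ff$ is exponentially small.
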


\begin{proof}
Without loss of generality, we assume that $|\G|\leq|\G'|$. 
Let $\bR := |\G|/2$, we will prove \cref{G_decay} by considering two cases $|\G-\G'|<\bR/2$ and $|\G-\G'|\geq\bR/2$, respectively.
    
For the first case with $|\G-\G'|<\bR/2$, we have $|\G'|\geq|\G|-\bR/2>\bR$.
Define
\begin{equation*}
\Hlm_{\bR} :=  \left\{\begin{array}{ll}
H_{\G\G'}\qquad &{\rm if}~ \bR^2/4<|\G|^2,|\G'|^2\leq 2\Ecl{\ell-1},
\\[1ex]
0 & {\rm otherwise}.
\end{array}
\right. 
\end{equation*}
We then observe that
\begin{equation}
\label{case1:0}
\Big|\ff(\Hlm)_{\G\G'}\Big| \leq \Big|\ff(\Hlm)_{\G\G'} - \ff(\Hlm_\bR)_{\G\G'}\Big| + \Big|\ff(\Hlm_\bR)_{\G\G'}\Big|.
\end{equation}
To estimate the first term of \eqref{case1:0}, we can employ a similar proof to that of \Cref{app:lemma:trunc_converge} and obtain
\begin{align*}
\Big|\ff(\Hlm)_{\G\G'} - \ff(\Hlm_\bR)_{\G\G'}\Big| \leq C e^{-\gamma\bR}.
\end{align*}
This together with $|\G-\G'|<\bR/2= |\G|/4$ leads to 
\begin{equation}
\label{case1:1}
\Big|\ff(\Hlm)_{\G\G'} - \ff(\Hlm_\bR)_{\G\G'}\Big| \leq Ce^{-\gamma(|\G|+|\G-\G'|)}.
\end{equation}
To estimate the second term of \eqref{case1:0}, we recall that the Hamiltonian is scaled and shifted by constants $a$ and $b$ respectively (introduced at the beginning of \Cref{sec:model}).
Then we have from the decay estimate \eqref{decay:H} that there exists a constant $C_V>0$ such that $\sum_{\G'\neq\G''} \big|(\Hlm_\bR)_{\G'\G''}\big| \leq |a|C_V$. 
By using the Gershgorin circle theorem, we have 
\begin{equation*}
\mathfrak{d}(\Hlm_\bR)\subset\bigcup_{\bR^2/4<|\G''|^2\leq 2\Ecl{\ell-1}}\left\{\lambda\in\R:\left|\lambda-a\Big(\frac{1}{2}|\G''|^2-b\Big)\right|<|a|C_V\right\}.
\end{equation*}
Therefore, there exists a contour $\cont$ enclosing the spectrum of $\Hlm_\bR$ that avoids all the singularities of $\ff$, and satisfies ${\rm dist}\big(\cont,\mathfrak{d}\big(\Hlm_\bR\big)\big) > \sigma_\beta$ and for any $z\in\cont$
\begin{align*}
\bigg|\frac{1}{a}\re(z) + b\bigg| \geq \min_{\bR^2/4<|\G''|^2 \leq 2\Ecl{\ell-1}}\frac{1}{2}|\G''|^2-C_V-1 > \frac{1}{8}\bR^2-C_V-1> \frac{1}{8}\bR-\bar{C}_V,
\end{align*}
where the constant $\bar{C}_V$ only depends on $C_V$. 
Note that $\ff$ decays exponentially, i.e., $|\ff(z)| \leq C e^{-\zeta_\beta |\frac{z}{a} +b|}$ for some $\zeta_\beta$ depending on $\beta$. 
This together with the estimate \cref{RH} implies
\begin{align}
\label{case1:2}
\left|\ff\big(\Hlm_\bR\big)_{\G\G'}\right|
& \leq C\oint_\cont|\ff(z)| \cdot \left| \big(z-\Hlm_\bR\big)^{-1}_{\G\G'} \right| \dd z
\\[1ex]
\nonumber
&\leq Ce^{-\zeta_\beta \big(\frac{\bR}{8} - \bar{C}_V\big)} e^{-\gamma\big(|\G-\G'|\big)}
~\leq~ Ce^{-\gamma\big(|\G|+|\G-\G'|\big)} .
\end{align}
Combining \cref{case1:0}, \eqref{case1:1}, \cref{case1:2}, we obtain the estimate \cref{G_decay}.
     
We then consider the second case with $|\G-\G'|\geq\bR/2$.
By using \cref{RH} and the contour integral representation, we have 
\begin{equation*}
\left|\ff\big(\Hlm\big)_{\G\G'}\right|\leq Ce^{-{\gamma}|\G-\G'|},
\end{equation*}
which together with $|\G-\G'|\geq\bR/2=|\G|/4$ leads to \cref{G_decay}.
\end{proof}

\vskip 0.2cm

We are now ready to prove \Cref{thm:mlmc_ec}.
\vskip 0.2cm
Let $\ql{\ell-1}$ be given by \cref{ec_h}. 
Using \Cref{lemma:ml_var}, it is only necessary for us to prove 
\begin{equation}
\label{proof-A3-1}
\left\|\ql{\ell-1}-\PE\big(H[\dm]\big)\right\|_{\F}^2 \leq C \Big(e^{-2\gamma\sqrt{\Ecl{\ell-1}}} + e^{-2\alpha M} \Big)
\qquad \forall~1\leq\ell\leq L .
\end{equation}
Then we can decompose the left-hand side of \eqref{proof-A3-1} into three parts
\begin{align}
\label{ec_err}
\nonumber
&\quad \big\|\ql{\ell-1}-\PE(H)\big\|_{\F}
\\[1ex]
\nonumber
& \leq \big\|\PE(H)-\ff(H)\big\|_{\F} +\big\|\PE(\Hlm)-\ff(\Hlm)\big\|_{\F} + \big\Vert{\ff(\widetilde{\Hlm})}-\ff(H)\big\Vert_{\F}
\\[1ex]
\nonumber
& =: I_1 + I_2 + I_3,
\end{align}
where the first two terms $I_1$ and $I_2$ correspond to the polynomial approximation error, and the last term $I_3$ captures the energy cutoff error. 

We first estimate $I_1$.
Note that $H$ is a scaled Hamiltonian such that all the eigenvalues lie in $[-1,1]$.
Therefore, we can obtain from \cref{cheberror} that
\begin{align}
\label{ec:I1}
I_1 = \big\|\PE(H)-\ff(H)\big\|_{\F} \leq C \big\|\PE-\ff\big\|_{L^\infty(\U)} \leq C e^{-\alpha M} .
\end{align}

By using a similar argument, we can estimate the second term $I_2$ as
\begin{equation}
\label{ec:I2}
I_2 \leq C e^{-\alpha M} .
\end{equation}

For the last term $I_3$, we decompose the basis into two ranges $|\G|^2\leq \frac{1}{8}\Ecl{\ell-1}$ and $\frac{1}{8}\Ecl{\ell-1}<|\G|^2<2\Ec$, and then obtain
\begin{align}
\label{ec:T1T2}
I_3
& \leq\Bigg(\sum_{\substack{|\G|^2\leq \frac{1}{8}\Ecl{\ell-1}\\|\G'|^2\leq \frac{1}{8}\Ecl{\ell-1}}}
+ 2\sum_{\substack{{\frac{1}{8}\Ecl{\ell-1}}<|\G|^2\leq 2\Ec\\|\G'|^2\leq {2\Ec}}}\Bigg)\left|{\ff(\widetilde{\Hlm})}_{\G\G'}-\ff(H)_{\G\G'}\right|
\\[1ex]
\nonumber
& =: T_1 + T_2.
\end{align}

To estimate $T_1$, we use \cref{cut_decay} from \Cref{app:lemma:trunc_converge}, which gives
\begin{align*}
\left|{\ff(\widetilde{\Hlm})}_{\G\G'}-\ff(H)_{\G\G'}\right|
& \leq Ce^{-\gamma{\sqrt{\Ecl{\ell-1}}}}
\leq Ce^{-\gamma\big(\sqrt{\Ecl{\ell-1}}-|\G|-|\G'|\big)},
\end{align*}
Therefore,
\begin{align}
\label{ec:T1}
T_1 \leq C\sum_{\substack{|\G|^2\leq \frac{1}{8}\Ecl{\ell-1}\\|\G'|^2\leq \frac{1}{8}\Ecl{\ell-1}}} e^{-\gamma\big(\sqrt{\Ecl{\ell-1}}-|\G|-|\G'|\big)}\leq Ce^{-\gamma\sqrt{\Ecl{\ell-1}}}.
\end{align}
We then use \cref{G_decay} from \Cref{app:lemma:G_decay} to obtain the estimate for $T_2$,
\begin{align}
\label{ec:T2}
T_2 & \leq 
\sum_{\substack{\frac{1}{8}\Ecl{\ell-1}<|\G|^2\leq 2\Ec\\|\G'|^2\leq 2\Ec,~|\G'|\leq|\G|}} e^{-\gamma\big(|\G'|+|\G-\G'|\big)}
+ \sum_{\substack{\frac{1}{8}\Ecl{\ell-1}<|\G|^2\leq 2\Ec\\|\G'|^2\leq 2\Ec,~|\G'|>|\G|}} e^{-\gamma\big(|\G|+|\G-\G'|\big)}
\\[1ex] 
\nonumber
& \leq Ce^{-\gamma\sqrt{\Ecl{\ell-1}}}.
\end{align}
This together with \cref{ec:T1T2} and \cref{ec:T1} implies
\begin{equation}
\label{ec:I3}
I_3\leq C e^{-\gamma\sqrt{\Ecl{\ell-1}}}.
\end{equation}

Combining the estimates of \eqref{ec:I1}, \eqref{ec:I2} and \eqref{ec:I3}, we can obtain \eqref{proof-A3-1}, which completes the proof of \Cref{thm:mlmc_ec}.
\hfill$\proofbox$

\subsection{Proof of \texorpdfstring{\Cref{thm:mlmc_pd}}{thm:mlmc:pd}}
\label{proof:thm:mlmc_pd}

Let $\ql{\ell-1}$ be given by \cref{spd_l}. 
Using \Cref{lemma:ml_var}, it suffices to prove that there exist constants $C$ and $\alpha$ depending on $\beta$, such that 
\begin{equation}
\label{proof-A4-1}
\big\|\ql{\ell-1}-\PE(H[\dm])\big\|_{\F}^2\leq Ce^{-2\alpha M^{(\ell-1)}}
\qquad \forall~1\leq\ell\leq L .
\end{equation}
Note that
\begin{equation}
\label{proof-A4-2}
\big\|\ql{\ell-1}-\PE(H[\dm])\big\|_{\F} \leq C\big\| p_{M^{(\ell-1)}} - p_{\PO} \big\|_{L^\infty(\U)} .
\end{equation}
By using the error estimate \cref{cheberror} for Chebyshev polynomial approximation, we have
\begin{align*}
\big\| p_{M^{(\ell-1)}}-p_{\PO} \big\|_{L^\infty(\U)} 
& \leq \big\| p_{M^{(\ell-1)}} -\ff \big\|_{L^\infty(\U)} + \big\| p_{\PO}-\ff \big\|_{L^\infty(\U)}
\\[1ex]
& \leq C \big(e^{-\alpha M^{(\ell-1)}} + e^{-\alpha \PO}\big) 
~\leq~ C e^{-\alpha M^{(\ell-1)}} .
\end{align*}
This together with \eqref{proof-A4-2} implies \eqref{proof-A4-1} and hence completes the proof.
\hfill$\proofbox$

\bibliographystyle{siamplain}
\bibliography{references.bib}

@article{Dirac1926,
    author = {Dirac, Paul Adrien Maurice},
    title = {On the theory of quantum mechanics},
    journal = {Proc. R. Soc. Lond. A},
    volume = {112},
    number = {762},
    pages = {661-677},
    year = {1926},
    month = {10},
}

@article{Vita1991,
year = {1991},
volume = {3},
number = {33},
pages = {6225},
author = {A {De Vita} and M J Gillan},
title = {The ab initio calculation of defect energetics in aluminium},
journal = {J. Phys. Condens. Matter},
}

@article{Methfessel1989,
  title = {High-precision sampling for {Brillouin-zone} integration in metals},
  author = {Methfessel, M. and Paxton, A. T.},
  journal = {Phys. Rev. B},
  volume = {40},
  issue = {6},
  pages = {3616--3621},
  numpages = {0},
  year = {1989},
  month = {Aug},
}

@article{Goedecker1999,
  title = {Linear scaling electronic structure methods},
  author = {Goedecker, Stefan},
  journal = {Rev. Mod. Phys.},
  volume = {71},
  issue = {4},
  pages = {1085--1123},
  numpages = {0},
  year = {1999},
  month = {Jul},
}

@article{Herbst_2021,
year = {2020},
month = {dec},
publisher = {IOP Publishing},
volume = {33},
number = {8},
pages = {085503},
author = {Herbst, Michael F and Levitt, Antoine},
title = {Black-box inhomogeneous preconditioning for self-consistent field iterations in density functional theory},
journal = {J. Phys.: Condens. Matter},
}

@article{Woods_2019,
year = {2019},
month = {aug},
publisher = {IOP Publishing},
volume = {31},
number = {45},
pages = {453001},
author = {Woods, N D and Payne, M C and Hasnip, P J},
title = {Computing the self-consistent field in {Kohn–Sham} density functional theory},
journal = {J. Phys.: Condens. Matter},
}

@article{Haji-Ali_2016,
year = {2016},
volume = {132},
pages = {767--806},
author = {Haji-Ali, AL. and Nobile, F. and Tempone, R.},
title = {Multi-index {Monte Carlo}: when sparsity meets sampling},
journal = {Numer. Math.},
}

@Book{markov,
	author    = {Sheldon M Ross},
	title     = {A First Course in Probability},
	publisher = {Pearson Prentice Hall},
	year      = {2010}
}

@article{Dirac_1930, 
title={Note on Exchange Phenomena in the {Thomas} Atom}, 
volume={26}, 
number={3}, 
journal={Math. Proc. Cambridge Philos. Soc.}, 
author={Dirac, P. A. M.}, 
year={1930}, 
pages={376--385}}

@article{Kleinman1982,
  title = {Efficacious Form for Model Pseudopotentials},
  author = {Kleinman, Leonard and Bylander, D. M.},
  journal = {Phys. Rev. Lett.},
  volume = {48},
  issue = {20},
  pages = {1425--1428},
  numpages = {0},
  year = {1982},
  month = {May},
  publisher = {American Physical Society},
}

@misc{SDFT,
    title={{SDFT}.jl},
	Author={X. Quan and H. Chen},
    publisher = {GitHub},
    url = {https://github.com/xuequan818/SDFT.jl}
}

@misc{CaiLindsey2025,
    author = {Yuhang Cai and Michael Lindsey},
    title = {Toward optimal-scaling {DFT}: stochastic {Hartree} theory in the thermodynamic and complete basis set limits at arbitrary temperature}, 
    year={2025},
    eprint={2504.15816},
    archivePrefix={arXiv}
}

@article{Cances3,
    author = {E. Canc{\`e}s and R. Chakir and Y. Maday},
    title = {Numerical analysis of nonlinear eigenvalue problems}, 
    journal = {J. Sci. Comput.},
    volume = {45},
    pages = {90-117},
    year = {2010} 
}

@article{cances12,
	author = {E. Canc\`{e}s and R. Chakir and Y. Maday},
	year = {2012},
	pages = {341-388},
	title = {Numerical analysis of the planewave discretization of some orbital-free and {Kohn-Sham} models},
	volume = {46},
	journal = {ESAIM: M2AN.}
}

@article{chen13,
	author = {H. Chen and X. Gong and L. He and Z. Yang and A. Zhou},
	year = {2011},
	pages = {1-32},
	title = {Numerical Analysis of Finite Dimensional Approximations of {Kohn-Sham} Models},
	volume = {38},
	journal = {Adv. Compu. Math.}
}

@article{DFTKjcon,
  author = {Herbst, Michael F. and Levitt, Antoine and Cancès, Eric},
  journal = {Proc. JuliaCon Conf.},
  title = {{DFTK}: A {Julian} approach for simulating electrons in solids},
  volume = {3},
  pages = {69},
  year = {2021}
}

@Article{Baer13,
	author  = {R. Baer and D. Neuhauser and E. Rabani},
	title   = {Self-Averaging Stochastic {Kohn-Sham} Density-Functional Theory},
	journal = {Phys. Rev. Lett.},
	volume  = {111},
	eid   = {106402},
	year    = {2013}
}

@Article{Hohenberg64,
	author  = {P. Hohenberg and W. Kohn},
	title   = {Inhomogeneous electron gas},
	journal = {Phys. Rev.},
	volume  = {136},
	pages   = {B864--B871},
	year    = {1964}
}

@Article{Kohn65,
	author  = {W. Kohn and L. Sham},
	title   = {Self-consistent equations including exchange and correlation effects},
	journal = {Phys. Rev.},
	volume  = {140},
	pages   = {A1133--A1138},
	year    = {1965}
}

@article{osborn75,
    Author = {J.E. Osborn},
    Title = {Spectral Approximation for Compact Operators}, 
    Journal = {Math. Comp.},
    Volume = {29},
    Pages = {712-725},
    Year = {1975}
}

@book{martin04,
	author    = {R. M. Martin},
	title     = {Electronic Structure: Basic Theory and Practical Methods},
	publisher = {Cambridge University Press},
	year      = {2004}
}

@Article{Neuhauser14,
	author  = {D. Neuhauser and R. Baer and E. Rabani},
	title   = {Communication: Embedded fragment stochastic density functional theory},
	journal = {J. Chem. Phys.},
	volume  = {141},
	eid     = {041102},
	year    = {2014}
}

@Article{EWChenBaer2019,
	author  = {M. Chen and R. Baer and D. Neuhauser and E. Rabani},
	title   = {Energy window stochastic density functional theory},
	journal = {J. Chem. Phys.},
	volume  = {151},
	eid     = {114116},
	year    = {2019}
}

@Article{Nguyen2021,
	author  = {M. Nguyen and W. Li and Y. Li and R. Baer and E. Rabani and D. Neuhauser},
	title   = {Tempering stochastic density functional theory},
	journal = {J. Chem. Phys. },
	volume  = {155},
	eid   = {204105},
	year    = {2021}
}

@Article{White20,
	author  = {A. J. White and L. A. Collins},
	title   = {Fast and universal {Kohn-Sham} density functional theory algorithm for warm dense matter to hot dense plasma},
	journal = {Phys. Rev. Lett.},
	volume  = {125},
	eid   = {055002},
	year    = {2020}
}

@Book{Trefethen2013,
	author = {L. N. Trefethen},
	title = {{Approximation Theory and Approximation Practice, Extended Edition}},
	publisher = {SIAM},
	address = {Philadelphia},
	year={2013}
}

@article{Giles2015, 
	title={Multilevel {Monte C}arlo methods}, 
	volume={24},
	journal={Acta Numer.}, 
	author={Giles, Michael B.}, 
	year={2015}, 
	pages={259--328}
}

@article{combes1973asymptotic,
	title={Asymptotic behaviour of eigenfunctions for multiparticle {S}chr{\"o}dinger operators},
	author={J.M. Combes and L. Thomas},
	journal={Commun. Math. Phys.},
	volume={34},
	number={4},
	pages={251-270},
	year={1973}
}

@article{ChenOrtner2017,
	author = {Huajie Chen and Christoph Ortner},
	issue = {1},
	journal = {Multiscale Model. Simul.},
	pages = {184-214},
	title = {{QM/MM} Methods for Crystalline Defects. Part 2: Consistent Energy and Force-Mixing},
	volume = {15},
	year = {2017}
}

@article{Shimojo2008,
  title = {Divide-and-conquer density functional theory on hierarchical real-space grids: Parallel implementation and applications},
  author = {Shimojo, Fuyuki and Kalia, Rajiv K. and Nakano, Aiichiro and Vashishta, Priya},
  journal = {Phys. Rev. B},
  volume = {77},
  issue = {8},
  pages = {085103},
  numpages = {12},
  year = {2008},
}

@article{Li1993,
  title = {Density-matrix electronic-structure method with linear system-size scaling},
  author = {Li, X.-P. and Nunes, R. W. and Vanderbilt, David},
  journal = {Phys. Rev. B},
  volume = {47},
  issue = {16},
  pages = {10891--10894},
  numpages = {0},
  year = {1993},
}

@article{ONETEP2020,
    author = {Prentice, Joseph C. A. and Aarons, Jolyon and Womack, James C. and others},
    title = {The {ONETEP} linear-scaling density functional theory program},
    journal = {J. Chem. Phys.},
    volume = {152},
    number = {17},
    pages = {174111},
    year = {2020},
}

@article{Yang1991,
	title = {Direct calculation of electron density in density-functional theory},
	author = {Yang, Weitao},
	journal = {Phys. Rev. Lett.},
	volume = {66},
	issue = {11},
	pages = {1438--1441},
	year = {1991}
}

@article{Baer1997,
	title = {Sparsity of the Density Matrix in {Kohn-Sham} Density Functional Theory and an Assessment of Linear System-Size Scaling Methods},
	author = {Baer, Roi and Head-Gordon, Martin},
	journal = {Phys. Rev. Lett.},
	volume = {79},
	issue = {20},
	pages = {3962--3965},
	year = {1997}
}

@article{Kohn1996,
	title = {Density Functional and Density Matrix Method Scaling Linearly with the Number of Atoms},
	author = {Kohn, W.},
	journal = {Phys. Rev. Lett.},
	volume = {76},
	issue = {17},
	pages = {3168--3171},
	year = {1996}
}

@article{Mauri1993,
	title = {Orbital formulation for electronic-structure calculations with linear system-size scaling},
	author = {Mauri, Francesco and Galli, Giulia and Car, Roberto},
	journal = {Phys. Rev. B},
	volume = {47},
	issue = {15},
	pages = {9973--9976},
	year = {1993}
}

@article{Fabian2019,
author = {Fabian, Marcel D. and Shpiro, Ben and Rabani, Eran and Neuhauser, Daniel and Baer, Roi},
title = {Stochastic density functional theory},
journal = {WIREs Comput. Mol. Sci.},
volume = {9},
number = {6},
pages = {e1412},
year = {2019}
}

@Article{Chen2019oefsdft,
	author  = {M. Chen and R. Baer and D. Neuhauser and E. Rabani},
	title   = {Overlapped embedded fragment stochastic density functional theory for covalently-bonded materials},
	journal = {J. Chem. Phys.},
	volume  = {150},
	eid   = {034106},
	year    = {2019}
}

@article{LiuChen2022,
  title = {Plane-wave-based stochastic-deterministic density functional theory for extended systems},
  author = {Liu, Qianrui and Chen, Mohan},
  journal = {Phys. Rev. B},
  volume = {106},
  issue = {12},
  pages = {125132},
  year = {2022}
}

@article{Giles2008,
	author = {Giles, Michael B.},
	title = {{Multilevel Monte Carlo} Path Simulation},
	year = {2008},
	address = {Linthicum, MD, USA},
	volume = {56},
	number = {3},
	journal = {Oper. Res.},
	pages = {607--617}
}

@article{Hallman2022,
	title = {A multilevel approach to stochastic trace estimation},
	journal = {Linear Algebra Appl.},
	volume = {638},
	pages = {125-149},
	year = {2022},
	author = {E. Hallman and D. Troester}
}

@article{Frommer2022,
	author = {Frommer, Andreas and Khalil, Mostafa Nasr and Ramirez-Hidalgo, Gustavo},
	title = {A Multilevel Approach to Variance Reduction in the Stochastic Estimation of the Trace of a Matrix},
	journal = {SIAM J. Sci. Comput.},
	volume = {44},
	number = {4},
	pages = {A2536-A2556},
	year = {2022}
}

@article{Anderson2012,
author = {Anderson, David F. and Higham, Desmond J.},
title = {{Multilevel Monte Carlo} for Continuous Time {Markov} Chains, with Applications in Biochemical Kinetics},
journal = { Multiscale Model. Simul.},
volume = {10},
number = {1},
pages = {146-179},
year = {2012}
}

@article{Barth2011,
	author = {Barth, A. and Schwab, C. and Zollinger, N.},
	year = {2011},
	pages = {123--161},
	title = {{Multi-level Monte Carlo} Finite Element method for elliptic {PDE}s with stochastic coefficients.},
	volume = {119},
	journal = {Numer. Math.}
}

@article{Ullmann2015,
author = {Ullmann, Elisabeth and Papaioannou, Iason},
title = {Multilevel Estimation of Rare Events},
journal = {SIAM/ASA J. Uncertain. Quantif.},
volume = {3},
number = {1},
pages = {922-953},
year = {2015}
}

@article{perdew92,
author = {Perdew, P. and Wang, Y.},
year = {1992},
title = {Accurate and simple analytic representation of the electron-gas correlation energy},
volume = {45},
journal = {Phys. Rev. B},
  pages = {13244--13249},
}

@article{VANSETTEN201839,
title = {The {PseudoDojo}: Training and grading a 85 element optimized norm-conserving pseudopotential table},
journal = {Comput. Phys. Commun.},
volume = {226},
pages = {39-54},
year = {2018},
author = {M. J. {van Setten} and M. Giantomassi and E. Bousquet and M. J. Verstraete and D. R. Hamann and X. Gonze and G.-M. Rignanese},
}

@article{Perdew1996,
  title = {Generalized Gradient Approximation Made Simple},
  author = {Perdew, John P. and Burke, Kieron and Ernzerhof, Matthias},
  journal = {Phys. Rev. Lett.},
  volume = {77},
  issue = {18},
  pages = {3865--3868},
  numpages = {0},
  year = {1996},
  month = {Oct},
}

\end{document}